\DeclareMathOperator*{\argmin}{arg\,min}
\theoremstyle{plain}
\newtheorem{definition}{Definition}[section]
\newtheorem{lemma}{Lemma}[section]
\newcommand\norm[1]{\left\lVert#1\right\rVert}
\newcommand\Tone{T\textsubscript{1}\xspace}
\newcommand\Tonec{T\textsubscript{1}c\xspace}
\newcommand\Ttwo{T\textsubscript{2}\xspace}
\newcolumntype{Y}{>{\centering\arraybackslash}X}
\newcolumntype{C}[1]{>{\centering\let\newline\\\arraybackslash\hspace{0pt}}m{#1}}
\newtheorem{prop}{Proposition}
\journal{Manuscript accepted in Medical Image Analysis}
\begin{document}

\verso{Dorent \emph{et~al.}}

\begin{frontmatter}

\title{Learning joint segmentation of tissues and brain lesions from task-specific hetero-modal domain-shifted datasets}%

\author[1]{Reuben \snm{Dorent}}
\ead{reuben.dorent@kcl.ac.uk}
\author[1,2]{Thomas \snm{Booth}}
\author[1,5]{Wenqi \snm{Li}}
\author[1,3,4]{Carole H. \snm{Sudre}}
\author[2]{Sina \snm{Kafiabadi}}
\author[1]{Jorge \snm{Cardoso}}
\author[1]{Sebastien \snm{Ourselin}}
\author[1]{Tom \snm{Vercauteren}}
%\author[]{for the Alzheimer's Disease Neuroimaging Initiative}
%\corref{cor1}

\address[1]{King's College London, School of Biomedical Engineering \& Imaging Sciences, St. Thomas' Hospital, London. United Kingdom.}
\address[2]{Department of Neuroradiology, King's College Hospital NHS Foundation Trust, London. United Kingdom.}
\address[3]{Dementia Research Centre, UCL Institute of Neurology, UCL, London. United Kingdom.}
\address[4]{Department of Medical Physics, UCL, London. United Kingdom.}
\address[5]{NVIDIA, Cambridge. United Kingdom.}

%\cortext[cor1]{Some of the data used in preparation of this article was obtained from the Alzheimer’s Disease Neuroimaging Initiative (ADNI)  database  (\url{adni.loni.usc.edu}).  As  such,  the  investigators  within  the  ADNI  contributed  to  the  design and implementation of ADNI and/or provided data but did not participate in analysis or writing of this report. A complete listing of ADNI investigators can be found at:\url{http://adni.loni.usc.edu/wp-content/uploads/how_to_apply/ADNI_Acknowledgement_List.pdf}}

% \received{1 May 2013}
% \finalform{10 May 2013}
% \accepted{13 May 2013}
% \availableonline{15 May 2013}
% \communicated{S. Sarkar}

\begin{abstract}
%%%
Brain tissue segmentation
from multimodal MRI
is a key building block of many neuroimaging analysis pipelines.
Established tissue segmentation approaches have, however, not been developed to cope with large anatomical changes resulting from pathology, such as white matter lesions or tumours, and often fail in these cases.
In the meantime,
with the advent of deep neural networks (DNNs), segmentation of brain lesions has matured significantly.
However, few existing approaches allow for the joint segmentation of normal tissue and brain lesions. 
Developing a DNN for such a joint task is currently hampered by the fact that annotated datasets typically address only one specific task and rely on task-specific imaging protocols including a task-specific set of imaging modalities.
In this work, we propose a novel approach to build a joint tissue and lesion segmentation model from aggregated task-specific hetero-modal domain-shifted and  partially-annotated datasets.
Starting from a variational formulation of the joint problem, we show how the expected risk can be decomposed and optimised empirically.
We exploit an upper bound of the risk to deal with heterogeneous imaging modalities across datasets. To deal with potential domain shift, we integrated and tested three conventional techniques based on data augmentation, adversarial learning and pseudo-healthy generation.
For each individual task, our joint approach reaches comparable performance to task-specific and fully-supervised models.
The proposed framework is assessed on two different types of brain lesions: White matter lesions and gliomas.
In the latter case, lacking a joint ground-truth for quantitative assessment purposes, we propose and use a novel clinically-relevant qualitative assessment methodology.
%%%%
\end{abstract}

\begin{keyword}
%% MSC codes here, in the form: \MSC code \sep code
%% or \MSC[2008] code \sep code (2000 is the default)
% \MSC 41A05\sep 41A10\sep 65D05\sep 65D17
%% Keywords
\KWD Joint Learning \sep Domain Adaptation \sep Multi-Task Learning \sep Multi-Modal  
\end{keyword}
\end{frontmatter}

%\linenumbers

%% main text

\section{Introduction}
Traditional approaches for tissue segmentation used in brain segmentation / parcellation software packages such as FSL \citep{Jenkinson:NeuroImage:2012}, SPM \citep{Ashburner:NeuroImage:2000} or NiftySeg \citep{Cardoso:TMI:2015} are based on subject-specific optimisation. FSL and SPM fit a Gaussian Mixture Model to the MR intensities using either a Markov Random Field (MRF) or tissue prior probability maps as regularisation. Alternatively, multi-atlas methods rely on label propagation and fusion from multiple fully-annotated images, i.e. atlases, to the target image \citep{Cardoso:TMI:2015,Iglesias:MedIA:2015}. These methods typically require extensive pre-processing, e.g. skull stripping, correction of bias field and registration. They are also often time-consuming and are inherently only adapted for brains devoid of large anatomical changes induced by pathology, such as white matter lesions and brain tumours.
Indeed, it has been shown that the presence of lesions can significantly distort any registration output \citep{Sdika:HBM:2009}. Similarly, lesions introduce a bias in the MRF. This leads to a performance degradation in the presence of lesions for brain volume measurement \citep{Battaglini:HBM:2012} and any subsequent analysis.

While quantitative analysis is expected to play a key role in improving the diagnosis and follow-up evaluations of patients with brain lesions, current tools mostly focus on quantification of the lesions themselves and effectively discard contextual tissue information.
Existing quantitative neuroimaging approaches allow the extraction of imaging biomarkers such as the largest diameter, volume, and count of the lesions. Such automatic segmentation of the lesions promises to speed up and improve the clinical decision-making process but more refined analysis would be feasible from tissue classification and region parcellation. In particular,  brain atrophy at a global level~\citep{Popescu1082,doi:10.1002/jmri.23671}, at a cerebral level~\citep{BERMEL2006158}, and, even more specifically, at the grey matter level~\citep{GEURTS20121082} have been correlated with the speed of disease progression and with physical disability~\citep{doi:10.1002/ana.21423}. Consequently, atrophied tissue volumes in the presence of lesions are clinically relevant imaging markers~\citep{doi:10.1111/jon.12527}.
We believe that, although very few work have addressed this problem yet, a joint model for lesion and tissue segmentation is expected to bring significant clinical benefits.
As representative exemplars of the technical challenges involved to build such joint models, we focus, in this work, on patients with white matter lesions or brain tumours.

Deep Neural Networks (DNNs) have become the state-of-the-art for most segmentation tasks \citep{DBLP:journals/corr/abs-1902-09063} and one would now expect these to be able to jointly perform brain tissue and pathology segmentation. However, annotated databases required to train DNNs are usually dedicated to a single task (either brain tissue segmentation or pathology delineation). In addition, the information required for brain tissue or pathology segmentation may come from different scans, leading to hetero-modal (i.e. more than one set of input imaging sequences) datasets. While \Tone-weighted images provide the best grey/white matter contrast for the delineation of anatomical tissue, \Ttwo-weighted sequences are usually more sensitive to pathological changes \citep{doi:10.1148/rg.262055063}. Choice of the used sequence or combination of sequences may also differ across pathologies. \Ttwo-weighted FLAIR images are often used for the assessment of white matter lesions  \citep{Maillard54} while a combination of \Tone contrast-enhanced (\Tonec), \Ttwo and FLAIR is often preferred for the characterisation of gliomas \citep{doi:10.1200/JCO.2009.26.3541}.
Similarly, the scans may have been acquired with different magnetic resonance parameters leading to differences in resolution and contrast among databases. Consequently, the data distribution may differ between the datasets, i.e. the datasets may be domain-shifted.
Given 1) the large amount of resources, time and expertise required to annotate medical images, 2) the varying imaging requirement to support each individual task, and 3) the availability of task-specific databases, it is unlikely that large databases for every joint problem, such as lesion and tissue segmentation, will become available.
%for research purposes.
There is thus a need to exploit existing task-specific databases to address the joint problems. Learning a joint model from task-specific hetero-modal and domain-shifted datasets is nonetheless challenging. As shown in Figure~\ref{fig:overview}, this problem lies at the intersection of Multi-Task Learning \citep{DBLP:journals/corr/ZhangY17aa}, Domain Adaptation \citep{Ben-David2010,DBLP:conf/icml/0002CZG19} and Weakly Supervised Learning \citep{Oquab_2015_CVPR,Bilen_2016_CVPR,XU2014591} with singularities making individual methods from these underpinning fields insufficient to address it completely, as explained in more depth in the related work section \ref{sec:related_work}.

Our approach is rooted in all these sub-domains of deep learning. The main contributions are summarised as follows:
\begin{enumerate}
    \item We propose a joint model that performs tissue and lesion segmentation as a unique joint task and thus exploits the interdependence between the lesion and tissue segmentation tasks. Starting from a variational formulation of the joint problem, we exploit the disjoint nature of the label sets to propose a practical decomposition of the joint loss, transforming the multi-class segmentation problem into a multi-task problem.
    \item We introduce feature channel averaging across modalities to adapt existing networks for our hetero-modal problem.
    \item We develop a new method to minimise the expected risk under the constraint of missing modalities.  Under the assumption that the network is not affected by a potential domain shift, we show that the expected risk can be further decomposed and minimised via a tractable upper bound. To our knowledge, no such optimisation method for missing modalities in deep learning has been published before.
    \item  Given that, in practice, the heterogeneous task-specific datasets may have been acquired with different protocols, i.e. they are domain-shifted, we integrate several existing DA techniques in our framework. These methods are based on data augmentation and adversarial training, or pseudo-healthy brain generation.
     \item We demonstrate the performance of our joint approach on two clinical use cases: White matter lesions and gliomas. Our method outperforms a fully-supervised model trained on a smaller fully-annotated dataset for white matter lesions. To assess the performance of the joint model for tissue and glioma segmentation, for which no ground-truth is available, we propose a new qualitative evaluation protocol based on the ASPECTS score \citep{BARBER20001670}. Higher accuracy is obtained compared to time-consuming pipelines that require to mask the lesions using manual annotations.
    \item Experiments show that generating  pseudo-healthy  annotated  scans outperforms the other DA techniques, even with very few pseudo-healthy annotated scans.
    % Experiments show that generated annotated healthy-looking brain is
    % Experiments show that generating annotated healthy-looking brain is more efficient to perform DA than data augmentation and adversarial learning, even with only very few samples.
\end{enumerate}

This work is a substantial extension of our conference paper~\citep{pmlr-v102-dorent19a}. Improvements include: 1) Additional mathematical proofs; 2) integration and validation of three different domain adaptation techniques to cope with domain-shifted datasets; 3) new experiments on joint brain tissue and glioma segmentation; and 4) a new quantitative evaluation protocol for assessing tissue segmentation in the absence of ground-truth.

\begin{figure}[tb!]
 % Caption and label go in the first argument and the figure contents
 % go in the second argument
  \includegraphics[width=\linewidth]{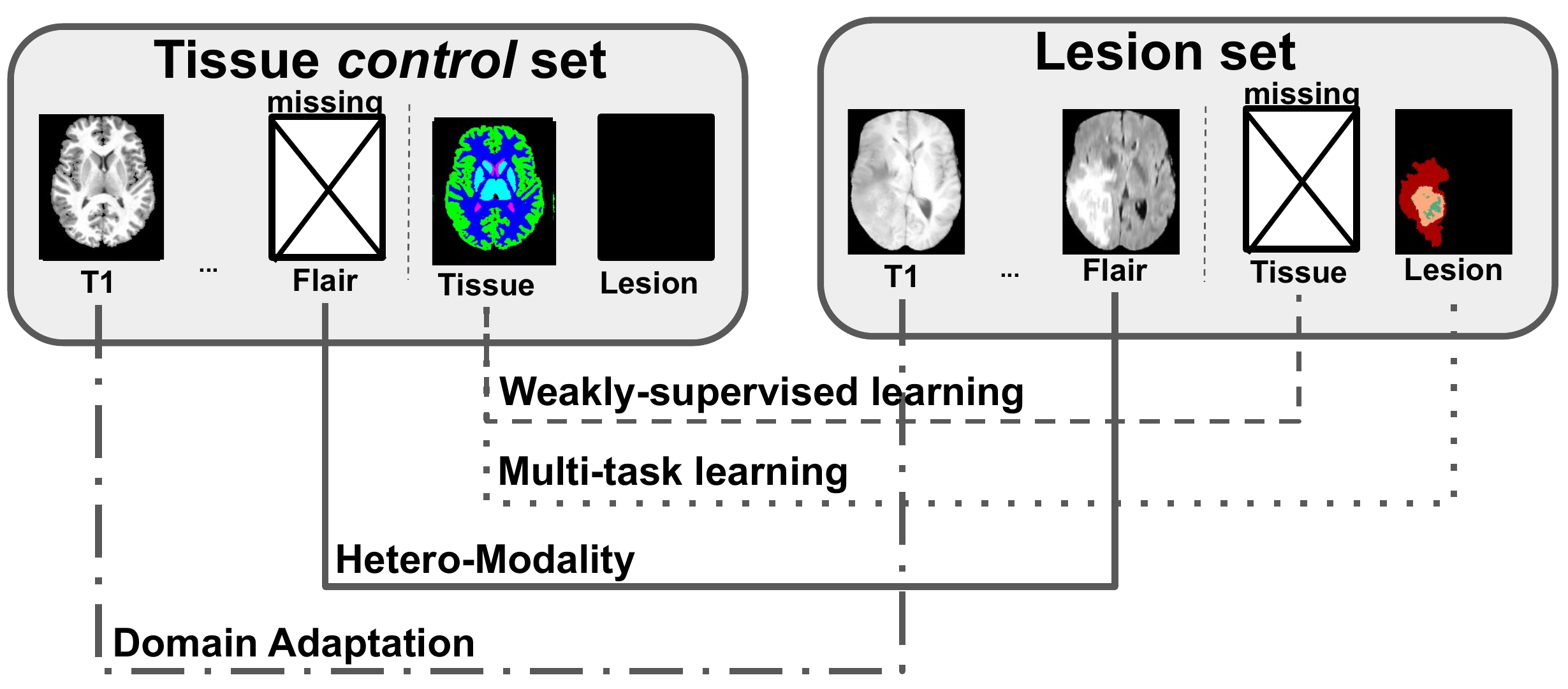}
  \caption{Tissue and lesion segmentation, a problem at the intersection of multiple branches of Machine Learning: Multi-Task Learning (tissue + lesion segmentation), Weakly-Supervised Learning (missing annotations), Hetero-Modality (missing modalities), Domain Adaptation (different acquisition protocols). }
  \label{fig:overview}
\end{figure} 

% First Secondly,  Thirdly,  Finally, we evaluate our tissue and lesion segmentation framework for two types of lesions: White matter lesions and glioma.  We demonstrate that our joint approach can achieve, for each individual task, similar performance compared to a task-specific baseline.
% %
% Albeit relying on different annotation protocols, results using a small fully-annotated joint dataset demonstrate efficient generalisability.
\begin{figure*}[tb!]
 % Caption and label go in the first argument and the figure contents
 % go in the second argument
  \centering
  \includegraphics[width=\linewidth]{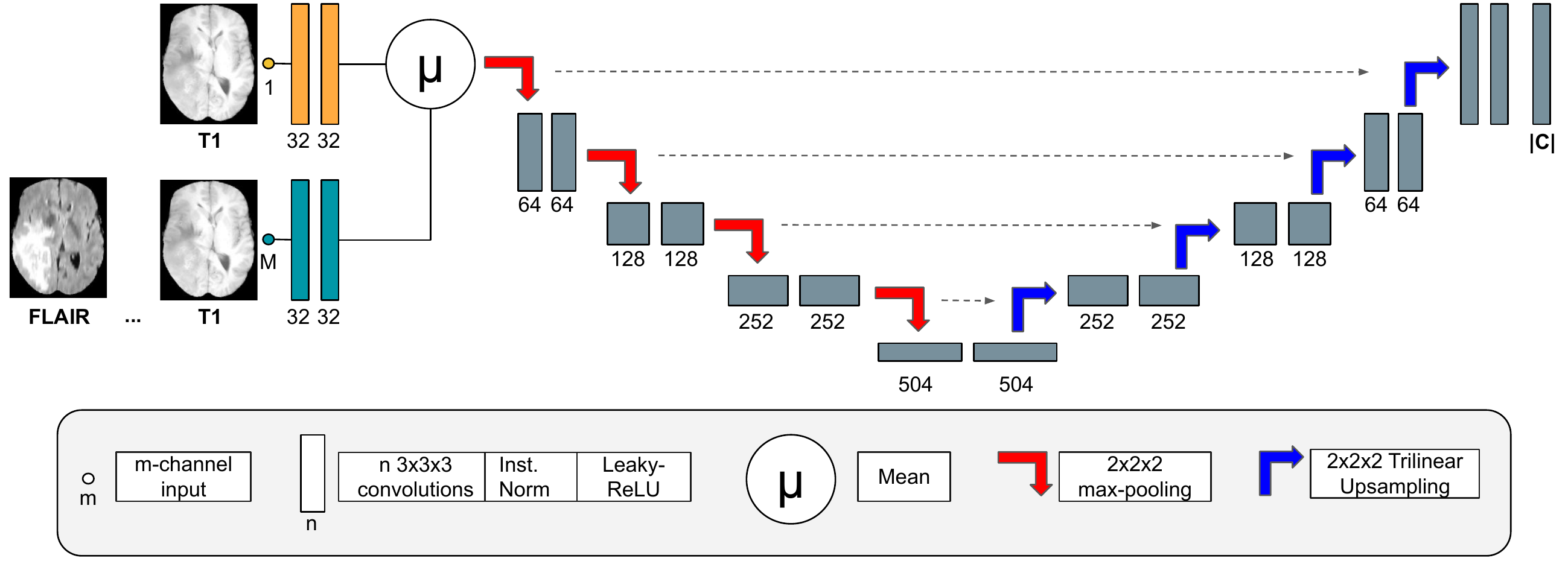}
\caption{The proposed fully-convolutional network architecture: A mix a 3D U-Net \citep{Unet} and HeMIS \citep{Havaei:MICCAI:2016}. }
  \label{fig:network}
\end{figure*}

\section{Related work}\label{sec:related_work}
Multi-Task Learning (MTL)  aims  to  perform  several tasks simultaneously, on a single dataset, by extracting some form of common knowledge or representation and introducing a task-specific back-end. When relying on DNN for MTL, the first layers of the network are typically shared, while the last layers are trained for the different tasks \citep{DBLP:journals/corr/Ruder17a}. MTL has been successfully applied to medical imaging for segmentation \citep{Bragman:MICCAI:2018, MTL:MICCAI:2016} combined with other tasks such as detection \citep{MTL:MIDL19:Saha} or classification \citep{MTL:MICCAI:2019,MTL:MIDL19:Le}. 
The global loss function is a weighted sum of task-specific loss functions. Recently, \citet{Kendall:NIPS:2017} proposed a Bayesian parameter-free method to estimate the MTL loss weights and \citet{Bragman:MICCAI:2018} extended it to spatially adaptive task weighting and applied it to medical imaging. Although the aforementioned approaches generate different outputs from the same features, no direct interaction between the task-specific outputs is modelled in these techniques. While a joint tissue and lesion segmentation can be pursued in practice, a strong underpinning assumption is that the two outputs are conditionally independent. Consequently, these approaches do not address the problem of aggregating these outputs to generate a joint segmentation.
Moreover, MTL approaches, such as \citep{MOESKOPS2018251,pmlr-v102-roulet19a}, do not provide any mechanism for dealing with hetero-modal datasets or changes in imaging characteristics across task-specific databases.

Domain Adaptation (DA) is a solution for dealing with domain-shifted datasets, i.e. datasets acquired with different settings. 
A classical strategy consists in learning a domain-invariant feature representation of the data. \citet{Csurka:DACVARevChap:2017} proposed an extensive review of these methods in deep learning. 
Some DA approaches have been developed to tackle a specific and identified shift.
For example, data augmentation has been used for shifts caused by different MR bias fields \citep{SUDRE201750} or the presence of motion artefacts \citep{RICHARD};
\citet{Havaei:MICCAI:2016} and \citet{Dorent:MICCAI:2019} proposed network architectures for dealing with missing modalities that encodes each modality into a shared modality-agnostic latent space.
Recent studies have proposed to learn a mapping between healthy and decease scans, using CycleGANs \citep{GAN:LesionRemoval1,GAN:LesionRemoval2} or Variational Autoencoders \citep{DetectionLesionUnsupervised}. Although these techniques have shown promising results, they are inherently limited to a specific type of shift. Combining causes of shift, for instance the presence/absence of lesions with different protocols of acquisition, remains an unsolved problem.
In contrast, general DA approaches do not make assumptions about the nature of the shift. These methods aim to directly minimise the discrepancy between the feature distributions across the domains. Distribution dissimilarity can be assessed using correlation distances \citep{Correlation} or maximum mean discrepancy \citep{MMD1,MMD2}. However, more recent techniques are mostly focused on adversarial methods, achieving promising results in medical imaging \citep{Kamnitsas:MICCAI:2017,Unsupervised-Cross-Modality,OrbesArteaga2019MultiDomainAI}. However, these methods are usually focus on solving a single task across domain.

Weakly-supervised Learning (WSL) deals with missing, inaccurate, or inexact annotations. Our problem is a particular case of learning with missing labels since each task-specific dataset provides a set of labels where the two sets are disjoint. \citet{Li:PAMI:2017} proposed a method to learn a new task from a model trained on another task. This method combines DA through transfer learning and MTL. In the end, two models are created: One for the first task and one for the second one.  \citet{Kim:WACV:2018} extended this approach by using a knowledge distillation loss in order to create a unique multi-task model. This aims to alternatively learn one task without forgetting the other one. The WSL problem was thus decomposed into an MTL problem with aforementioned limitations for our specific use case.

This work proposes a new framework to perform a joint segmentation while dealing with task-specific, domain-shifted and hetero-modal datasets. 

%%%
\section{Tissue and lesion segmentation learning from hetero-modal and task-specific datasets: Problem definition}
In order to develop a joint model, we first propose a mathematical variational formulation of the problem and introduce a network architecture to leverage existing hetero-modal and task-specific datasets for tissue and lesion segmentation.

\subsection{Formal problem statement}
Let $x=(x^1,..,x^M) \in \mathcal{X}=\mathbb{R}^{N\times M}$ be a vectorized multimodal image and $y \in \mathcal{Y}=\{0,..,C\}^{N}$ its associated tissue and lesion segmentation map. $N$, $M$ and $C$ are respectively the number of voxels, modalities and classes. Note that images modalities are assumed to be co-registered and resampled in the same coordinate space containing $N$ voxels. Our goal is to determine a predictive function, parametrised by the weights $\theta\in\Theta$, $h_{\theta}:\mathcal{X} \mapsto \mathcal{Y}$ that minimises the discrepancy between the ground truth label vector $y$ and the prediction $h_{\theta}(x)$. Let $\mathcal{L}$ be a loss function that estimates this discrepancy. 
Following the formalism used by \citet{Bottou:SIAMRev:2018}, given a probability distribution $\mathcal{D}$ over $(\mathcal{X},\mathcal{Y})$ and random variables under this distribution, we want to find $\theta ^{*}$ such that:

\begin{equation} \label{eq:0}
 \theta^{*}=\argmin_{\theta} \mathbb{E}_{(x,y) \sim \mathcal{D}}\left[\mathcal{L}\left(h_{\theta}(x), y\right)\right] 
\end{equation}

As is the norm in data-driven learning, we do not have access to the true joint probability $\mathcal{D}$.  In supervised learning, the common method is to estimate the expected risk using training samples. Given a set of $n \in \mathbb{N}$ independently drawn multimodal scans with their associated tissue and lesion segmentation map $\left\{ \left( x_k,y_{k} \right) \right\}_{k=1}^{n}$, we want to find $\theta ^{*}$ that minimises the empirical risk:
\begin{equation} \label{eq:fullyestimation}
 \theta^{*}=\argmin_{\theta} \sum_{k=1}^{n} \mathcal{L}\left(h_{\theta} \left(x_k \right), y_{k}\right)
\end{equation}
However, in our  multitask scenario, we cannot directly estimate the empirical risk since we do not have access to a fully annotated dataset for the joint task. Instead, we propose to leverage task-specific and hetero-modal datasets.

\subsection{Task-specific and hetero-modal datasets}
Let us assume that we have access to two datasets with either the tissue annotations $y^{T}$ or the lesion annotations $y^{L}$ (task-specificity). Let 
\begin{align*}
\mathcal{S}_{control}&=\left\{ \left( \left(x_k^1,..,x_{k}^{M_{T}}\right),y^{T}_k \right)\right\}_{k=1}^{n_{T}} \\
\mathcal{S}_{lesion}&=\left\{ \left( \left(x_k^1,...,x_{k}^{M_L}\right),y^{L}_k \right)\right\}_{k=1}^{n_{L}}
\end{align*}
denote these two training sets, where $M_{T}$, $M_{L}$, $n_{T}$ and $n_{L}$ are respectively the number of modalities in the \emph{control} and the \emph{lesion} sets and the size of these sets. Note that although we use the term \emph{control} for convenience, we may expect to observe pathology with "diffuse" anatomical impact, e.g. from dementia. In addition, for the clarity of presentation, we highlight that the considered \emph{lesions} in this work are either White Matter Hyperintensities (WMH) or gliomas.

Since such datasets are typically developed in the scope of either tissue or lesion segmentation (but not both), the set of observed modalities may vary from one dataset to another (hetero-modality). Importantly, in this work, we consider that only \Tone scans are provided in the \emph{control} dataset, while the \emph{lesion} set contains either 1) the \Tone and the FLAIR scans for WMH segmentation, or 2) the \Tone, contrast-enhanced \Tone~(\Tonec), \Ttwo and FLAIR scans for glioma segmentation.
The full set of modalities is consequently given by the modalities in the \emph{lesion} set, while the \emph{control} dataset will have missing modalities.
%Note also that, 
In our specific use cases, the \Tone modality is a shared modality across the different datasets. It will nonetheless be apparent that our method can be trivially adapted for other shared modalities.

\subsection{On the distribution \texorpdfstring{$\mathcal{D}$}{D} in the context of heterogeneous databases}
As we expect different distributions across heterogeneous databases, two probability distributions of $(X,Y)$ over $(\mathcal{X},\mathcal{Y})$ can be distinguished:
\begin{itemize}
    \item under $\mathcal{D}_{control}$, $(X,Y)$ corresponds to a multimodal scan and joint segmentation map of a patient without lesions ($Y$ effectively being a tissue segmentation map).
    \item under $\mathcal{D}_{lesion}$, $(X,Y)$  corresponds to a multimodal scan and joint lesion and tissue segmentation map of a patient with lesions.
\end{itemize}

Since traditional tissue segmentation methods are not adapted in the presence of lesions, the most important and challenging distribution $\mathcal{D}$ to address is the one for patients with lesions, $\mathcal{D}_{lesion}$.
In the remainder of this work, we thus assume that:
\begin{equation}\label{eq:H2}\tag{$\textbf{H}_1$}
\mathcal{D} \triangleq \mathcal{D}_{lesion}.
\end{equation}

% Let additionally assume that the acquisition protocols of the \Tone scans are partially similar across the \emph{control} and \emph{lesion} dataset, meaning that two \emph{control} distributions can be distinguished:
% \begin{itemize}
%     \item the \Tone scans under $\mathcal{D}_{control}^{SP}$ and $\mathcal{D}_{lesion}$ have been acquired with a similar protocol (SP).
%     \item the \Tone scans under $\mathcal{D}_{control}^{DA}$ and $\mathcal{D}_{lesion}$ have been acquired with inconsistent protocols meaning that Domain Adaptation (DA) is required.
% \end{itemize}
% Thus, two sets of samples $\mathcal{S}_{control}^{SP}$ and $\mathcal{S}_{control}^{DA}$ will be observed associated respectively to two distributions $\mathcal{D}^{SP}_{control}$ and $\mathcal{D}_{control}^{DA}$. Due to the domain gap, a model trained on one dataset cannot be directly applied to the other without domain adaptation.

\subsection{Hetero-modal network architecture}
% In order to learn from hetero-modal datasets, we need a network architecture that allows for missing modalities. We propose an architecture inspired by HeMIS \citep{Havaei:MICCAI:2016} shown in Figure~\ref{fig:network}. Features of each modality are first extracted separately and are then averaged, such as in \citep{Havaei:MICCAI:2016}. The average operation is only performed on the features extracted from the observed modalities and thus allows for missing modalities. In order to preserve the spatial resolution between the input and the output, we proposed a fully-convolutional architecture without any down-sampling operation and used dilated convolutions to capture information at multiple scales, such as in \citet{Li:IPMI:2017}. Finally, residual connections \citep{7780459} are added to avoid the problem of vanishing gradients. This hetero-modal network with weights $\theta$ is used to capture the predictive function $h_{\theta}$ that can accept either $M^T$ or $M^L$ modalities as input.
In order to learn from hetero-modal datasets, we need a network architecture that allows for missing modalities. Specifically, the input modalities are either a \Tone scan or a full set of modalities. To deal with missing modalities, arithmetic operations are employed, as originally proposed in HeMIS \citep{Havaei:MICCAI:2016}. The network architecture is based on a U-Net \citep{3DUnet}, as shown in Figure~\ref{fig:network}. Note that, while the proposed method requires a hetero-modal network, any specific architecture can be used. 
The proposed network is composed of two input branches, one for the \Tone scan and one for the full set of modalities. Although HeMIS originally proposed to encode each modality independently, i.e one branch per modality, we experimentally found higher performance with these two branches. In the presence of the full set of modalities, features extracted from the \Tone scan and all the modalities are averaged. Consequently, the network allows for missing modalities, i.e. is hetero-modal. This hetero-modal network with weights $\theta$ is used to capture the predictive function $h_{\theta}$ that can accept either \Tone or the full set of modalities as input.

\section{Optimising tissue and lesion segmentation as a joint task}
Given the mathematical formulation of the problem and the hetero-modal network architecture, we propose a method to empirically optimise the joint problem of tissue and lesion segmentation.

\subsection{Loss decomposition}
Let $\mathcal{C}_T$, $\mathcal{C}_L$  and $0$ be respectively the set of tissue classes and lesion classes and the value of the background class in the segmentation masks. Since $\mathcal{C}_T$ and $\mathcal{C}_L$ are disjoint, the segmentation map $y$ can be decomposed into two segmentation maps $y = y^{L}+y^{T}$ with $y^{T} \in \mathcal{C}_T\cup \{0\}$, $y^{L} \in \mathcal{C}_L\cup \{0\}$.

% \begin{figure}[t]
%  % Caption and label go in the first argument and the figure contents
%  % go in the second argument
%   \caption{Decomposition of the segmentation map $y$ into a tissue segmentation map $y^T$ and a lesion segmentation map $y^L$}
%   \includegraphics[width=\linewidth]{figs/decomposition_seg.pdf}
%   \label{fig:decomposition}
% \end{figure} 

Let us assume that the loss function $\mathcal{L}$ can also be decomposed into a tissue loss function $\mathcal{L}^{T}$ and a lesion loss function $\mathcal{L}^{L}$. This is common for multi-class segmentation loss functions in particular for those with \emph{one-versus-all} strategies (e.g. Dice loss, Jaccard loss, Generalized Cross-Entropy). Then, the joint and multi-class segmentation problem can be formulated as a multi-task problem:
\begin{equation}\label{eq:H1}\tag{$\textbf{H}_2$}
\mathcal{L}(h_{\theta}(x), y) = \mathcal{L}^{T}(h_{\theta}(x), y^{T}) + \mathcal{L}^{L}(h_{\theta}(x), y^{L})
\end{equation}
In combination with \eqref{eq:H2}, \eqref{eq:0} can be rewritten as:
\begin{equation} \label{eq:1}
\theta^{*} = \text{argmin}_{\theta}  \underbrace{\mathbb{E}_{ \mathcal{D}_{lesion}}[\mathcal{L}^{T}(h_{\theta}(x), y^{T})]}_{\mathcal{R}^T} + \underbrace{\mathbb{E}_{\mathcal{D}_{lesion}}[\mathcal{L}^{L}(h_{\theta}(x), y^{L})]}_{\mathcal{R}^L}
\end{equation}
While the second expected risk $\mathcal{R}^L$ can be estimated using the full set of modalities and the lesion annotations provided in the \emph{lesion} dataset, the first expected risk $\mathcal{R}^T$ appears to be intractable due to the missing tissue annotations in the \emph{lesion} dataset. In the next sections, we first propose an upper bound of the expected tissue risk $\mathcal{R}^T$ and then a means to estimate this upper bound using the \emph{control} dataset.

\subsection{upper bound of the expected tissue risk \texorpdfstring{$\mathcal{R}^T$}{R\^{}T}}
Although, thanks to its hetero-modal architecture, $h_{\theta}$ may handle inputs with varying number of modalities, the current decomposition \eqref{eq:1} assumes that all the modalities of $x$ are available for evaluating the loss. In our scenario, the \emph{control} set of scans with tissue annotations only contains the \Tone scans. Consequently, as we do not have all the modalities with tissue annotations, and as naively evaluating a loss with missing modalities would lead to a bias, estimating $\mathcal{R}^{T}$ is not straightforward. 

Let us assume that the tissue loss function $\mathcal{L}^{T}$ satisfies the triangle inequality:
\begin{equation}\label{eq:H3}\tag{$\textbf{H}_3$}
\forall (a,b,c) \in \mathcal{Y}^{3}: \ \mathcal{L}^{T}(a,c) \leq \mathcal{L}^{T}(a,b) + \mathcal{L}^{T}(b,c)
\end{equation}
Although not all losses satisfy \eqref{eq:H3}, it is known that the binary Jaccard is a distance \citep{Spaeth:ORS:1981,Kosub:PRL:2018} and thus satisfies the triangle inequality.
\begin{definition} 
(Binary Jaccard distance)\\
The binary Jaccard distance $J_{bin}$ is defined such that:
\begin{equation}\label{eq:binary_jaccard}
    %\begin{split}
        \forall a,b \in \{0,1\}^{N}, \ J_{bin}(a,b) = 1- \frac{\sum_{i=1}^{N} a_i b_i}{\sum_{i=1}^{N} a_i + b_i-a_ib_i} 
    %\end{split}
\end{equation}%
\end{definition}
\noindent However, network outputs are typically pseudo-probabilities, and the soft version of \eqref{eq:binary_jaccard} does not satisfy the triangle inequality. To satisfy \eqref{eq:H3}, we extend the binary Jaccard distance to a multi-class probabilistic formulation that coincides with the binary Jaccard for binary inputs but preserves the metric property for probabilistic inputs.
%
% \begin{definition} 
% (Probabilistic multi-class Jaccard distance)

% Let $\norm{.}_{1}$ denote the $L_1$ norm and  $\mathcal{P} \subset [0,1]^{N\times C} $ the set of probability vectors such that for any $\mathbf{p}=(p_c)_{c\in \mathcal{C}} \in \mathcal{P}$:
%     $$\forall i \in [0,..N], \ \sum_{i\in C} p_{i,c}=1 $$
%     The probabilistic multi-class Jaccard distance is defined for any $(\mathbf{u},\mathbf{v})\in\mathcal{P}^2$ as:
% \begin{equation*}\label{eq:jaccard_prob}
% \begin{split}
%     \mathcal{J}(\mathbf{u},\mathbf{v}) =\sum_{c \in C} \omega_{c} \frac{2\norm{u_c-v_c}_{1}}{\norm{u_c}_{1}+\norm{v_c}_{1}+\norm{u_c-v_c}_{1}} \ \ \text{with} \sum_{c \in C} \omega_{c} =1 
% \end{split}
% \end{equation*}
% \end{definition}
%
\begin{definition} 
(Probabilistic multi-class Jaccard distance)\\
Let $C$ be the number of classes in $\mathcal{C}$, $N$ be the number of voxels and $\mathcal{P} \subset [0,1]^{C\times N} $ denote the set of probability vector maps such that for any $p=(p_{c,i})_{c\in \mathcal{C},~ i\in[0;N]} \in \mathcal{P}$:
    $$\forall i \in [0;N], \ \sum_{c\in C} p_{c,i}=1 $$
    The probabilistic multi-class Jaccard distance is defined for any $(u,v)\in\mathcal{P}^2$ as:
\begin{equation}\label{eq:jaccard_prob}
\begin{split}
    %\forall (u,v)\in \mathcal{P}^2, \
    \mathcal{J}(u,v) =\sum_{c \in C} \omega_{c} \frac{2\sum_{i=1}^{N}|u_{c,i}-v_{c,i}|}{\sum_{i=1}^{N}|u_{c,i}|+|v_{c,i}|+|u_{c,i}-v_{c,i}|} 
\end{split}
\end{equation}
where $\omega_c$ are the class weights such that $\text{with} \sum_{c \in C} \omega_{c} =1$
\end{definition}
\noindent As shown in \ref{appendix:relationship}, the binary and probabilistic Jaccard distance coincide on the set of binary vectors $\{0,1\}^{N}$.
Furthermore, \eqref{eq:jaccard_prob} satisfies \eqref{eq:H3}.
\begin{lemma}\label{lemma:probabilisticjaccard}
    The probabilistic multi-class Jaccard distance is a distance and thus satisfies the triangle inequality.
\end{lemma}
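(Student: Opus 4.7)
The plan is to reduce Lemma~\ref{lemma:probabilisticjaccard} to the classical \emph{Steinhaus transform theorem}, which states that if $(X,d)$ is a metric space and $a\in X$ a fixed anchor point, then the rescaled function
\[
\tilde{d}(x,y) \;=\; \frac{2\,d(x,y)}{d(x,a)+d(y,a)+d(x,y)}
\]
(with $\tilde{d}(a,a):=0$) is again a metric on $X$. The argument then proceeds in three stages: (i) reduce the statement to a per-class inequality; (ii) recognise each per-class term as a Steinhaus transform of the $L^1$ distance anchored at the origin; (iii) invoke the Steinhaus theorem and recombine across classes.

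For stage (i), I would use that the weights $\omega_c$ are non-negative and sum to one, so that a finite non-negative linear combination of pseudometrics on a common space is itself a pseudometric. Hence it suffices to prove the triangle inequality for each class-specific functional
\[
\mathcal{J}_c(u,v) \;=\; \frac{2\sum_{i=1}^{N}|u_{c,i}-v_{c,i}|}{\sum_{i=1}^{N}\bigl(|u_{c,i}|+|v_{c,i}|+|u_{c,i}-v_{c,i}|\bigr)}.
\]
For stage (ii), the crucial observation is that $u,v\in\mathcal{P}$ have non-negative entries, so $\sum_i |u_{c,i}|=\|\tilde u_c\|_1=d_1(\tilde u_c,0)$ where $\tilde u_c=(u_{c,i})_{i=1}^{N}\in\mathbb{R}_{\geq 0}^{N}$, and likewise $\sum_{i}|u_{c,i}-v_{c,i}|=d_1(\tilde u_c,\tilde v_c)$. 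Substituting, $\mathcal{J}_c$ is precisely the Steinhaus transform of $(\mathbb{R}^{N},d_1)$ at the origin. Stage (iii) is then an immediate application of the Steinhaus theorem, for which I would cite the standard references \citep{Spaeth:ORS:1981,Kosub:PRL:2018}.

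Symmetry and vanishing on the diagonal are immediate from the definition of $\mathcal{J}$, and positive definiteness holds whenever all $\omega_c$ are strictly positive; if some $\omega_c$ vanish the construction only yields a pseudometric, which is still enough for the triangle-inequality property \eqref{eq:H3} used downstream in the loss-decomposition argument.

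The main obstacle is justifying the Steinhaus step itself: after clearing the (strictly positive) denominators, the triangle inequality $\tilde d(x,z)\le \tilde d(x,y)+\tilde d(y,z)$ becomes a non-obvious algebraic inequality in the six non-negative quantities $d(x,y), d(y,z), d(x,z), d(x,a), d(y,a), d(z,a)$, which must be established by careful rearrangement together with the original triangle inequality for $d$. I would either defer to the existing proofs in the cited literature or reproduce the rearrangement in an appendix, to keep the exposition self-contained without interrupting the main narrative of the paper.
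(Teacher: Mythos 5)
Your proposal is correct and follows essentially the same route as the paper's own argument: both recognise each class-specific term $\mathcal{J}_c$ as a Steinhaus transform of the $L^1$ metric on $[0,1]^N$ anchored at the origin, invoke the Steinhaus theorem \citep{Spaeth:ORS:1981}, and conclude via the fact that a non-negative weighted sum of (pseudo)metrics is a (pseudo)metric. Your additional remarks on positive definiteness when some $\omega_c$ vanish and on the constant factor are sensible refinements but do not change the substance of the proof.
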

\begin{proof}
The proof, detailed in \ref{appendix:proofJaccarddistance}, follows from the Steinhaus transform \citep{Spaeth:ORS:1981} applied to the metric space $([0,1]^{N}, d_{1})$ where $d_{1}$ is the distance induced by the $L_1$ norm. 
\end{proof}

Under \eqref{eq:H3}, $\mathcal{L}^{T}$ satisfies the following inequality:
\begin{equation}
\mathcal{L}^{T}(h_{\theta}(x), y^{T}) \leq \mathcal{L}^{T}(h_{\theta}(x), h_{\theta}(x^{T_1})) + \mathcal{L}^{T}(h_{\theta}(x^{T_1}), y^{T})
\end{equation}
where $x^{T_1}$ denotes the \Tone scan associated to $x$.
Consequently, we find an upper bound of the expected tissue risk:
\begin{equation}\label{eq:upbnd_init}
    \begin{split}
    \mathcal{R}^{T}(\theta) \leq  \underbrace{\mathbb{E}_{ \mathcal{D}_{lesion}}[\mathcal{L}^{T}(h_{\theta}(x), h_{\theta}(x^{T_1}))] }_{\mathcal{R}^{T}_{T_1 \to Full}}
    + \underbrace{\mathbb{E}_{ \mathcal{D}_{lesion}}[\mathcal{L}^{T}(h_{\theta}(x^{T_1}), y^T)]}_{\mathcal{R}^{T}_{T_1}}
    \end{split}
\end{equation}
Minimising $\mathcal{R}^{T}_{T_1}$ enforces the network to generate accurate tissue segmentation using only \Tone as input. Minimising $\mathcal{R}^{T}_{T_1 \to Full}$ encourages consistency between the outputs when given only \Tone or the full set of modalities as input. This latter term allows for transferring the knowledge learnt on the \Tone scan to the full set of modalities. 
%{\color{blue}Note that this term is not present in previously proposed hetero-modal networks, such as \cite{Havaei:MICCAI:2016} and \cite{Dorent:MICCAI:2019}}.

An empirical estimation of $\mathcal{R}^{T}_{T_1 \to Full}$ can be obtained by comparing the network outputs using either \Tone or the full set of modalities as input. In contrast, $\mathcal{R}^{T}_{T_1}$ requires comparison of inference done, under $\mathcal{D}_{lesion}$, from \Tone inputs with ground truth tissue maps $y^T$. While this provides a step towards a practical evaluation of $\mathcal{R}^{T}$, we still face the challenge of not having tissue annotations $y^T$ under $\mathcal{D}_{lesion}$.

\subsection{Estimating \texorpdfstring{$\mathcal{R}^T_{T_1}$}{R\^{}T\_T1} using the \emph{control} dataset}\label{section:withoutda}
To estimate $\mathcal{R}^{T}_{T_1}$, we assume that the neural network $h_{\theta}$ is invariant to a potential domain shift between the \Tone scans of the \emph{control} and \emph{lesion} datasets on the non-lesion regions. Specifically, we assume that the restriction of the feature distributions (rather than the image intensity distributions) over $\mathcal{D}_{lesion}$ and $\mathcal{D}_{control}$ to the non-lesion parts of the brain (i.e. the voxels $i$ such that $y_i \in \mathcal{C}_{T}$) are comparable, i.e.:
\begin{equation}\label{eq:H4}\tag{$\textbf{H}_4$}
%\forall i \in[1..N], \
P_{ \mathcal{D}_{lesion}}(h_{\theta}(x^{T_1})_i, y_i|y_i \in \mathcal{C}_{T})=P_{ \mathcal{D}_{control}}(h_{\theta}(x^{T_1})_i, y_i|y_i \in \mathcal{C}_{T})
\end{equation}

This means that the neural network $h_{\theta}$ generates similar outputs on the non-lesion parts of the brain between the two datasets, leading to:
\begin{equation}\label{eq:expec_eq}
\mathcal{R}^{T}_{T_1} = 
\mathbb{E}_{\mathcal{D}_{lesion}}[\mathcal{L}^{T}(h_{\theta}(x^{T_1}), y^T)] = \mathbb{E}_{\mathcal{D}_{control}}[\mathcal{L}^{T}(h_{\theta}(x^{T_1}), y^T)]
\end{equation}
Consequently, under \eqref{eq:H4}, $\mathcal{R}^{T}_{T_1}$ can be estimated using the \Tone scans and their tissue annotations in the \emph{control} dataset. Section~\ref{section:da} presents means of ensuring that assumption \eqref{eq:H4} is satisfied even in the presence of domain shift in the image intensity distributions.

\begin{figure*}[t!]
 % Caption and label go in the first argument and the figure contents
 % go in the second argument
  \centering
  \includegraphics[width=0.9\linewidth]{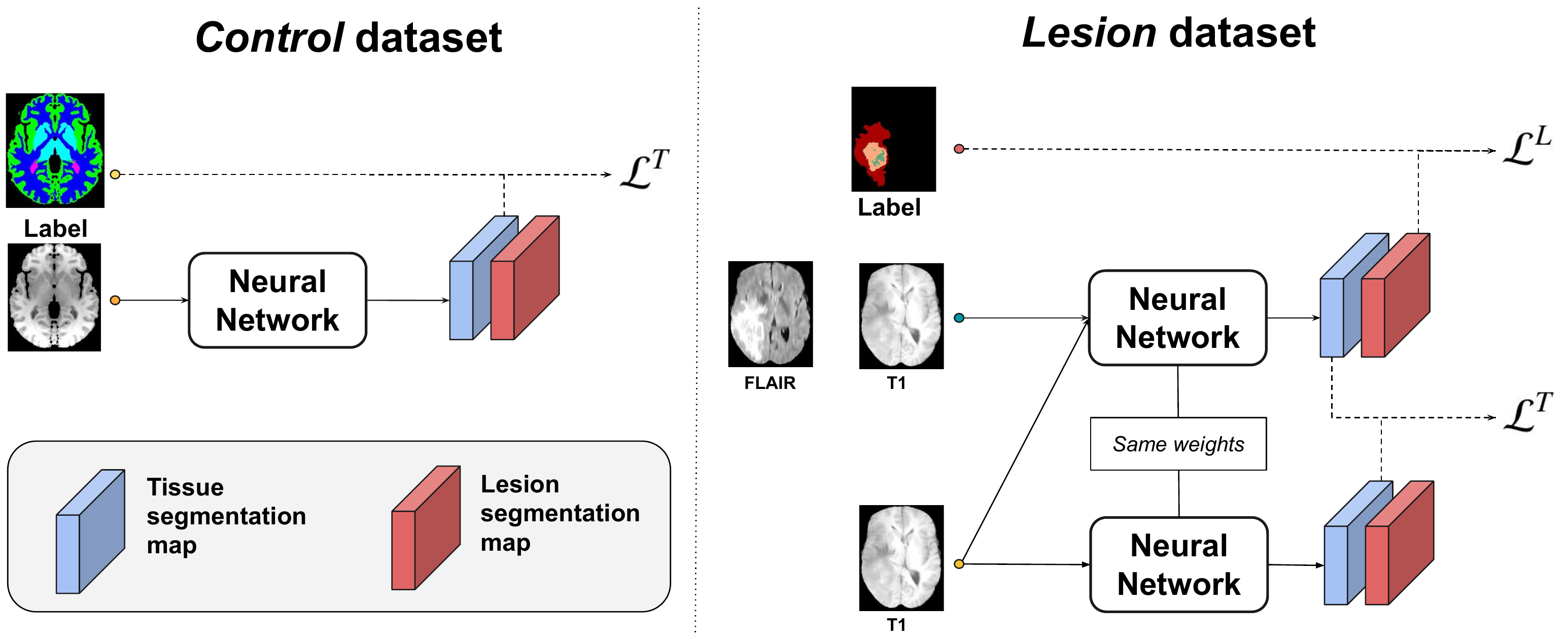}
   \caption{The training procedure using samples from the \emph{control} and \emph{lesion} datasets. The different elements of the decomposed loss upper bound are computed and minimised at each training iteration. The same network is used for all the different hetero-modal inputs. Note that domain adaptation is not represented.}
  \label{fig:training_procedure}
\end{figure*}

\subsection{Summary of the expected risk decomposition}
Bringing all the terms together, given  \eqref{eq:1}, \eqref{eq:upbnd_init} and \eqref{eq:expec_eq}, we seek the 
%optimal
parameters $\theta^{*}$ that optimise the tractable upper bound $\mathcal{R}_{seg}$ of the (intractable) expected risk:
\begin{equation} \label{eq:scenario1_final}
\begin{split}
    \theta^{*} = \text{argmin}_{\theta} & \left\{\mathcal{R}_{seg}= \mathbb{E}_{ \mathcal{D}_{control}}[\mathcal{L}^{T}(h_{\theta}(x^{T_1}), y^T)] + \right. \\
    & \left. \mathbb{E}_{ \mathcal{D}_{lesion}}[\mathcal{L}^{L}(h_{\theta}(x), y^{L})+\mathcal{L}^{T}(h_{\theta}(x), h_{\theta}(x^{T_1}))] \right\}\\
    % & \left. + \mathbb{E}_{ \mathcal{D}_{control}}[\mathcal{L}^{T}(h_{\theta}(x^{T_1}), y^T)]  \right\}
\end{split}
\end{equation}

\section{Matching feature distributions across datasets }\label{section:da} 
In this section, we explore different approaches that ensure the feature distributions extracted from the \emph{control} and \emph{lesion} \Tone scans are comparable, i.e. we want to satisfy \eqref{eq:H4} even in the presence of domain shift.
%Two scenarios are considered: The absence or the presence of domain shift between the \emph{control} and \emph{lesion} datasets.

\subsection{Similar acquisition protocols for the \emph{control} and \emph{lesion} datasets}\label{sec:scenario_noDA}
Let's first assume that the acquisition protocols are similar for the \emph{control} and \emph{lesion} datasets, i.e. they are not domain-shifted.  Specifically, we assume that the \Tone images have been acquired with similar sequences, spacial resolution and field strength. 
In this case, similar to \cite{DetectionLesionUnsupervised}, the restriction of the distributions $\mathcal{D}_{lesion}$ and $\mathcal{D}_{control}$ to the non-lesion parts of the brain can be assumed to be the same on the \Tone scans, i.e.:
\begin{equation}
%\forall i \in[1..N], \
P_{ \mathcal{D}_{lesion}}(x^{T_1}_{i},y_i|y_i \in \mathcal{C}_{T})=P_{ \mathcal{D}_{control}}(x^{T_1}_{i},y_i|y_i \in \mathcal{C}_{T})
\end{equation}

In the absence of domain shift, we can reasonably assume that the network produces similar outputs on the non-lesion parts of the brain for the two distributions, i.e. that \eqref{eq:H4} is satisfied. No specific additional action thus needs to be implemented.

\subsection{Generating pseudo-healthy scans to learn tissue segmentation from domain-shifted \Tone scans}\label{sec:pseudo_healthy}
Let's now consider the presence of a domain shift between the \Tone \emph{control} and \emph{lesion} scans due to different acquisition protocols. In this section, we propose to synthesise pseudo-healthy scans from domain-shifted \Tone lesion scans in order to extend the \emph{control} dataset with \emph{control} scans associated to the protocol of acquisition of the \textit{lesion} dataset. Since the \emph{control} and \emph{lesion} datasets are domain-shifted, existing lesion removal approaches, based either on CycleGANs \citep{GAN:LesionRemoval1,GAN:LesionRemoval2} or Variational Autoencoders \citep{DetectionLesionUnsupervised}, are not adapted as they require training data with no domain shift beyond the presence of absence of pathology.

To tackle the presence of an acquisition-related domain shift, we propose to generate pseudo-healthy scans and their annotations using traditional image computing techniques that are inherently robust to different acquisition protocols.
%a new method that is robust to domain shift and generates unpaired pseudo-healthy scans. 
For example, for white matter lesions, lesion filling methods allow for transforming scans with lesions into pseudo-healthy scans \citep{VALVERDE201486,PRADOS2016376}.
For large and unilateral pathology, we propose to synthesise pseudo-healthy \Tone scans by symmetrising the "healthy" hemisphere of the patients with lesions located in one hemisphere only.
The inter-hemispheric symmetry plane is estimated via the technique described in \citet{993131}. Finally, the "healthy" hemisphere of those patients is mirrored in order to create a symmetric pseudo-healthy brain.
Having generated pseudo-healthy images,
%Then,
traditional methods, designed for \emph{control} scans, such as the GIF framework \citep{Cardoso:TMI:2015}, can then be employed to generate the corresponding bronze standard tissue annotations.
%from these pseudo-healthy images.

With this set of scans  $S^{T_1}_{pseudo}$, we have access to a \emph{pseudo-control} dataset acquired with a similar protocol as in the \emph{lesion} dataset and similar on the non-lesion part of the brain, and thus are in the scenario described in \ref{sec:scenario_noDA}. Let denote  $\mathcal{D}^{T_1}_{pseudo}$ the distribution of those scans. The expected tissue risk $\mathcal{R}^{T}_{T_1}$ is then equal to the expect tissue risk under $\mathcal{D}^{T_1}_{pseudo}$:
\begin{equation}\label{eq:expect_synth}
\mathcal{R}^{T}_{T_1} = 
\mathbb{E}_{\mathcal{D}_{lesion}}[\mathcal{L}^{T}(h_{\theta}(x^{T_1}), y^T)] = \mathbb{E}_{\mathcal{D}^{T_1}_{pseudo}}[\mathcal{L}^{T}(h_{\theta}(x^{T_1}), y^T)]
\end{equation}
To take advantage of the manual annotations in the \emph{control} dataset, we resort to averaging the two formulations (\ref{eq:expec_eq}, \ref{eq:expect_synth}):
\begin{equation}\label{eq:expect_synth_final}
\mathcal{R}^{T}_{T_1} \approx \frac{\mathbb{E}_{\mathcal{D}_{control}}[\mathcal{L}^{T}(h_{\theta}(x^{T_1}), y^T)] + \mathbb{E}_{\mathcal{D}^{T_1}_{pseudo}}[\mathcal{L}^{T}(h_{\theta}(x^{T_1}), y^T)]}{2} 
\end{equation}
Consequently, the expected tissue risk $\mathcal{R}^{T}_{T_1}$ can be estimated using the \emph{control} and \emph{pseudo-control} \Tone scans.

\subsection{Alternative unsupervised DA techniques}
In order to satisfy \eqref{eq:H4}, the feature representation of the non-lesion parts of the brain has to be invariant to the changes induced by the different protocols. A direct way to align the feature distributions restricted to the non-lesion parts would be to match the representations of pairs of scans acquired with the different settings. However, in our scenario, we do not have access to such pairs of domain-shifted scans. 

In contrast, unsupervised DA allows to perform domain adaptation using unpaired and non-annotated domain-shifted scans. Unsupervised DA techniques commonly introduce an additional term ($\mathcal{R}_{DA}$) that encourages the network to be invariant to the domain shift. Then, the total expected risk reads:
\begin{equation}
    \mathcal{R}_{total}= \mathcal{R}_{seg} + \lambda \mathcal{R}_{DA}
\end{equation}
where $\lambda$ is a hyper-parameter that allows for balancing the segmentation risk $\mathcal{R}_{seg}$ (\eqref{eq:scenario1_final}) with the DA regularisation $\mathcal{R}_{DA}$.

The definition of the DA term depends on the DA technique. In this work, two common unsupervised DA methods are considered, based either on data augmentation or adversarial learning.

\subsubsection{Unsupervised DA via physically-inspired data augmentation}
Since \Tone scans play a key role for structure analysis, we expect high-resolution \Tone scans for datasets developed in the scope of tissue segmentation, such as the \emph{control} dataset. Conversely, \Tone scans are often less critical for lesion segmentation and \Tone scans may have been acquired with a lower resolution.

Let's assume that, less effort has been done to acquire high-resolution \Tone \emph{lesion} scans, explaining the differences in acquisition protocols. Specifically, we assume that the domain shift is caused by the presence of \Tone \emph{lesion} scans with artefacts (e.g. related to the MR bias field or the presence of motion artefacts) and a lower acquisition resolution. We additionally assume that differences of scanner characteristics (manufacturer, field strength) are excluded. 

%In this scenario, \eqref{eq:H4} is satisfied if the network $h_{\theta}$ is invariant to the aforementioned domain changes.

Then, physically-informed augmentation such as random bias field \citep{SUDRE201750} and motion artefacts \citep{RICHARD} and spacial smoothing can be employed to generate scans that are similar to the \Tone \emph{lesion} scans. Let denote $T_{\psi}$ the composition of these transformations parametrised by the parameters $\psi\sim\mathcal{D}_{\psi}$. For any \Tone \emph{control} scan $x^{T_1}_{c}$, we can thus generate an augmented version $T_{\phi}(x^{T_1}_{c})$, i.e. getting access to pairs of domain-shifted \Tone scans. This allows to minimise the discrepancy between the feature representations learnt by the neural network across the two domains by enforcing consistency across outputs from paired domain-shifted inputs, i.e.:
\begin{equation}\label{eq:augmentation}
\mathcal{R}_{DA}= \mathbb{E}_{\mathcal{D}_{\psi}}\mathbb{E}_{D_{control}}\left[ \mathcal{L}^{T}\left(h_{\theta}(x^{T_1}_{c}), h_{\theta}\circ T_{\phi}(x^{T_1}_{c})\right)\right]
\end{equation}
An empirical estimation of the DA regularisation term $\mathcal{R}_{DA}$ is obtained by comparing the network outputs using the \Tone \emph{control} scans and their augmented versions as input.

Consequently, if the domain shift is due to different spatial resolutions and the presence of the aforementioned artefacts, the network can be trained to be invariant to the domain shift, i.e. to satisfy \eqref{eq:H4}.

\subsubsection{Unsupervised DA via adversarial learning}
Let's now assume that the domain shift cannot easily be simulated. In this case, we can use adversarial learning. Adversarial approaches for domain adaptation can be seen as a two-player game: A discriminator $D_{\phi}$, parametrised by the weights $\phi\in\Phi$, is trained to distinguish the source domain features from the target domain features, while the segmentation network $h_{\theta}$ is simultaneously trained to confuse the domain discriminator. 

The discriminator aims to predict the probability that extracted features are part of the \emph{lesion} feature distributions.
The discriminator accuracy can thus be seen as a measurement of the discrepancy between the \emph{lesion} and \emph{control} feature distributions and used as a DA regularisation term:
%The discriminator accuracy is used to assess the discrepancy between the feature distributions, i.e.:
\begin{equation}\label{eq:loss_DA_unsupervised}
    \mathcal{R}_{DA}(\phi,\theta) = \mathbb{E}_{\mathcal{D}_{lesion}}[1-D_{\phi}(h_{\theta}(x))]+\mathbb{E}_{\mathcal{D}_{control}}[D_{\phi}(h_{\theta}(x))]
\end{equation}
This DA term can be estimated by using features extracted from \Tone \emph{control} and \emph{lesion} scans as input of the discriminator.

The following proposition shows that the discriminator accuracy is a principled measurement of the feature distribution discrepancy:
\begin{prop}[]
\label{theorem:domain-adaptation} Let assume that $\mathcal{L}$ satisfies the triangle inequality and is bounded. Let us also assume that the family of domain discriminators $\mathcal{H}_{\Phi}=\{D_{\phi}\}_{\phi \in \Phi}$ is rich enough. Then there is a constant $K$ such that:

\begin{equation}\label{eq:DA_unsupervised}
 \mathbb{E}_{\mathcal{D}_{lesion}}\left[\mathcal{L}\left(h_{\theta}(x), y\right)\right]  \leq \mathcal{R}_{seg}+ K\sup_{\phi}\mathcal{R}_{DA}(\phi,\theta)  + \epsilon(\Theta) 
\end{equation}
where $\epsilon(\Theta)$ is independent of the network parameters $\theta$ and corresponds to the accuracy of the best (and unknown) segmenter in the family of functions parametrised in $\Theta$.
\begin{proof}
    The proof uses \eqref{eq:1}, \eqref{eq:upbnd_init}, is based on \cite{Ben-David2010} and \cite{NIPS2018_7436} and detailed in \ref{appendix:prop}.
\end{proof}
\end{prop}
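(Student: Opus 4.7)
The plan is to chain the loss decomposition already derived in the paper with a Ben-David style cross-domain bound. First I invoke $(\textbf{H}_2)$ together with the triangle inequality bound~\eqref{eq:upbnd_init} to obtain
\[
\mathbb{E}_{\mathcal{D}_{lesion}}[\mathcal{L}(h_\theta(x), y)] \leq \mathcal{R}^L + \mathcal{R}^T_{T_1\to Full} + \mathcal{R}^T_{T_1}.
\]
The first two summands are already $\mathcal{D}_{lesion}$-expectations sitting inside $\mathcal{R}_{seg}$; the only truly intractable term is $\mathcal{R}^T_{T_1}$, which requires tissue labels under $\mathcal{D}_{lesion}$.

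Second, I transfer $\mathcal{R}^T_{T_1}$ to an expectation under $\mathcal{D}_{control}$ via the classical Ben-David argument. Introduce an ideal hypothesis $h^\star$ in the parametric family that minimises the sum of the tissue risks under both distributions. Two applications of the triangle inequality for $\mathcal{L}^T$ (available thanks to $(\textbf{H}_3)$, as established for the probabilistic Jaccard distance in Lemma~\ref{lemma:probabilisticjaccard}) yield
\[
\mathcal{R}^T_{T_1} \leq \mathbb{E}_{\mathcal{D}_{control}}[\mathcal{L}^T(h_\theta(x^{T_1}), y^T)] + \Delta(\theta) + \epsilon(\Theta),
\]
with
\[
\Delta(\theta) = \bigl|\mathbb{E}_{\mathcal{D}_{lesion}}[\mathcal{L}^T(h_\theta(x^{T_1}), h^\star(x^{T_1}))] - \mathbb{E}_{\mathcal{D}_{control}}[\mathcal{L}^T(h_\theta(x^{T_1}), h^\star(x^{T_1}))]\bigr|,
\]
and with $\epsilon(\Theta)$ collecting the residual tissue errors of $h^\star$ on both distributions, which depend on the hypothesis class $\Theta$ but not on $\theta$ itself.

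Third, I upper bound $\Delta(\theta)$ by the discriminator objective. Because $\mathcal{L}$ is bounded, the map $x \mapsto \mathcal{L}^T(h_\theta(x^{T_1}), h^\star(x^{T_1}))$ is a bounded measurable functional of $x$, so $\Delta(\theta)$ takes the form of an integral-probability-metric-type discrepancy between $\mathcal{D}_{lesion}$ and $\mathcal{D}_{control}$ evaluated on that functional. Under the richness assumption on $\{D_\phi\}_{\phi\in\Phi}$, any such bounded discriminating function can be reproduced, up to a uniform rescaling by the loss bound, by some $D_\phi$. This is the standard mechanism, in the spirit of \cite{Ben-David2010} and \cite{NIPS2018_7436}, for converting an $\mathcal{H}$-divergence into an adversarial discriminator objective, and it delivers a constant $K$, depending only on the uniform bound of $\mathcal{L}$, such that $\Delta(\theta) \leq K \sup_\phi \mathcal{R}_{DA}(\phi, \theta)$. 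Assembling the three steps gives the claim, with $\mathcal{R}_{seg}$ absorbing $\mathbb{E}_{\mathcal{D}_{control}}[\mathcal{L}^T(h_\theta(x^{T_1}), y^T)]$, $\mathcal{R}^T_{T_1\to Full}$ and $\mathcal{R}^L$.

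The main obstacle is precisely this third step: making rigorous the translation of the abstract discrepancy $\Delta(\theta)$ into the concrete adversarial term $\sup_\phi \mathcal{R}_{DA}(\phi, \theta)$. It hinges on how the \emph{rich enough} hypothesis is made precise—essentially a denseness of $\{D_\phi\}$, after rescaling by the loss bound, in an appropriate space of $[0,1]$-valued discriminating functionals—and on checking that $\mathcal{L}^T(h_\theta(\cdot), h^\star(\cdot))$ lies in that space. The first two steps reduce to the loss-decomposition and triangle-inequality manipulations already used earlier in the paper, so essentially all the work is concentrated in this final IPM-to-discriminator translation.
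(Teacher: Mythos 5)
Your proposal follows essentially the same route as the paper's own proof: the same decomposition via \eqref{eq:1} and \eqref{eq:upbnd_init}, the same Ben-David-style introduction of an ideal joint hypothesis $\theta^{*}$ yielding $\epsilon(\Theta)$ plus a cross-domain discrepancy, and the same conversion of that discrepancy into $K\sup_{\phi}\mathcal{R}_{DA}(\phi,\theta)$ via a richness assumption on the discriminator family (the paper routes this through a ``difference hypothesis space'' $\Delta$ and assumes the $K$-rescaled discriminators contain it, which is precisely the denseness condition you correctly identify as the delicate step, and which the paper itself handles only informally). No gap relative to the paper's argument.
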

\eqref{eq:DA_unsupervised} shows that the intractable expected loss is bounded by a weighted sum of the tractable segmentation risk \eqref{eq:scenario1_final} and the accuracy of the best discriminator, up to a constant w.r.t the network parameters. 

Moreover, the alternative optimisation strategy can be seen as a way to estimate the best discriminator while minimising the upper bound defined in \eqref{eq:DA_unsupervised}. 

Note that \eqref{eq:DA_unsupervised} stands for features extracted at any level of $h_{\theta}$. In this work and similarly to \cite{OrbesArteaga2019MultiDomainAI}, the contracting path features from the U-Net are used as input of the discriminator.

\section{Implementation of the joint model optimisation}
Given the formulation of the joint model and our proposed computationally tractable decomposition, we present in this section the implementation of our framework. 

\subsection{Stochastic optimisation of the joint model}
We use a stochastic gradient descent approach to minimise the expected risk decomposition \eqref{eq:scenario1_final} and to enforce the network to be invariant to a potential domain shift between the datasets. The total loss function reads:
\begin{equation}
    \mathcal{L}_{total}= \mathcal{L}_{seg} + \lambda \mathcal{L}_{DA}
\end{equation}
where $\lambda$ is a hyper-parameter that allows for balancing the segmentation loss $\mathcal{L}_{seg}$ (associated to $\mathcal{R}_{seg}$) with the domain adaptation loss $\mathcal{L}_{DA}$ (associated to $\mathcal{R}_{DA}$) .
Figure~\ref{fig:training_procedure} shows the training procedure without DA. The weights of the segmentation loss are given by the decomposition of the problem. The domain adaptation parameter $\lambda$ is a hyper-parameter that is experimentally chosen.

At each training iteration, we draw pairs of samples $(x_{l},y^L_{l})$ and $(x^{T_1}_{c},y^{T}_{c})$ from $\mathcal{S}_{lesion}$ and $\mathcal{S}_{control}$ and compute in each mini-batch the following loss functions and associated gradient. Note that there is no natural pairing between $(x_{l},y^{L}_{l})$ and $(x^{T_1}_{c},y^{T}_{c})$. Our paired sampling procedure thus exploits random pairing.

As presented in Section~\ref{section:da}, different scenarios are considered.

\noindent \textbf{Similar acquisition protocols} If the datasets are not domain-shifted, no DA is required ($\lambda=0)$, and the segmentation loss is:
\begin{equation}\label{eq:seg_loss}
    \begin{split}
        \mathcal{L}_{seg} &=\mathcal{L}^L\left(h_{\theta}(x_{l}),y^{L}_{l}\right) + \mathcal{L}^{T}\left(h_{\theta}(x_{l}),h_{\theta}(x^{T_1}_{l})\right) \\
        & + \mathcal{L}^{T}\left(h_{\theta}(x^{T_1}_{c}), y^{T}_{c}\right) 
    \end{split}
\end{equation}
We experimentally found that the inter-modality tissue loss $\mathcal{R}^{T}_{T_1 \to Full}$ has to be skipped for few epochs (50 in our experiments).

% With domain-shifted databases, different techniques to adapt $h_{\theta}$  to  become invariant to acquisition protocols have been proposed:

\noindent \textbf{Pseudo-healthy generation} Given a pseudo-healthy annotated set of scans $S^{T1}_{pseudo}=\{x_{pseudo}^{T_1}, y_{pseudo}^{T}\}$, no DA is employed ($\lambda=0)$, and the segmentation loss, defined by \eqref{eq:expect_synth_final}, is:
\begin{equation}
    \begin{split}
        \mathcal{L}_{seg} &=\mathcal{L}^L\left(h_{\theta}(x_{l}),y^{L}_{l}\right) + \mathcal{L}^{T}\left(h_{\theta}(x_{l}),h_{\theta}(x^{T_1}_{l})\right) \\
        & + \frac{1}{2}\left[\mathcal{L}^{T}\left(h_{\theta}(x_{pseudo}^{T_1}),  y_{pseudo}^{T}\right) + \mathcal{L}^{T}\left(h_{\theta}(x^{T_1}_{c}), y^{T}_{c}\right)\right]
    \end{split}
\end{equation}

\noindent \textbf{DA via augmentation} If we assume that the differences of protocols can be simulated (random bias field, motion artefacts and spatial smoothing), the domain invariance \eqref{eq:augmentation} is learnt by minimising the inter-domain feature discrepancy defined as:
$$ \mathcal{L}_{DA} = \mathcal{L}^{T}\left(h_{\theta}(x^{T_1}_{c}), h_{\theta}(T_{\phi}(x^{T_1}_{c}))\right) $$
where $T_{\phi}$ corresponds to a composition of theses transformations. The segmentation loss $\mathcal{L}_{seg}$ is the same as in \eqref{eq:seg_loss}.

\noindent \textbf{DA via adversarial learning} If adversarial learning is employed, a discriminator $D$ is trained to discriminate scans from the two domains by maximising the domain classification accuracy. For computational stability, the
$L_1$ distance defined in \eqref{eq:loss_DA_unsupervised} has been replaced by the cross-entropy. Conversely, the segmenter $h_{\theta}$ is train to minimise this domain classification accuracy, i.e.:
$$ \mathcal{L}_{DA} = \log(D_{\psi}(h_{\theta}(x_{c}^{T_1})) + \log(1-D_{\psi}(h_{\theta}(x^{T_1}_{l})) $$
As in \cite{Kamnitsas:MICCAI:2017,OrbesArteaga2019MultiDomainAI}, the DA loss is skipped for few epochs (20 in our experiments) in order to initialise the discriminator. The segmentation loss $\mathcal{L}_{seg}$ is the same as in \eqref{eq:seg_loss}. 

\subsection{Implementation details}
We implemented our network in PyTorch, using TorchIO \citep{fern2020torchio}. 
%A Tensorflow version will also be released using NiftyNet \citep{Gibson:CMPB:2018}. 
Codes are available at \url{http://github.com/ReubenDo/jSTABL}.

Convolutional layers are initialised such as proposed in \citet{He:ICCV:2015}. The scaling and shifting parameters in the batch normalisation layers were initialised to 1 and 0 respectively. As suggested by \citet{Ulyanov:arXiv:2016}, we used instance normalisation. We used the same discriminator as in \cite{OrbesArteaga2019MultiDomainAI}.

 We performed a 3-fold cross validation. 
For each fold, we randomly split the data into $70\%$ for training, $10\%$ for validation and $20\%$ for testing. We used a batch of 2 \emph{lesion} scans, and 2 \emph{control} scans. Note that, for the DA approach based on data augmentation, the batch of 2 \emph{control} scans consists in a pair of non-augmented/augmented \emph{control} scans. As a data augmentation, a rotation with a random angle in $[-10^{\circ},10^{\circ}]$ and a random Gaussian noise are employed.
The network was trained using Adam optimiser \citep{Kingma:ICLR:2015}  the learning rates $l_R$, $\beta_1$, $\beta_2$ were initially respectively set up to $5.10^{-4}$, $0.9$ and $0.999$. $l_R$ was progressively reduced by a factor of 2 every 10000 iterations. 
We employed the training strategy used for the nnU-Net \citep{nnU-Net}: The learning rate is reduced by a factor 2 after 15 epochs without reduction of the exponential moving average of the loss on the validation split.

We used the probabilistic version of the multi-class Jaccard distance \eqref{eq:jaccard_prob} as the segmentation loss function.
In order to give the same weight to the lesion segmentation and the tissue segmentation, we choose $\bm{\omega}$ such that
\[
\sum_{c\in \mathcal{C}_{tissue}}\omega_{c} = \sum_{c'\in \mathcal{C}_{lesion}}\omega_{c'}=\frac{1}{2}.
\]

%%%
\begin{figure*}[tb!]
 % Caption and label go in the first argument and the figure contents
 % go in the second argument
  \centering
  \includegraphics[width=\linewidth]{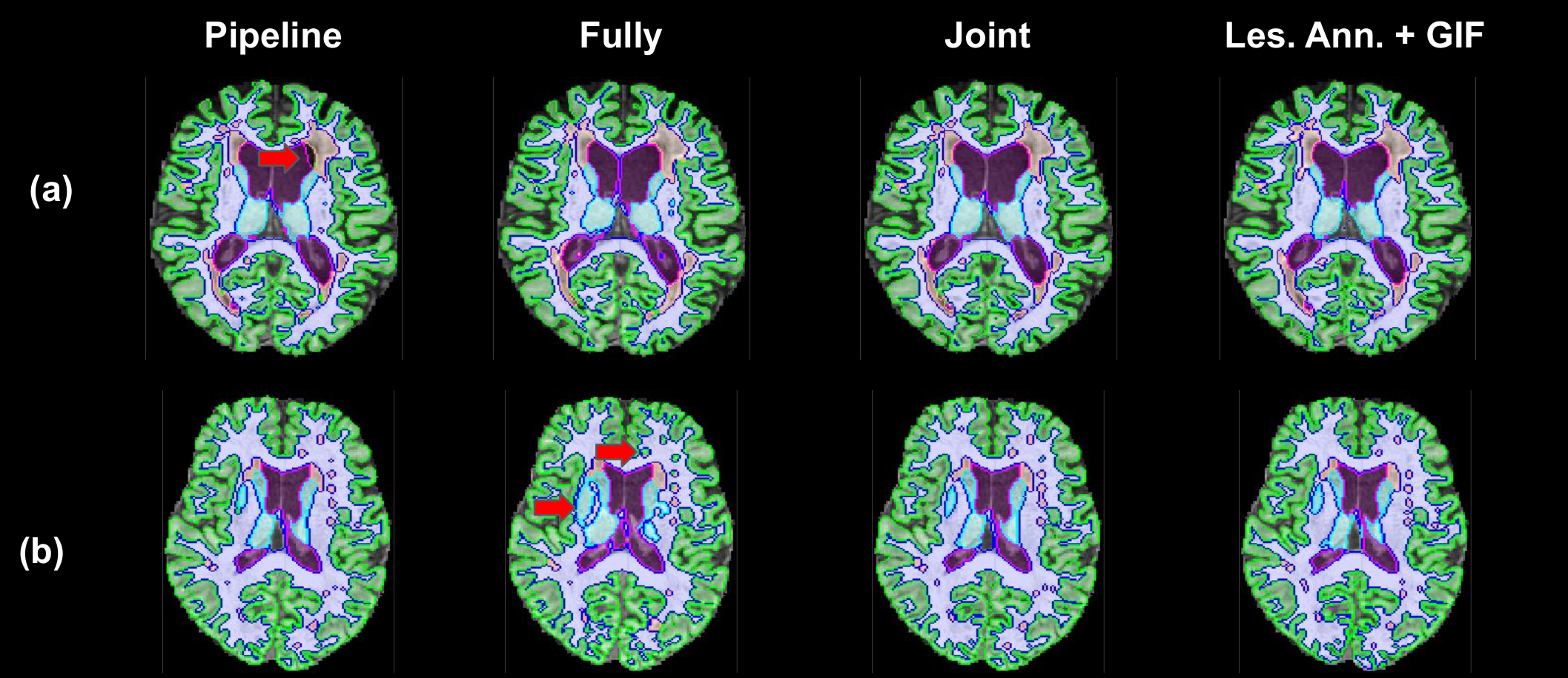}
    \caption{Examples of output for the different models: \emph{Pipeline}; \emph{Fully-sup}; \emph{jSTABL}; and the combination of the manual annotation and GIF. (a) \emph{Tissue} model used in \emph{Pipeline} can be perturbed by the presence of lesions (arrow). (b) Example for which the fully-supervised model largely fails to segment the tissue and the lesions.}
  \label{fig:resultsWMH}
\end{figure*} 
\begin{table*}[tb]
	\centering
	\caption{Summary of data characteristics for white matter lesion segmentation
	}\label{tab:DataCharacteristics}
	\resizebox{\textwidth}{!}{
	\begin{tabular}{c *{7}{c}}
		\toprule
		\multirow{2}{*}{} & \multicolumn{2}{c}{\bf Control data} & \multicolumn{3}{c}{\bf Lesion data } & \multicolumn{1}{c}{\bf Fully anno. data}\\
		               \cmidrule(lr){2-3} \cmidrule(lr){4-6} \cmidrule(lr){7-7}
		               & OASIS1 & ADNI2  & WMH-Utrech & WMH-Singapore & WMH-Amsterdam & MRBrainS18   \\
		               & \citep{doi:10.1162/jocn.2007.19.9.1498} & \citep{doi:10.1002/jmri.21049} & \multicolumn{3}{c}{\citep{8669968}} & \citep{MRBrainS18}\\
		\midrule
        \multirow{2}{*}{Sequences} &  3D MP-RAGE  \Tone  & 3D MP-RAGE \Tone & 3D MP-RAGE \Tone & 3D MP-RAGE \Tone & 3D MP-RAGE \Tone & MP-RAGE 3D \Tone \\ 
        & $\times$ & $\times$ & 2D FLAIR & 2D FLAIR & 3D FLAIR & 2D FLAIR  \\
        %\cmidrule(lr){2-7}
        MRI scanner & Siemens Vision & Various & Philips Achieva  & Siemens TrioTim & GE Signa HDxt & Philips Achieva  \\
        Field Strengh & 1.5T & 3T & 3T & 3T  & 3T & 3T  \\
        Voxel size ($\text{mm}^3$) &  $1.00\times1.00\times1.00$ & $1.20\times1.05\times1.05$ & $0.96\times0.95\times3.00$ & $1.00\times1.00\times3.00$ & $1.20\times0.98\times3.00$ & $0.96\times 0.96 \times 3.00$    \\
        %\cmidrule(lr){2-7}
        Annotations & 143 structures & 143 structures & WMH  & WMH & WMH & 6 Tissues+WMH+CSF \\
        $\#$ scans available & 35 & 25 & 20  & 20  & 20  & 30 (7 available)   \\
        \midrule
        Training data in: & \emph{Tissue} + \emph{jSTABL} & \emph{Tissue} + \emph{jSTABL} & \emph{Lesion} + \emph{jSTABL} & \emph{Lesion} + \emph{jSTABL} & \emph{Lesion} + \emph{jSTABL} & \emph{Fully} \\
		\bottomrule
		
	\end{tabular}
	}
	%\bigskip
\end{table*}

\section{Experiments and results}
\subsection{Joint white matter lesion and tissue segmentation}
\subsubsection{Task and datasets}\label{experiments:WMH}
In this first set of experiments, we focus on the segmentation of white matter lesions and six tissue classes (white matter, grey matter, basal ganglia, ventricles, cerebellum, brainstem), 
as well as the background.
%the white matter lesions and the background.
%
As detailed in Table~\ref{tab:DataCharacteristics}, we used 2 \emph{control} datasets and 2 \emph{lesion} datasets:

\begin{itemize}
    %\item
    \item\textbf{Lesion data} $\mathbf{S_{lesion}}$: The White Matter Hyperintensities (WMH) training database \citep{8669968} consists of 60 sets of brain MR images (\Tone and FLAIR, $M=2$) with manual annotations of WMHs. The data comes from three different institutes. Note that images modalities are be co-registered and resampled in the FLAIR coordinate space.
    
\item \textbf{Tissue data} $\mathbf{S_{control}}$: Consists of 35 \Tone scans ($M^{'}=1)$ from the OASIS project \citep{doi:10.1162/jocn.2007.19.9.1498} with annotations of 143 structures of the brain provided by Neuromorphometrics, Inc. (\url{http://Neuromorphometrics.com/}) under academic subscription. From the 143 structures, we deduct the 6 tissue classes. In order to have balanced training datasets between the two datasets, to include data acquired at the same field strength (3T) as the \emph{lesion} data, and similar to \citet{Li:IPMI:2017}, we added 25 \Tone control scans from the Alzheimer's Disease Neuroimaging Initiative 2 (ADNI-2) database  (\cite{doi:10.1002/jmri.21049}, \url{adni.loni.usc.edu}) with bronze standard parcellation of the brain structures computed with the accurate but time-consuming algorithm of \citet{Cardoso:TMI:2015}.

\item \textbf{Fully annotated data $\mathbf{S_{fully}}$:} MRBrainS18 (\url{http://mrbrains18.isi.uu.nl/}) is composed of 30 sets of brain MR images with tissue and lesion manual annotations. 7 scans are publicly available for training and validation. Although the cerebrospinal fluid (CSF) has been annotated in MRBrainS18, it was considered as background to have the same set of tissue classes as in $\mathbf{S_{control}}$ where the CSF was not labelled. Note that image modalities are be co-registered and resampled in the FLAIR coordinate space.
\end{itemize}

\begin{table*}[tb]
	\centering
	\caption{Evaluation of our framework (jSTABL) on patients with White Matter Lesion in comparison with baseline methods.
	We report means and standard deviations for Dice scores.  Means only are reported in the online leader-board, leading to missing standard deviations.
	}\label{tab:WhiteMatterLesion:Dice}
	\resizebox{\textwidth}{!}{
	\begin{tabular}{c *{12}{c}}
		\toprule
		\multirow{2}{*}{\bf Classes} & \multicolumn{3}{c}{\bf OASIS1+ADNI2} & \multicolumn{3}{c}{\bf WMH} & \multicolumn{4}{c}{\bf MRBrainS18}\\
		               \cmidrule(lr){2-4} \cmidrule(lr){5-7} \cmidrule(lr){8-11}
		               & Tissue & Fully-Sup & jSTABL & Pipeline & Fully-Sup & jSTABL & SPM & Pipeline & Fully-Sup & jSTABL  \\
		\midrule
		Grey matter &88.3 (3.4) &81.6 (2.5) &88.3 (3.4) &88.3 (2.1) &85.4 (2.7) & \textbf{88.8 (2.1)} & 76.5 & 82.3 & 83.7 & 82.2 \\
        White mater &92.8 (2.3) &83.3 (2.6) &92.3 (2.7) &92.1 (1.8) &85.4 (2.6) &\textbf{92.4 (1.5)} & 75.7 & 85.0 & 85.7 & 85.6  \\
        Brainstem &93.5 (1.0) &71.7 (2.7) &93.0 (0.9) &93.6 (1.0) &77.1 (2.4) &\textbf{94.2 (0.9)} & 76.5 & 72.8 & 85.0 & 73.3  \\
        Basal ganglia &89.5 (3.0) &69.6 (4.1) &88.4 (2.7) &\textbf{86.3 (4.3)} &74.2 (2.0) &85.1 (3.4) & 74.7 & 77.4 & 79.7& 78.0 \\
        Ventricles &90.3 (4.3) &70.5 (18.0) &90.6 (3.8) &94.7 (2.3) &92.1 (4.2) &\textbf{95.7 (1.4)} & 80.9 & 91.8 & 92.2 &  92.9& \\
        Cerebellum &95.0 (1.2) &92.0 (1.4) &94.9 (1.1) &95.7 (1.0) &93.8 (2.0) &\textbf{96.0 (0.9)}  & 89.4 & 89.2 & 93.2 & 90.4 \\
        \cmidrule(lr){1-1}
        White matter Lesion & & & &77.4 (9.6) &60.1 (19.1) &\textbf{77.6 (9.2)} & 40.8 & 58.4 & 56.2
        & 59.4 \\ 
		\bottomrule
	\end{tabular}
	}
	%\bigskip
\end{table*}
\begin{table*}[tb]
	\centering
	\caption{Evaluation of our framework (jSTABL) on patients with White Matter Lesion in comparison with baseline methods.
	We report means and standard deviations for 95th-percentile Hausdorff distances. Means only are reported in the online leader-board, leading to missing standard deviations.
	}\label{tab:WhiteMatterLesion:Haus}
	\resizebox{\textwidth}{!}{
	\begin{tabular}{c *{12}{c}}
		\toprule
		\multirow{2}{*}{\bf Classes} & \multicolumn{3}{c}{\bf OASIS1+ADNI2} & \multicolumn{3}{c}{\bf WMH} & \multicolumn{4}{c}{\bf MRBrainS18}\\
		               \cmidrule(lr){2-4} \cmidrule(lr){5-7} \cmidrule(lr){8-10}
		               & Tissue & Fully-Sup & jSTABL & Pipeline & Fully-Sup & jSTABL  & SPM & Pipeline & Fully-Sup & jSTABL  \\
		\midrule
        Grey matter &1.3 (0.4) &2.0 (0.4) &1.4 (0.5) &1.2 (0.2) &1.3 (0.3) &\textbf{1.1 (0.2)} &  2.9 & 1.9 & 1.9 & 2.1 \\
        White mater &1.2 (0.5) &3.0 (0.1) &1.2 (0.5) &\textbf{1.1 (0.2)} &2.0 (0.3) &\textbf{1.1 (0.1)} & 4.9 & 3.2& 2.9 & 3.2 \\
        Brainstem &1.4 (0.4) &10.5 (2.3) &1.7 (0.4) &1.7 (0.4) &8.8 (2.2) &\textbf{1.3 (0.4)} & 25.3 & 11.6 & 6.65 & 11.6 \\
        Basal ganglia &1.8 (0.5) &4.9 (0.8) &2.1 (0.3) &\textbf{2.0 (0.6)} &3.6 (0.4) &2.5 (0.3) & 7.1 & 4.3 & 4.3 & 4.0\\
        Ventricles &1.9 (2.5) &20.4 (12.2) &1.8 (2.5) &1.4 (1.2) &5.1 (10.5) &\textbf{1.0 (0.1)} & 5.8 & 3.2 & 3.0 & 2.9 \\
        Cerebellum &2.3 (0.6) &3.3 (0.4) &2.4 (0.6) &1.8 (0.7) &3.2 (1.7) &\textbf{1.8 (0.6)} & 4.3 & 5.1 & 3.7 & 4.8\\
        White matter Lesion & & & &4.6 (3.8) &11.0 (9.4) &\textbf{4.2 (3.6)} & 25.3 & 10.2 &13.3 & 7.2 \\
		\bottomrule
	\end{tabular}
	}
\end{table*}

\subsubsection{Similar acquisition protocol for the \texorpdfstring{\Tone}{T1} scans }
Despite the differences in scanners, all \Tone acquisitions across the datasets followed very similar protocols (MP-RAGE) (see Table~\ref{tab:DataCharacteristics}). Therefore, they were considered as following a similar distribution and the data was only pre-processed as follows.

\begin{itemize}
    \item \textbf{Skull stripping:} All the scans were skull-stripped using ROBEX \citep{ROBEX}.
    
    \item \textbf{Resampling:} All the scans in  $\mathbf{S_{control}}$ are resampled into the transversal direction with  slices of 3 mm thickness to obtain a similar spacing $1 \times 1 \times3$ mm\textsuperscript{3} in the datasets.
    
    \item \noindent\textbf{Intensity normalisation:} We used a zero-mean unit-variance normalisation in order to match the intensity distributions.
\end{itemize}

\subsubsection{Description of the compared models}
We considered three different models in our experiments.

\begin{itemize}
    % \item \textbf{Tissue segmentation model (\emph{Tissue})}: This task-specific model only performs tissue segmentation and is trained using the \Tone scans from the dataset with tissue annotations \textbf{S\textsubscript{control}}.
    % \item \textbf{Lesion segmentation model (\emph{Lesion})}: This task-specific model only performs lesion segmentation and is trained using the \Tone and FLAIR scans from the dataset with lesion annotations \textbf{S\textsubscript{lesion}}.
    \item \textbf{Pipeline model (\emph{Pipeline}):} This model corresponds to the combination of two task-specific models: 
    
    A \emph{Tissue} segmentation model that only performs tissue segmentation and is trained on the \Tone scans from the dataset with tissue annotations \textbf{S\textsubscript{control}}.
    
    A \emph{Lesion} model that only performs lesion segmentation and is trained using the \Tone and FLAIR scans from the dataset with lesion annotations \textbf{S\textsubscript{lesion}}.
    
    The two models are combined such that the predicted lesion mask has the priority over the predicted tissue mask. Consequently, the background of the \emph{Lesion} output is replaced by the \emph{Tissue} output.
    
    \item \textbf{Fully-supervised model (\emph{Fully-Sup}):} This joint model performs tissue and lesion segmentation and is trained using the \Tone and FLAIR scans from the small fully-annotated dataset  \textbf{S\textsubscript{fully}}.
    
    \item \textbf{Proposed joint model (\emph{jSTABL}):} Our proposed model for joint Segmentation of Tissues and Brain Lesions 
    %performs tissue and lesion segmentation and
    is trained using both the \Tone scans from \textbf{S\textsubscript{control}} with the tissue annotations and the \Tone and FLAIR scans from \textbf{S\textsubscript{lesion}} with the lesion annotations. 
    %The training procedure follows our method.
\end{itemize}

Each model used the architecture presented in Fig.~\ref{fig:network}. Consequently, the \emph{Pipeline} model has twice as many parameters as the other models. 

In this set of experiments, the skull-stripped images are first cropped to remove the blank spaces and then padded to size of $(144,192,48)$.

\subsubsection{Method for assessing the models}
The performance of the three models was evaluated on the three datasets using Dice Score and $95\%$ Hausdorff distance. On the \emph{control} data (OASIS1+ADNI2) and WMH, scores were computed on the testing splits, while on MRBrainS18, models were submitted to the challenge MRBrainS18.

For the \emph{control} data (OASIS1+ADNI2) and MRBrainS18, the full set of annotations allows a direct assessment of the tissue and the lesion segmentation performance. For WMH, only the \emph{lesion} annotations are  provided. In order to assess both the tissue and lesion segmentation on WMH, the lesions are filled as normal-appearing white matter on \Tone images using the method described in \citet{PRADOS2016376} and implemented in NiftySeg \citep{Cardoso:TMI:2015}. Then, GIF framework \citep{Cardoso:TMI:2015} was performed on the modified \Tone scans to obtain bronze standard tissue annotations. The tissue mask and lesion annotations were then merged by completing the non-lesion parts with the tissue mask. Finally, the model outputs are compared to the merged tissue and lesion masks. In the end, for each model and each dataset, we can assess the performance of tissue and lesion segmentation.

Given that participants to the MRBrainS18 challenge do not have access to the held-out evaluation data set and that the Jaccard score is not provided by the challenge organisers, only the Dice Similarity Coefficient (DSC) and 95th-percentile Hausdorff distance are reported for each class.

\subsubsection{Results}
The main results are shown in Table~\ref{tab:WhiteMatterLesion:Dice} for the Dice Similarly Coefficient and in Table~\ref{tab:WhiteMatterLesion:Haus} for the 95th-percentile Hausdorff distance. 

% Firstly, our proposed method (\emph{jSTABL}) achieves comparable performance to the single-task models on Neuromorphometrics (\emph{Tissue}).and on WMH (priority of \emph{Lesion} in \emph{Pipeline}). This suggests that learning from hetero-modal datasets via our method does not degrade the task-specific performance.

Firstly, our proposed method (\emph{jSTABL}) achieves comparable performance to the single-task models on the \emph{control} data (\emph{Tissue}) and on WMH (priority of \emph{Lesion} in \emph{Pipeline}). This suggests that learning from hetero-modal datasets via our method does not degrade the performance on the tasks characterising the task-specific datasets.

Secondly, \emph{jSTABL} slightly outperforms \emph{Pipeline} on segmenting the tissues in WMH for the two sets of metrics. This shows that the tissue knowledge learnt from \Tone scans has been well generalised to multi-modal scans. Although we could have expected that the presence of lesions would create perturbations for the \emph{Tissue} model, this latter model in fact ignores the lesions and mostly classifies them as white matter. Given that the white matter lesions are usually surrounded by white matter, the \emph{Pipeline} predictions are consequently not too degraded. However, some artefacts around the lesions in the \emph{Pipeline} outputs can be observed, in particular in the ventricles for patients with large lesions surrounding them. Fig.~\ref{fig:resultsWMH}(a) shows an example for which parts of the ventricles are classified as background. In contrast, we did not observe such artefacts with \emph{jSTABL} predictions.

\begin{figure}[tbp]
  \centering
  \includegraphics[width=\linewidth]{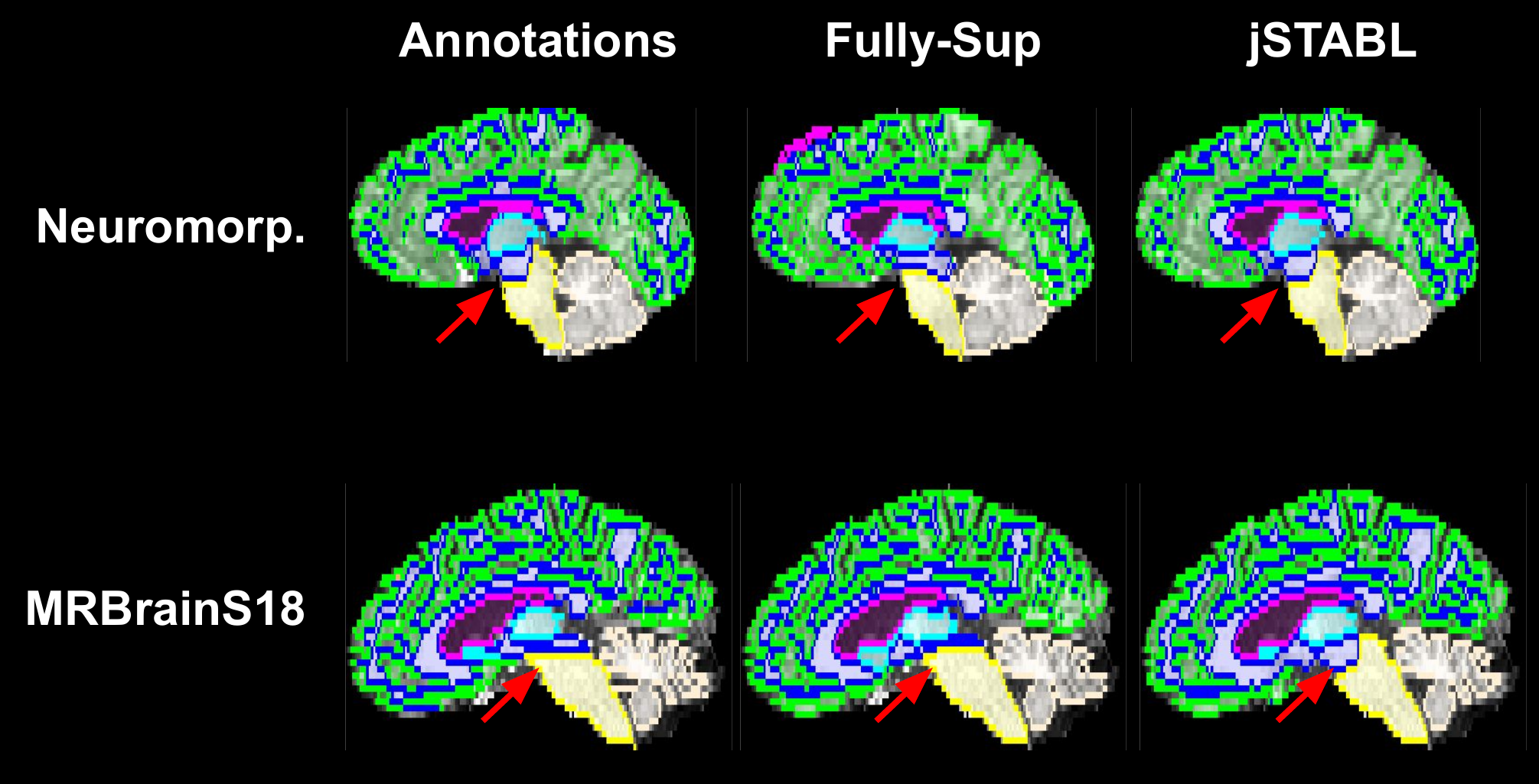}
  \caption{Comparison of the brainstem annotations by Neuromorphometrics and in MRBrainS18 and between the outputs of the \emph{Fully-Sup} and \emph{jSTABL} models. Arrows show the annotations protocol differences.}
  \label{fig:differences}
\end{figure}

% Thirdly, the comparison between the fully-supervised model \emph{Fully-Sup} trained on a small dataset and \emph{jSTABL} demonstrates the main advantage of using large existing datasets: Unlike the fully-supervised model (\emph{Fully-Sup}), \emph{jSTABL} generalises well on unseen data. While the fully-supervised model (\emph{Fully-Sup}) dramatically fails to segment scans from Neuromorphometrics and WMH, the \emph{jSTABL} model obtains relatively good performance on the three datasets for the tissue and lesion segmentation. In particular, \emph{jSTABL} outperforms SPM on 6 of the 7 classes. In fact, the only class that is significantly underperformed compared to the fully-supervised model (\emph{Fully-Sup}) is the brain stem. This is due to some differences in the protocol of annotations across Neurophometrics and MRBrainS datasets. Figure~\ref{fig:differences} shows these differences and the consequences on the prediction.

Thirdly, \emph{jSTABL} outperforms the fully-supervised model (\emph{Fully-Sup}) on the \emph{control} data (OASIS1+ADNI2) and WMH, while reaching comparable performance on MRBrainS18. This demonstrates the two main advantages of our method. First, without using any fully-annotated data, our model performs as well as a fully-supervised model that could be considered as an upper bound for our method, especially when the testing and training splits are from the same dataset (MRBrainS18). Secondly, our method takes advantage of large task-specific datasets: Unlike the fully-supervised model (\emph{Fully-Sup}), \emph{jSTABL} generalises well on unseen data (MRBrainS18). While the fully-supervised model (\emph{Fully-Sup}) fails to segment scans from OASIS1, ADNI2 and WMH, the \emph{jSTABL} model obtains relatively good performance on all the datasets  we use for tissue and lesion segmentation. In particular, \emph{jSTABL} outperforms SPM on 6 of the 7 classes. In fact, the only class that is significantly underperformed compared to the fully-supervised model (\emph{Fully-Sup}) is the brain stem. This is due to observed differences in the annotation protocol across the \emph{control} and MRBrainS datasets. Figure~\ref{fig:differences} shows these differences and the consequences on the prediction.

\begin{figure*}[tb!]
 % Caption and label go in the first argument and the figure contents
 % go in the second argument
  \centering
  \includegraphics[width=\linewidth]{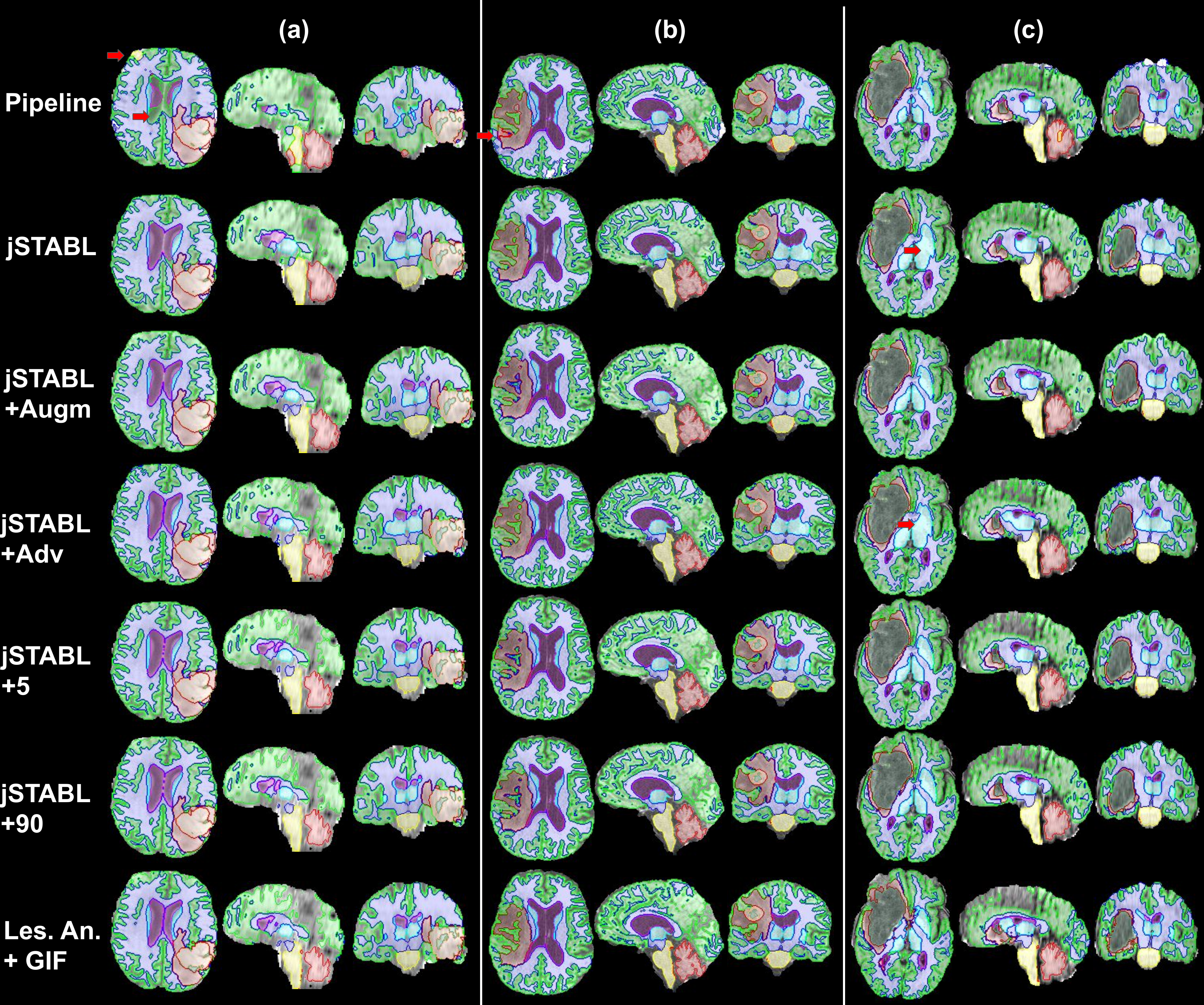}
  \caption{Examples of multi-modal outputs shown on \Tone scans from BraTS18 for the different models: \emph{Pipeline}, \emph{jSTABL}, \emph{jSTABL+Augm}, \emph{jSTABL+Adv}, \emph{jSTABL+5}, \emph{jSTABL+90}  and \emph{Les. Ann. + GIF} pipeline. Scans with different resolutions, contrasts and grades (High Grade for (a) and (b), Low Grade for (c)) are presented.}
  \label{fig:BraTS_prediction}
\end{figure*}

%%%
\subsection{Glioma and tissue segmentation}
\subsubsection{Task and datasets}
Additionally, we assess our framework on another main types of brain lesions: Gliomas. Our goal is to segment the 6 tissue classes and three tumour classes (whole tumour, core tumour, enhancing tumour). In this case, domain adaptation was required and its evaluation is the focus of this section.
%which allows us to specifically evaluate the performance of this component.
%
We used two sets of data in these experiments.

\begin{itemize}
    \item \textbf{Tissue data $\mathbf{S_{control}}$:} again we used OASIS1 data and the same 25 \Tone \emph{control} scans from ADNI2 with tissue annotations as presented in section \ref{experiments:WMH}.
    
    \item \textbf{Lesion data $\mathbf{S_{lesion}}$:} We evaluate our method on the training set of BraTS18 \citep{BraTS,DBLP:journals/corr/abs-1811-02629} which contains the scans of 285 patients, 210 with high grade glioma and 75 with low grade glioma. 129 patients have a tumour located in one hemisphere only. Four scans (\Tone, \Tonec, \Ttwo and FLAIR) have been acquired for each patient and pre-processed by the organisers: Co-registration, skull-stripping and re-sampling to an isotropic 1mm resolution. Manual annotations include three tumour labels: 1) Necrotic core and non-enhancing tumour; 2) oedema; and 3) enhancing core. 
\end{itemize}

The acquisition protocols of the \Tone scans in the two datasets are inconsistent. Specifically, MP-RAGE was used for the tissue data $\mathbf{S_{control}}$, while we observed other protocols such as fast spin echo (SE) for $\mathbf{S_{lesion}}$. Note that the detailled acquisition settings for $\mathbf{S_{lesion}}$ are not publicly available.

\subsubsection{Description of the compared models}
In order to evaluate our framework with and without the domain adaptation (DA) component, different models are considered, as presented in \ref{section:da}.

\begin{itemize}
    \item \textbf{Pipeline model (\emph{Pipeline})}:  This model corresponds to the combination of two task-specific models:     
    
    A \emph{Tissue} segmentation model that only performs tissue segmentation and is trained on the \Tone scans from the dataset with tissue annotations \textbf{S\textsubscript{control}}.
    
    A \emph{Lesion} model that only performs lesion segmentation and is trained using the \Tone, \Tonec, \Ttwo and FLAIR scans from the dataset with lesion annotations \textbf{S\textsubscript{lesion}}.

    \item \textbf{Proposed joint model without DA (\emph{jSTABL})}: Our joint Segmentation Tissue And Brain Lesion model is trained using our training procedure without domain adaptation, tissue segmentation is learned from the \Tone scans in $\mathbf{S_{control}}$.
    
    \item \textbf{jSTABL + data augmentation (\emph{jSTABL+Augm})}: Corresponds to our \emph{jSTABL} model with DA based on data augmentation.

    \item \textbf{jSTABL + adversarial DA (\emph{jSTABL+Adv})}: Corresponds to our \emph{jSTABL} model with DA based on adversarial learning.
    
    \item \textbf{jSTABL + 5 synthetic control scans (\emph{jSTABL+5})}: Corresponds to our \emph{jSTABL} model with only $5$ additional pseudo-healthy scans. 
    
    \item \textbf{jSTABL + 90 synthetic control scans  (\emph{jSTABL+90})}: Corresponds to our \emph{jSTABL} model with $90$ additional pseudo-healthy scans. 
\end{itemize}

Note that the pseudo-healthy scans used for training were generated from the training \emph{lesion} scans to avoid introducing bias at testing stage. 
 
In this set of experiments, the skull-stripped images are first cropped to remove the blank spaces and then random patches of size $(112,112,112)$ are fed to the network.

% \begin{table*}[htb!]
% 	\centering
% 	\caption{Evaluation of our framework (jSTABL) on ADNI1
% 	We report means and standard deviations for Dice scores. Metric were computed on the BraTS 2018 validation dataset.}
% 	%\resizebox{\textwidth}{!}{
% 	\begin{adjustbox}{width=1\textwidth}
% 	\begin{tabular}{cccccccc}
% 		\toprule
% 		\multirow{2}{*}{\bf Models} & \multicolumn{2}{c}{\bf no-DA} & \multicolumn{3}{c}{\bf Unsupervised-DA} & \multicolumn{2}{c}{\bf Supervised} \\
% 		%
% 		                \cmidrule(lr){2-3} \cmidrule(lr){4-6}  \cmidrule(lr){7-8}
% 		               &    Pipeline	    &     jSTABL    & jSTABL + Adv &  jSTABL + PC   &      jSTABL + Augm	    &     jSTABL + 5    & jSTABL + 90    \\
% 		\midrule
%         Grey Matter &93.0 (1.9) &92.6 (1.8) &92.7 (1.9) &90.7 (2.2) &92.4 (1.5) &92.4 (1.5) &92.7 (1.4) \\
%         %
%         White Matter &94.9 (1.5) &94.3 (1.9) &94.5 (1.8) &93.0 (2.4) &94.1 (1.3) &94.3 (1.8) &94.5 (1.7)  \\ 
%         %
%         Brainstem &95.1 (1.3) &95.0 (1.2) &95.1 (1.2) &94.2 (1.1) &95.0 (1.2) &94.8 (1.2) &94.8 (1.3) \\ 
%         %
%         Basal ganglia &91.3 (2.3) &91.0 (2.1) &91.2 (2.2) &88.9 (2.3) &91.5 (2.2) &91.5 (2.2) &91.5 (2.4) \\ 
%         %
%         Ventricules &92.8 (3.8) &92.6 (4.0) &92.6 (4.1) &91.4 (4.6) &92.8 (3.4) &92.7 (3.8) &92.6 (4.0)  \\ 
%         %
%         Cerebellum &96.4 (1.3) &96.2 (1.2) &96.5 (0.9) &95.4 (1.3) &96.7 (0.6) &96.4 (0.9) &96.5 (0.8)\\
%         %
% 		\bottomrule
% 	\end{tabular}
% 	%}
% 	\end{adjustbox}
% 	\label{tab:BraTS_results}
% \end{table*}

\begin{table*}[htb!]
	\centering
	\caption{Evaluation of our framework (jSTABL) on patients with gliomas in comparison to baseline methods.
	We report means and standard deviations for Dice scores. Metrics were computed on the BraTS 2018 validation dataset.}
	%\resizebox{\textwidth}{!}{
	\begin{adjustbox}{width=1\textwidth}
	\begin{tabular}{ccccccc}
		\toprule
		\multirow{2}{*}{\bf Models} & \multicolumn{2}{c}{\bf w/o DA} & \multicolumn{2}{c}{\bf w/ DA} & \multicolumn{2}{c}{\bf w/ pseudo-healthy gen.} \\ \cmidrule(lr){2-3} \cmidrule(lr){4-5}  \cmidrule(lr){6-7}  
		               &    Pipeline	    &     jSTABL    & jSTABL + Adv    &      jSTABL + Augm	    &     jSTABL + 5    & jSTABL + 90    \\
		\midrule
        Grey Matter &76.1 (8.4) &79.1 (5.2) &81.1 (4.6) &82.8 (4.7) &88.3 (3.9) &88.8 (4.0) \\
        White Matter &85.4 (5.7) &87.0 (4.4) &88.1 (4.4) &90.3 (2.8) &93.1 (2.5) &93.3 (2.6) \\
        Brainstem &81.5 (17.6) &92.4 (2.5) &92.6 (1.9) &92.4 (2.0) &94.9 (1.4) &95.5 (1.5) \\
        Basal Ganglia &72.7 (20.1) &73.1 (7.3) &77.7 (7.1) &84.7 (5.1) &89.7 (3.5) &90.5 (3.2) \\
        Ventricles &75.0 (26.9) &91.8 (6.4) &92.5 (4.5) &93.4 (4.4) &94.7 (4.2) &95.1 (3.8) \\
        Cerebellum &86.5 (11.3) &93.2 (4.6) &93.7 (3.5) &94.0 (2.4) &94.7 (4.5) &95.1 (2.9) \\
        \cmidrule(lr){1-1}
        Whole Tumour &87.9 (8.7) &88.1 (6.7) &87.7 (8.1) &88.3 (9.0) &88.2 (8.1) &88.1 (9.4) \\
        Core Tumour &78.6 (20.5) &79.1 (19.6) &79.5 (18.9) &80.4 (18.6) &80.5 (18.1) &80.9 (17.9) \\
        Enhancing Tumour &69.9 (29.1) &70.0 (28.6) &70.0 (28.9) &71.4 (27.7) &70.3 (28.8) &71.0 (28.3) \\
		\bottomrule
	\end{tabular}
	%}
	\end{adjustbox}
	\label{tab:BraTS_results}
\end{table*}

\subsubsection{Method for assessing the models}
While the evaluation of the tumour segmentation on BraTS18 and the tissue segmentation on the \emph{control} data (OASIS1+ADNI2) is straightforward using the manual annotations, the tissue segmentation performance cannot be assessed on BraTS18 due to the missing tissue annotations. For this reason we propose two methods to assess quantitatively and qualitatively the tissue segmentation on BraTS18.

\paragraph{Quantitative assessment using the symmetrised data}
Firstly, we propose to use the 129 patients from BraTS18 with a tumour located in one side to generate 129 pseudo-healthy symmetrised data with the bronze standard tissue annotations from GIF as ground truth. Examples are shown in Figure~\ref{fig:symetrized_brains}. By computing the Dice Score Coefficient between the predictions on the symmetrised BraTS18 data and the bronze standard ground truth, we quantitatively evaluate our model on the pseudo-healthy hemisphere of BraTS18 samples. 

\paragraph{Qualitative assessment on anatomical landmarks}
Secondly, in order to assess the accuracy of the models on the tissues surrounding the tumour, we propose a new qualitative protocol. This protocol is based on the Alberta Stroke Program Early CT Score (ASPECTS) which was originally proposed to assess early ischaemic cerebral changes on CT or MRI scans \citep{BARBER20001670}. The stroke scores are obtained by assessing the integrity of 10 anatomical landmarks as shown on \ref{fig:ASPECTS}. Scores and associated template are commonly used in clinical practice.

\begin{figure*}[!tbp]
  \centering
  \subfloat[Anatomy ASPECTS+ - mostly/highly accurate]{\includegraphics[width=0.5\textwidth]{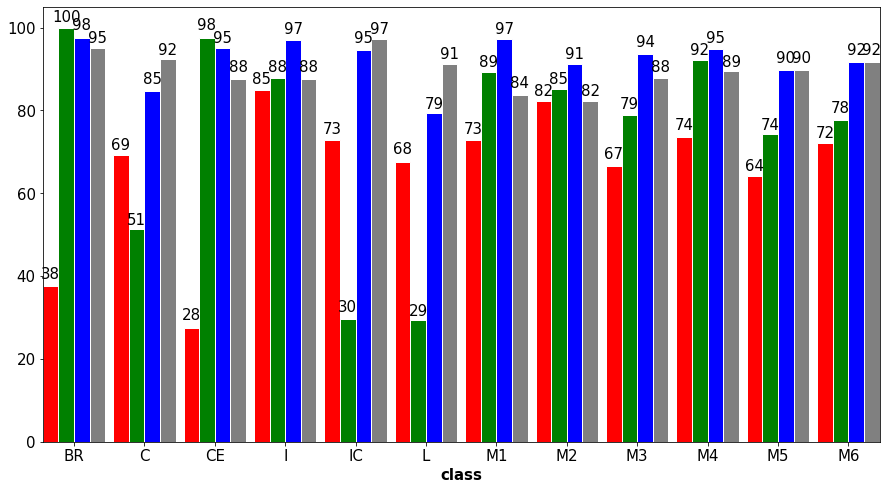}\label{fig:f1}}
  \hfill
  \subfloat[Anatomy ASPECTS+ - highly accurate]{\includegraphics[width=0.5\textwidth]{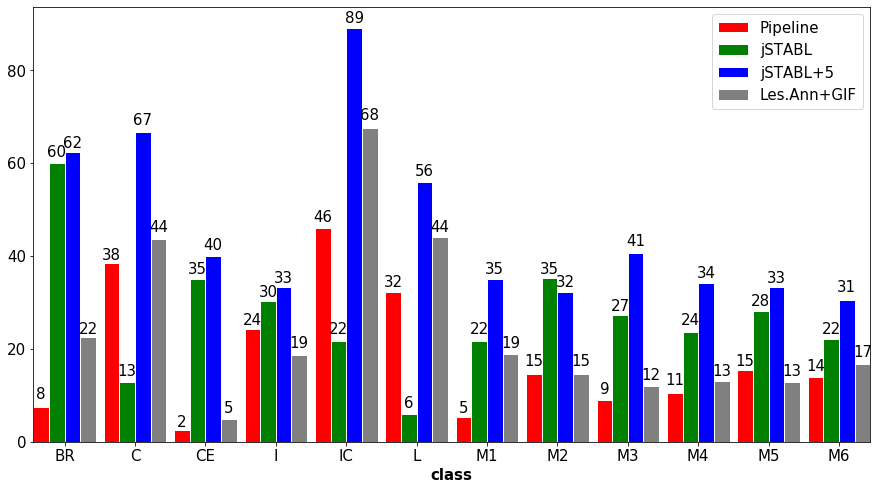}\label{fig:f2}}
  \caption{\textbf{Comparison of our method (jSTABL) with the GIF framework using the proposed Anatomy ASPECTS+ qualitative assessment methodology.} BR = Brainstem, C = Caudate, CE = Cerebellum, I = Insula, IC = Internal Capsule,  L = Lentiform Nucleus, M1 = Frontal operculum, M2 = Anterior temporal lobe, M3 = Posterior temporal lobe, M4 = Anterior MCA, M5 = Lateral MCA, M6 = Posterior MCA}
  \label{fig:ASPECTS_score}
\end{figure*}

\begin{figure}[bt!]
 % Caption and label go in the first argument and the figure contents
 % go in the second argument
  \includegraphics[width=\linewidth]{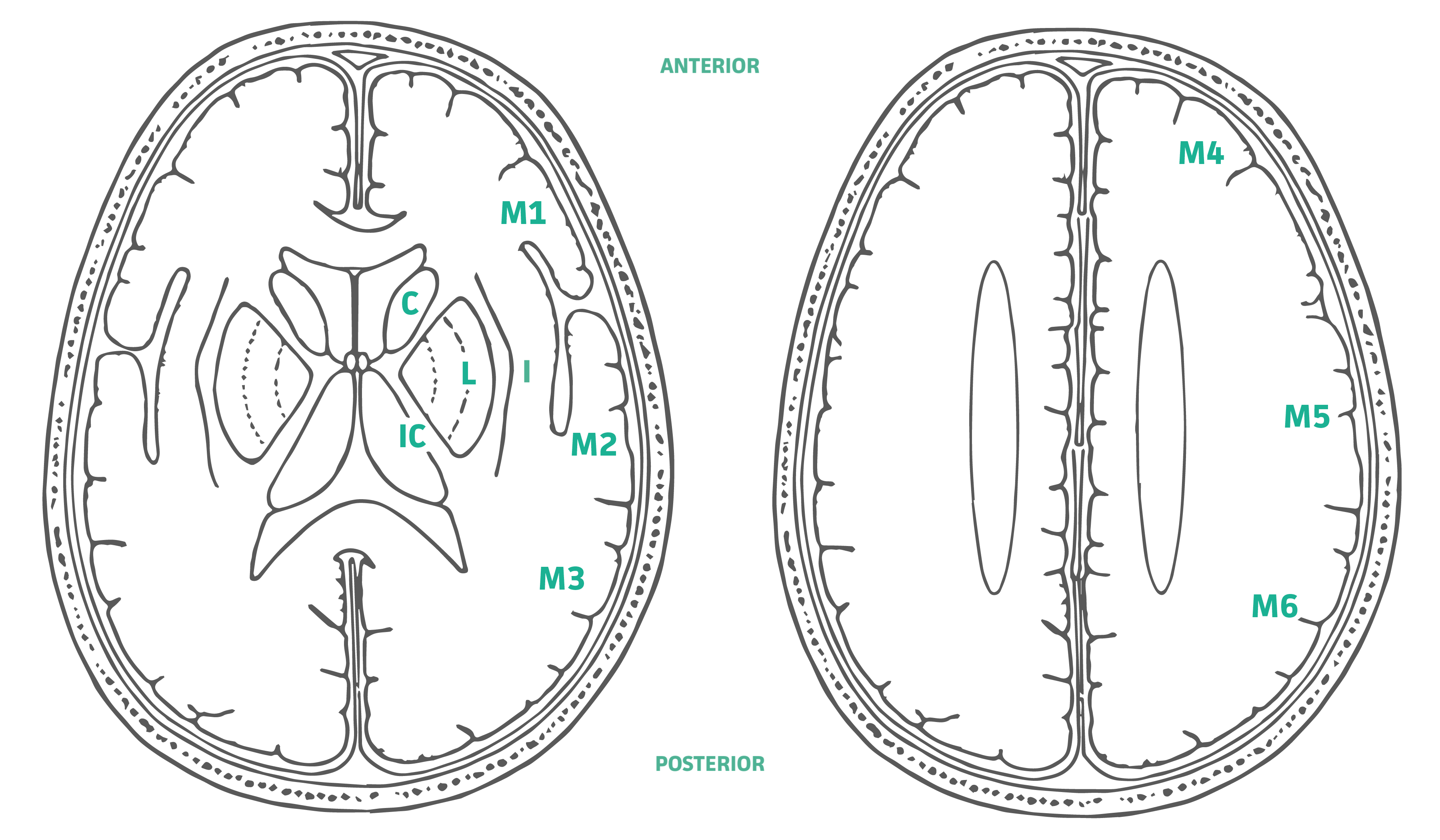}
  \caption{ASPECTS anatomical landmarks \citep{BARBER20001670} used in our qualitative assessment methodology in the absence of joint ground truth. C = Caudate, I = Insula, IC = Internal Capsule,  L = Lentiform Nucleus, M1 = Frontal operculum, M2 = Anterior temporal lobe, M3 = Posterior temporal lobe, M4 = Anterior MCA, M5 = Lateral MCA, M6 = Posterior MCA Illustration courtesy of P.A. Barber.}
  \label{fig:ASPECTS}
\end{figure}

The landmarks were chosen because they are easily identifiable, reliable amongst readers and capture a large cerebral coverage. The landmarks include or delineate our tissue classes of interest: Grey matter; white matter; basal ganglia; and ventricles. Instead of evaluating loss of clarity of landmarks due to ischemia we evaluated loss of clarity of landmarks due to incorrect tissue predictions. Unlike ASPECTS, which excludes infratentorial structures which are difficult to evaluate on CT, we added the brainstem and the cerebellum as two additional landmarks for a total of 12 anatomical landmarks. We named our assessment method Anatomy ASPECTS+. For each landmark, 3 scores are possible: 0 = anatomy inaccurate; 0.5 = anatomy mostly accurate; and 1 = anatomy highly accurate. Anatomical landmarks that were infiltrated with substantial tumour were excluded. 

For our experiments, we randomly drew 20 patients from the testing sets of BraTS18 and two senior neuro-radiologists independently evaluated the quality of the predictions using Anatomy ASPECTS+ for 4 methods: 1) \emph{Les. Ann. + GIF} pipeline that uses the tumour annotations and tissue segmentation obtained by GIF while the tumour is masked; 2) \emph{Pipeline}; 2) \emph{jSTABL}; 3)  \emph{jSTABL+5};  4)  \emph{jSTABL+10}.
The neuro-radiologists assessed blindly the 4 methods, in a randomised order, for the 20 patients.

\subsubsection{Results}
Table~\ref{tab:BraTS_results} shows the DSC for the 6 tissue classes and the three tumour classes on BraTs18.

Firstly, \emph{jSTABL} model outperforms \emph{Pipeline} on tissue segmentation. We observed that the presence of a large tumour creates major perturbations for the \emph{Tissue} model. For example, we found samples for which the tumour and the surrounding tissues were partially classified as cerebellum, even though the tumour was far from the cerebellum, as shown in Fig.~\ref{fig:BraTS_prediction}. In contrast, such artefacts were not observed for \emph{jSTABL} model, demonstrating again advantages of our method compared to a simpler \emph{Pipeline} approach.

Secondly, while obtaining relatively good performance on most of the tissue classes, \emph{jSTABL} model fails to segment correctly grey matter and basal ganglial. This highlights the needs for domain adaptation.

Thirdly, learning from pseudo-healthy annotated scans (\emph{jSTABL+5} and  \emph{jSTABL+90}) outperforms the other unsupervised DA strategies based either on data augmentation (\emph{jSTABL+Augm}) and adversarial learning (\emph{jSTABL+Adv}). This demonstrates the benefits of using a supervised approach for our problem. Moreover, only 5 pseudo-healthy annotated scans are required to obtain an accuracy similar to the one on the \emph{control} data (see Table~~\ref{tab:WhiteMatterLesion:Dice}), i.e. to bridge the domain gap. Figure~\ref{fig:BraTS_prediction} shows that learning from pseudo-healthy annotated scans allows the network to be robust to variations in resolution, contrast or glioma grade, even with few samples used for domain adaptation.

Finally, Anatomy ASPECTS+ is employed to provide a quantitative assessment of the segmentation of the tissues surrounding the tumour. Four models are compared by two neuro-radiologists: \emph{Pipeline}, \emph{jSTABL}, \emph{jSTABL+$5$} and the time-consuming \emph{Les. Ann. + GIF} pipeline that requires manual annotations of the lesions. Results are presented in Fig~\ref{fig:ASPECTS_score}. First, \emph{jSTABL} is more often ``mostly accurate" than the \emph{Pipeline} (mean score - $75\%$ vs $66\%$). Again, this highlights the strength of our joint model compared to a pipeline approach. Secondly, \emph{jSTABL+$5$} is more often ``highly accurate" than the \emph{Les. Ann. + GIF} pipeline (mean score - $46\%$ vs $24\%$). This shows that our fast and fully automatic method can be considered as a new state-of-the-art for performing joint tissue and lesion segmentation.

% is mostly accurate in $75\%$ of the cases, while \emph{Pipeline} is only accurate in $66\%$ of the times.

% that all the methods, except \emph{Pipeline}, are mostly accurate 

% alghough

% we observe that \emph{jSTABL} outperforms \emph{Pipeline}

% In more than $90\%$ of the cases, our proposed algorithms were mostly or highly accurate. Although these results are lower than the \emph{Les. Ann. + GIF} pipeline, they are promising. In particular, while the \emph{Les. Ann. + GIF} pipeline is not scalable in practice, our method is fast and fully-automatic.

% \begin{figure}[tb!]
%  % Caption and label go in the first argument and the figure contents
%  % go in the second argument
% \floatconts
%   {fig:differences}
%   {\caption{Annotation protocol comparison between scans from (a) Neuromorphometrics and (b) MRBrainS18. (1) sagital slice from test images volumes, (2) manual annotations, (3) outputs from our method \emph{(W+N)}, (4) outputs from fully-supervised model \emph{(N)}. Arrows show the protocol differences.}}
%   {\includegraphics[width=\linewidth]{comparatif}}
% \end{figure} 

\section{Discussion}

In this section, we discuss some of the limitations of the different methods. 

Firstly, a common modality across the task-specific dataset is required to transfer the knowledge learn between the task-specific sets of modality. Without this common modality, the upper bound is not tractable anymore and our method cannot be applied.

Secondly, our approach relies on a simple hetero-modal architecture that aims to encode modalities in a common shared feature space. Yet, averaging the feature maps doesn't enforce the network to learn a shared feature representation. To tackle this problem, a hetero-modal variational auto-encoder architecture has been recently introduced \cite{Dorent:MICCAI:2019}. Based on a principled formulation of the problem, the induced loss function is the cross-entropy, which does not satisfy the triangle inequality. Consequently, without further research, this approach cannot be directly integrated in the formulation of our problem. 

Thirdly, we found that the presence of lesions does not always perturbed a network trained on \emph{control} data, especially for small lesions. Consequently, our method didn't always show large improvements compared to a simpler \textit{Pipeline} approach on WMH. However, we observed that \textit{Pipeline} can be perturbed by larger pathology and thus is less robust.
%However, we emphasise that, for risk management purposes, relying on an unpredictable behaviour should be avoided in clinical applications.

\section{Conclusion}
This work addresses the challenge of learning a joint brain tissue and lesion segmentation with disjoint heterogeneous annotations. Our novel approach is mathematically grounded, conceptually simple, and relies on reasonable assumptions. 

The main contribution of this work is to overcome the challenge of the lack of fully-annotated data for joint problems. We demonstrate that a model trained on databases providing either the tissue or the lesion annotations and with different modalities can achieve similar performance to a model trained on a fully-annotated joint dataset. Our work also shows that the knowledge learnt from one modality can be preserved when more modalities are used as input. Finally, domain adaptation for image segmentation can be performed with a small set of data related to the target distribution.

In the future, we will evaluate our approach on new datasets with other lesions.
Furthermore, we would like to extend our method to include the full parcellation of the brain (143 structures). Finally, we plan to integrate uncertainty measures in our framework as a future work.
As one of the first work to methodologically address the problem of joint learning from hetero-modal and domain-shifted datasets, we believe that our approach will help DNN make further impact in clinical scenarios.

\section*{Acknowledgements}%
This work was supported by the Wellcome Trust [203148/Z/16/Z] and the Engineering and Physical Sciences Research Council (EPSRC) [NS/A000049/1]. TV is supported by a Medtronic / Royal Academy of Engineering Research Chair [RCSRF1819\textbackslash7\textbackslash34].
CS is supported by an the Alzheimer's Society Junior Fellowship [AS-JF-17-011].
We thank Prof. Alexander Hammers, Samuel Joutard and Lucas Fidon for their useful comments.
Some of the data used in preparation of this article was obtained from the Alzheimer’s Disease Neuroimaging Initiative (ADNI) database (\url{adni.loni.usc.edu}). As such, the investigators within the ADNI contributed to the design and implementation of ADNI and/or provided data but did not participate in analysis or writing of this report. A complete listing of ADNI investigators can be found at \url{http://adni.loni.usc.edu/wp-content/uploads/how_to_apply/ADNI_Acknowledgement_List.pdf}

\bibliographystyle{model2-names.bst}\biboptions{authoryear}
\bibliography{refs}

\begin{thebibliography}{71}
\expandafter\ifx\csname natexlab\endcsname\relax\def\natexlab#1{#1}\fi
\providecommand{\url}[1]{\texttt{#1}}
\providecommand{\href}[2]{#2}
\providecommand{\path}[1]{#1}
\providecommand{\DOIprefix}{doi:}
\providecommand{\ArXivprefix}{arXiv:}
\providecommand{\URLprefix}{URL: }
\providecommand{\Pubmedprefix}{pmid:}
\providecommand{\doi}[1]{\href{http://dx.doi.org/#1}{\path{#1}}}
\providecommand{\Pubmed}[1]{\href{pmid:#1}{\path{#1}}}
\providecommand{\bibinfo}[2]{#2}
\ifx\xfnm\relax \def\xfnm[#1]{\unskip,\space#1}\fi
%Type = Article
\bibitem[{Ashburner and Friston(2000)}]{Ashburner:NeuroImage:2000}
\bibinfo{author}{Ashburner, J.}, \bibinfo{author}{Friston, K.J.},
  \bibinfo{year}{2000}.
\newblock \bibinfo{title}{Voxel-based morphometry -- the methods}.
\newblock \bibinfo{journal}{{Neuroimage}} \bibinfo{volume}{11},
  \bibinfo{pages}{805--821}.
%Type = Misc
\bibitem[{Bakas et~al.()Bakas, Reyes and
  et~al.}]{DBLP:journals/corr/abs-1811-02629}
\bibinfo{author}{Bakas, S.}, \bibinfo{author}{Reyes, M.},
  \bibinfo{author}{et~al.}, .
\newblock \bibinfo{title}{Identifying the best machine learning algorithms for
  brain tumor segmentation, progression assessment, and overall survival
  prediction in the {BRATS} challenge}.
\newblock \href{http://arxiv.org/abs/1811.02629}{\tt arXiv:1811.02629}.
%Type = Article
\bibitem[{Barber et~al.(2000)Barber, Demchuk, Zhang and
  Buchan}]{BARBER20001670}
\bibinfo{author}{Barber, P.A.}, \bibinfo{author}{Demchuk, A.M.},
  \bibinfo{author}{Zhang, J.}, \bibinfo{author}{Buchan, A.M.},
  \bibinfo{year}{2000}.
\newblock \bibinfo{title}{Validity and reliability of a quantitative computed
  tomography score in predicting outcome of hyperacute stroke before
  thrombolytic therapy}.
\newblock \bibinfo{journal}{The Lancet} \bibinfo{volume}{355},
  \bibinfo{pages}{1670 -- 1674}.
%Type = Article
\bibitem[{Battaglini et~al.(2012)Battaglini, Jenkinson and
  De~Stefano}]{Battaglini:HBM:2012}
\bibinfo{author}{Battaglini, M.}, \bibinfo{author}{Jenkinson, M.},
  \bibinfo{author}{De~Stefano, N.}, \bibinfo{year}{2012}.
\newblock \bibinfo{title}{Evaluating and reducing the impact of white matter
  lesions on brain volume measurements}.
\newblock \bibinfo{journal}{Human Brain Mapping} \bibinfo{volume}{33},
  \bibinfo{pages}{2062--2071}.
%Type = Article
\bibitem[{Ben-David et~al.(2010)Ben-David, Blitzer, Crammer, Kulesza, Pereira
  and Vaughan}]{Ben-David2010}
\bibinfo{author}{Ben-David, S.}, \bibinfo{author}{Blitzer, J.},
  \bibinfo{author}{Crammer, K.}, \bibinfo{author}{Kulesza, A.},
  \bibinfo{author}{Pereira, F.}, \bibinfo{author}{Vaughan, J.W.},
  \bibinfo{year}{2010}.
\newblock \bibinfo{title}{A theory of learning from different domains}.
\newblock \bibinfo{journal}{Machine Learning} \bibinfo{volume}{79},
  \bibinfo{pages}{151--175}.
%Type = Article
\bibitem[{Bermel and Bakshi(2006)}]{BERMEL2006158}
\bibinfo{author}{Bermel, R.A.}, \bibinfo{author}{Bakshi, R.},
  \bibinfo{year}{2006}.
\newblock \bibinfo{title}{The measurement and clinical relevance of brain
  atrophy in multiple sclerosis}.
\newblock \bibinfo{journal}{The Lancet Neurology} \bibinfo{volume}{5},
  \bibinfo{pages}{158 -- 170}.
%Type = Inproceedings
\bibitem[{Bilen and Vedaldi(2016)}]{Bilen_2016_CVPR}
\bibinfo{author}{Bilen, H.}, \bibinfo{author}{Vedaldi, A.},
  \bibinfo{year}{2016}.
\newblock \bibinfo{title}{Weakly supervised deep detection networks}, in:
  \bibinfo{booktitle}{The IEEE Conference on Computer Vision and Pattern
  Recognition (CVPR)}, pp. \bibinfo{pages}{2846--2854}.
%Type = Article
\bibitem[{Bitar et~al.(2006)Bitar, Leung, Perng, Tadros, Moody, Sarrazin,
  McGregor, Christakis, Symons, Nelson and Roberts}]{doi:10.1148/rg.262055063}
\bibinfo{author}{Bitar, R.}, \bibinfo{author}{Leung, G.},
  \bibinfo{author}{Perng, R.}, \bibinfo{author}{Tadros, S.},
  \bibinfo{author}{Moody, A.R.}, \bibinfo{author}{Sarrazin, J.},
  \bibinfo{author}{McGregor, C.}, \bibinfo{author}{Christakis, M.},
  \bibinfo{author}{Symons, S.}, \bibinfo{author}{Nelson, A.},
  \bibinfo{author}{Roberts, T.P.}, \bibinfo{year}{2006}.
\newblock \bibinfo{title}{{MR} pulse sequences: What every radiologist wants to
  know but is afraid to ask}.
\newblock \bibinfo{journal}{RadioGraphics} \bibinfo{volume}{26},
  \bibinfo{pages}{513--537}.
%Type = Article
\bibitem[{Bottou et~al.(2018)Bottou, Curtis and Nocedal}]{Bottou:SIAMRev:2018}
\bibinfo{author}{Bottou, L.}, \bibinfo{author}{Curtis, F.E.},
  \bibinfo{author}{Nocedal, J.}, \bibinfo{year}{2018}.
\newblock \bibinfo{title}{Optimization methods for large-scale machine
  learning}.
\newblock \bibinfo{journal}{{SIAM} Review} \bibinfo{volume}{60},
  \bibinfo{pages}{223--311}.
%Type = Inproceedings
\bibitem[{Bragman et~al.(2018)Bragman, Tanno, Eaton-Rosen, Li, Hawkes,
  Ourselin, Alexander, McClelland and Cardoso}]{Bragman:MICCAI:2018}
\bibinfo{author}{Bragman, F.J.S.}, \bibinfo{author}{Tanno, R.},
  \bibinfo{author}{Eaton-Rosen, Z.}, \bibinfo{author}{Li, W.},
  \bibinfo{author}{Hawkes, D.J.}, \bibinfo{author}{Ourselin, S.},
  \bibinfo{author}{Alexander, D.C.}, \bibinfo{author}{McClelland, J.R.},
  \bibinfo{author}{Cardoso, M.J.}, \bibinfo{year}{2018}.
\newblock \bibinfo{title}{Uncertainty in multitask learning: Joint
  representations for probabilistic {MR}-only radiotherapy planning}, in:
  \bibinfo{booktitle}{Proceedings of the 21st International Conference on
  Medical Image Computing and Computer Assisted Intervention ({MICCAI}'18)},
  pp. \bibinfo{pages}{3--11}.
%Type = Article
\bibitem[{Cardoso et~al.(2015)Cardoso, Modat, Wolz, Melbourne, Cash, Rueckert
  and Ourselin}]{Cardoso:TMI:2015}
\bibinfo{author}{Cardoso, M.J.}, \bibinfo{author}{Modat, M.},
  \bibinfo{author}{Wolz, R.}, \bibinfo{author}{Melbourne, A.},
  \bibinfo{author}{Cash, D.}, \bibinfo{author}{Rueckert, D.},
  \bibinfo{author}{Ourselin, S.}, \bibinfo{year}{2015}.
\newblock \bibinfo{title}{Geodesic information flows: Spatially-variant graphs
  and their application to segmentation and fusion}.
\newblock \bibinfo{journal}{{IEEE} Transactions on Medical Imaging}
  \bibinfo{volume}{34}, \bibinfo{pages}{1976--1988}.
%Type = Inproceedings
\bibitem[{Chen et~al.(2019)Chen, Bortsova, Garc{\'i}a-Uceda~Ju{\'a}rez, van
  Tulder and de~Bruijne}]{MTL:MICCAI:2019}
\bibinfo{author}{Chen, S.}, \bibinfo{author}{Bortsova, G.},
  \bibinfo{author}{Garc{\'i}a-Uceda~Ju{\'a}rez, A.}, \bibinfo{author}{van
  Tulder, G.}, \bibinfo{author}{de~Bruijne, M.}, \bibinfo{year}{2019}.
\newblock \bibinfo{title}{Multi-task attention-based semi-supervised learning
  for medical image segmentation}, in: \bibinfo{editor}{Shen, D.},
  \bibinfo{editor}{Liu, T.}, \bibinfo{editor}{Peters, T.M.},
  \bibinfo{editor}{Staib, L.H.}, \bibinfo{editor}{Essert, C.},
  \bibinfo{editor}{Zhou, S.}, \bibinfo{editor}{Yap, P.T.},
  \bibinfo{editor}{Khan, A.} (Eds.), \bibinfo{booktitle}{Medical Image
  Computing and Computer Assisted Intervention -- MICCAI 2019},
  \bibinfo{publisher}{Springer International Publishing},
  \bibinfo{address}{Cham}. pp. \bibinfo{pages}{457--465}.
%Type = Article
\bibitem[{Chen and Konukoglu(2018)}]{DetectionLesionUnsupervised}
\bibinfo{author}{Chen, X.}, \bibinfo{author}{Konukoglu, E.},
  \bibinfo{year}{2018}.
\newblock \bibinfo{title}{Unsupervised detection of lesions in brain {MRI}
  using constrained adversarial auto-encoders}.
\newblock \bibinfo{journal}{CoRR} \bibinfo{volume}{abs/1806.04972}.
\newblock \href{http://arxiv.org/abs/1806.04972}{\tt arXiv:1806.04972}.
%Type = Inproceedings
\bibitem[{{\c{C}}i{\c{c}}ek et~al.(2016a){\c{C}}i{\c{c}}ek, Abdulkadir,
  Lienkamp, Brox and Ronneberger}]{Unet}
\bibinfo{author}{{\c{C}}i{\c{c}}ek, {\"O}.}, \bibinfo{author}{Abdulkadir, A.},
  \bibinfo{author}{Lienkamp, S.S.}, \bibinfo{author}{Brox, T.},
  \bibinfo{author}{Ronneberger, O.}, \bibinfo{year}{2016}a.
\newblock \bibinfo{title}{3d u-net: Learning dense volumetric segmentation from
  sparse annotation}, in: \bibinfo{editor}{Ourselin, S.},
  \bibinfo{editor}{Joskowicz, L.}, \bibinfo{editor}{Sabuncu, M.R.},
  \bibinfo{editor}{Unal, G.}, \bibinfo{editor}{Wells, W.} (Eds.),
  \bibinfo{booktitle}{Medical Image Computing and Computer-Assisted
  Intervention -- MICCAI 2016}, \bibinfo{publisher}{Springer International
  Publishing}, \bibinfo{address}{Cham}. pp. \bibinfo{pages}{424--432}.
%Type = Inproceedings
\bibitem[{{\c{C}}i{\c{c}}ek et~al.(2016b){\c{C}}i{\c{c}}ek, Abdulkadir,
  Lienkamp, Brox and Ronneberger}]{3DUnet}
\bibinfo{author}{{\c{C}}i{\c{c}}ek, {\"O}.}, \bibinfo{author}{Abdulkadir, A.},
  \bibinfo{author}{Lienkamp, S.S.}, \bibinfo{author}{Brox, T.},
  \bibinfo{author}{Ronneberger, O.}, \bibinfo{year}{2016}b.
\newblock \bibinfo{title}{3d u-net: Learning dense volumetric segmentation from
  sparse annotation}, in: \bibinfo{editor}{Ourselin, S.},
  \bibinfo{editor}{Joskowicz, L.}, \bibinfo{editor}{Sabuncu, M.R.},
  \bibinfo{editor}{Unal, G.}, \bibinfo{editor}{Wells, W.} (Eds.),
  \bibinfo{booktitle}{Medical Image Computing and Computer-Assisted
  Intervention -- MICCAI 2016}, \bibinfo{publisher}{Springer International
  Publishing}, \bibinfo{address}{Cham}. pp. \bibinfo{pages}{424--432}.
%Type = Incollection
\bibitem[{Csurka(2017)}]{Csurka:DACVARevChap:2017}
\bibinfo{author}{Csurka, G.}, \bibinfo{year}{2017}.
\newblock \bibinfo{title}{A comprehensive survey on domain adaptation for
  visual applications}, in: \bibinfo{booktitle}{Domain Adaptation in Computer
  Vision Applications}. \bibinfo{publisher}{Springer International Publishing},
  pp. \bibinfo{pages}{1--35}.
%Type = Inproceedings
\bibitem[{Dorent et~al.(2019a)Dorent, Joutard, Modat, Ourselin and
  Vercauteren}]{Dorent:MICCAI:2019}
\bibinfo{author}{Dorent, R.}, \bibinfo{author}{Joutard, S.},
  \bibinfo{author}{Modat, M.}, \bibinfo{author}{Ourselin, S.},
  \bibinfo{author}{Vercauteren, T.}, \bibinfo{year}{2019}a.
\newblock \bibinfo{title}{Hetero-modal variational encoder-decoder for joint
  modality completion and segmentation}, in: \bibinfo{booktitle}{Proceedings of
  the 22nd International Conference on Medical Image Computing and Computer
  Assisted Intervention ({MICCAI}'19)}, p. \bibinfo{pages}{In press}.
%Type = Inproceedings
\bibitem[{Dorent et~al.(2019b)Dorent, Li, Ekanayake, Ourselin and
  Vercauteren}]{pmlr-v102-dorent19a}
\bibinfo{author}{Dorent, R.}, \bibinfo{author}{Li, W.},
  \bibinfo{author}{Ekanayake, J.}, \bibinfo{author}{Ourselin, S.},
  \bibinfo{author}{Vercauteren, T.}, \bibinfo{year}{2019}b.
\newblock \bibinfo{title}{Learning joint lesion and tissue segmentation from
  task-specific hetero-modal datasets}, in: \bibinfo{booktitle}{Proceedings of
  The 2nd International Conference on Medical Imaging with Deep Learning},
  \bibinfo{publisher}{PMLR}, \bibinfo{address}{London, United Kingdom}. pp.
  \bibinfo{pages}{164--174}.
%Type = Inproceedings
\bibitem[{Dou et~al.(2018)Dou, Ouyang, Chen, Chen and
  Heng}]{Unsupervised-Cross-Modality}
\bibinfo{author}{Dou, Q.}, \bibinfo{author}{Ouyang, C.}, \bibinfo{author}{Chen,
  C.}, \bibinfo{author}{Chen, H.}, \bibinfo{author}{Heng, P.A.},
  \bibinfo{year}{2018}.
\newblock \bibinfo{title}{Unsupervised cross-modality domain adaptation of
  convnets for biomedical image segmentations with adversarial loss}, in:
  \bibinfo{booktitle}{Proceedings of the Twenty-Seventh International Joint
  Conference on Artificial Intelligence, {IJCAI-18}},
  \bibinfo{publisher}{International Joint Conferences on Artificial
  Intelligence Organization}. pp. \bibinfo{pages}{691--697}.
%Type = Article
\bibitem[{Dwyer et~al.(2018)Dwyer, Bergsland, Ramasamy, Jakimovski,
  Weinstock-Guttman and Zivadinov}]{doi:10.1111/jon.12527}
\bibinfo{author}{Dwyer, M.G.}, \bibinfo{author}{Bergsland, N.},
  \bibinfo{author}{Ramasamy, D.P.}, \bibinfo{author}{Jakimovski, D.},
  \bibinfo{author}{Weinstock-Guttman, B.}, \bibinfo{author}{Zivadinov, R.},
  \bibinfo{year}{2018}.
\newblock \bibinfo{title}{Atrophied brain lesion volume: A new imaging
  biomarker in multiple sclerosis}.
\newblock \bibinfo{journal}{Journal of Neuroimaging} \bibinfo{volume}{28},
  \bibinfo{pages}{490--495}.
%Type = Article
\bibitem[{Fisniku et~al.(2008)Fisniku, Chard, Jackson, Anderson, Altmann,
  Miszkiel, Thompson and Miller}]{doi:10.1002/ana.21423}
\bibinfo{author}{Fisniku, L.K.}, \bibinfo{author}{Chard, D.T.},
  \bibinfo{author}{Jackson, J.S.}, \bibinfo{author}{Anderson, V.M.},
  \bibinfo{author}{Altmann, D.R.}, \bibinfo{author}{Miszkiel, K.A.},
  \bibinfo{author}{Thompson, A.J.}, \bibinfo{author}{Miller, D.H.},
  \bibinfo{year}{2008}.
\newblock \bibinfo{title}{Gray matter atrophy is related to long-term
  disability in multiple sclerosis}.
\newblock \bibinfo{journal}{Annals of Neurology} \bibinfo{volume}{64},
  \bibinfo{pages}{247--254}.
%Type = Article
\bibitem[{Geurts et~al.(2012)Geurts, Calabrese, Fisher and
  Rudick}]{GEURTS20121082}
\bibinfo{author}{Geurts, J.J.}, \bibinfo{author}{Calabrese, M.},
  \bibinfo{author}{Fisher, E.}, \bibinfo{author}{Rudick, R.A.},
  \bibinfo{year}{2012}.
\newblock \bibinfo{title}{Measurement and clinical effect of grey matter
  pathology in multiple sclerosis}.
\newblock \bibinfo{journal}{The Lancet Neurology} \bibinfo{volume}{11},
  \bibinfo{pages}{1082 -- 1092}.
%Type = Article
\bibitem[{Giorgio and De~Stefano(2013)}]{doi:10.1002/jmri.23671}
\bibinfo{author}{Giorgio, A.}, \bibinfo{author}{De~Stefano, N.},
  \bibinfo{year}{2013}.
\newblock \bibinfo{title}{Clinical use of brain volumetry}.
\newblock \bibinfo{journal}{Journal of Magnetic Resonance Imaging}
  \bibinfo{volume}{37}, \bibinfo{pages}{1--14}.
%Type = Inproceedings
\bibitem[{Havaei et~al.(2016)Havaei, Guizard, Chapados and
  Bengio}]{Havaei:MICCAI:2016}
\bibinfo{author}{Havaei, M.}, \bibinfo{author}{Guizard, N.},
  \bibinfo{author}{Chapados, N.}, \bibinfo{author}{Bengio, Y.},
  \bibinfo{year}{2016}.
\newblock \bibinfo{title}{{HeMIS}: Hetero-modal image segmentation}, in:
  \bibinfo{booktitle}{Proceedings of the 19th International Conference on
  Medical Image Computing and Computer Assisted Intervention ({MICCAI}'16)},
  pp. \bibinfo{pages}{469--477}.
%Type = Inproceedings
\bibitem[{He et~al.(2015)He, Zhang, Ren and Sun}]{He:ICCV:2015}
\bibinfo{author}{He, K.}, \bibinfo{author}{Zhang, X.}, \bibinfo{author}{Ren,
  S.}, \bibinfo{author}{Sun, J.}, \bibinfo{year}{2015}.
\newblock \bibinfo{title}{Delving deep into rectifiers: Surpassing human-level
  performance on {ImageNet} classification}, in:
  \bibinfo{booktitle}{Proceedings of the 25th International Conference on
  Computer Vision ({ICCV}'15)}, pp. \bibinfo{pages}{1026--1034}.
%Type = Article
\bibitem[{{Iglesias} et~al.(2011){Iglesias}, {Liu}, {Thompson} and
  {Tu}}]{ROBEX}
\bibinfo{author}{{Iglesias}, J.E.}, \bibinfo{author}{{Liu}, C.},
  \bibinfo{author}{{Thompson}, P.M.}, \bibinfo{author}{{Tu}, Z.},
  \bibinfo{year}{2011}.
\newblock \bibinfo{title}{Robust brain extraction across datasets and
  comparison with publicly available methods}.
\newblock \bibinfo{journal}{IEEE Transactions on Medical Imaging}
  \bibinfo{volume}{30}, \bibinfo{pages}{1617--1634}.
%Type = Article
\bibitem[{Iglesias and Sabuncu(2015)}]{Iglesias:MedIA:2015}
\bibinfo{author}{Iglesias, J.E.}, \bibinfo{author}{Sabuncu, M.R.},
  \bibinfo{year}{2015}.
\newblock \bibinfo{title}{Multi-atlas segmentation of biomedical images: a
  survey}.
\newblock \bibinfo{journal}{Medical Image Analysis} \bibinfo{volume}{24},
  \bibinfo{pages}{205--219}.
%Type = Article
\bibitem[{Isensee et~al.(2019)Isensee, Petersen, Kohl, J{\"{a}}ger and
  Maier{-}Hein}]{nnU-Net}
\bibinfo{author}{Isensee, F.}, \bibinfo{author}{Petersen, J.},
  \bibinfo{author}{Kohl, S.A.A.}, \bibinfo{author}{J{\"{a}}ger, P.F.},
  \bibinfo{author}{Maier{-}Hein, K.H.}, \bibinfo{year}{2019}.
\newblock \bibinfo{title}{nnu-net: Breaking the spell on successful medical
  image segmentation}.
\newblock \bibinfo{journal}{CoRR} \bibinfo{volume}{abs/1904.08128}.
\newblock \href{http://arxiv.org/abs/1904.08128}{\tt arXiv:1904.08128}.
%Type = Article
\bibitem[{Jack~Jr. et~al.(2008)Jack~Jr., Bernstein, Fox, Thompson, Alexander,
  Harvey, Borowski, Britson, L.~Whitwell, Ward, Dale, Felmlee, Gunter, Hill,
  Killiany, Schuff, Fox-Bosetti, Lin, Studholme, DeCarli, Krueger, Ward,
  Metzger, Scott, Mallozzi, Blezek, Levy, Debbins, Fleisher, Albert, Green,
  Bartzokis, Glover, Mugler and Weiner}]{doi:10.1002/jmri.21049}
\bibinfo{author}{Jack~Jr., C.R.}, \bibinfo{author}{Bernstein, M.A.},
  \bibinfo{author}{Fox, N.C.}, \bibinfo{author}{Thompson, P.},
  \bibinfo{author}{Alexander, G.}, \bibinfo{author}{Harvey, D.},
  \bibinfo{author}{Borowski, B.}, \bibinfo{author}{Britson, P.J.},
  \bibinfo{author}{L.~Whitwell, J.}, \bibinfo{author}{Ward, C.},
  \bibinfo{author}{Dale, A.M.}, \bibinfo{author}{Felmlee, J.P.},
  \bibinfo{author}{Gunter, J.L.}, \bibinfo{author}{Hill, D.L.},
  \bibinfo{author}{Killiany, R.}, \bibinfo{author}{Schuff, N.},
  \bibinfo{author}{Fox-Bosetti, S.}, \bibinfo{author}{Lin, C.},
  \bibinfo{author}{Studholme, C.}, \bibinfo{author}{DeCarli, C.S.},
  \bibinfo{author}{Krueger, G.}, \bibinfo{author}{Ward, H.A.},
  \bibinfo{author}{Metzger, G.J.}, \bibinfo{author}{Scott, K.T.},
  \bibinfo{author}{Mallozzi, R.}, \bibinfo{author}{Blezek, D.},
  \bibinfo{author}{Levy, J.}, \bibinfo{author}{Debbins, J.P.},
  \bibinfo{author}{Fleisher, A.S.}, \bibinfo{author}{Albert, M.},
  \bibinfo{author}{Green, R.}, \bibinfo{author}{Bartzokis, G.},
  \bibinfo{author}{Glover, G.}, \bibinfo{author}{Mugler, J.},
  \bibinfo{author}{Weiner, M.W.}, \bibinfo{year}{2008}.
\newblock \bibinfo{title}{The alzheimer's disease neuroimaging initiative
  (adni): Mri methods}.
\newblock \bibinfo{journal}{Journal of Magnetic Resonance Imaging}
  \bibinfo{volume}{27}, \bibinfo{pages}{685--691}.
%Type = Article
\bibitem[{Jenkinson et~al.(2012)Jenkinson, Beckmann, Behrens, Woolrich and
  Smith}]{Jenkinson:NeuroImage:2012}
\bibinfo{author}{Jenkinson, M.}, \bibinfo{author}{Beckmann, C.F.},
  \bibinfo{author}{Behrens, T.E.}, \bibinfo{author}{Woolrich, M.W.},
  \bibinfo{author}{Smith, S.M.}, \bibinfo{year}{2012}.
\newblock \bibinfo{title}{{FSL}}.
\newblock \bibinfo{journal}{{Neuroimage}} \bibinfo{volume}{62},
  \bibinfo{pages}{782--790}.
%Type = Inproceedings
\bibitem[{Kamnitsas et~al.(2017)Kamnitsas, Baumgartner, Ledig, Newcombe,
  Simpson, Kane, Menon, Nori, Criminisi, Rueckert
  et~al.}]{Kamnitsas:MICCAI:2017}
\bibinfo{author}{Kamnitsas, K.}, \bibinfo{author}{Baumgartner, C.},
  \bibinfo{author}{Ledig, C.}, \bibinfo{author}{Newcombe, V.},
  \bibinfo{author}{Simpson, J.}, \bibinfo{author}{Kane, A.},
  \bibinfo{author}{Menon, D.}, \bibinfo{author}{Nori, A.},
  \bibinfo{author}{Criminisi, A.}, \bibinfo{author}{Rueckert, D.}, et~al.,
  \bibinfo{year}{2017}.
\newblock \bibinfo{title}{Unsupervised domain adaptation in brain lesion
  segmentation with adversarial networks}, in: \bibinfo{booktitle}{Proceedings
  of the 20th International Conference on Medical Image Computing and Computer
  Assisted Intervention ({MICCAI}'17)}, pp. \bibinfo{pages}{597--609}.
%Type = Inproceedings
\bibitem[{Kendall and Gal(2017)}]{Kendall:NIPS:2017}
\bibinfo{author}{Kendall, A.}, \bibinfo{author}{Gal, Y.}, \bibinfo{year}{2017}.
\newblock \bibinfo{title}{What uncertainties do we need in {Bayesian} deep
  learning for computer vision?}, in: \bibinfo{booktitle}{Proceedings of
  Advances in Neural Information Processing Systems 30 (NIPS 2017)}, pp.
  \bibinfo{pages}{5574--5584}.
%Type = Inproceedings
\bibitem[{Kim et~al.(2018)Kim, Choi, Oh, Yoon and Kweon}]{Kim:WACV:2018}
\bibinfo{author}{Kim, D.J.}, \bibinfo{author}{Choi, J.}, \bibinfo{author}{Oh,
  T.H.}, \bibinfo{author}{Yoon, Y.}, \bibinfo{author}{Kweon, I.S.},
  \bibinfo{year}{2018}.
\newblock \bibinfo{title}{Disjoint multi-task learning between heterogeneous
  human-centric tasks}, in: \bibinfo{booktitle}{Proceedings of the 2018 {IEEE}
  Winter Conference on Applications of Computer Vision ({WACV}) (2018)}, pp.
  \bibinfo{pages}{1699--1708}.
%Type = Inproceedings
\bibitem[{Kingma and Ba(2015)}]{Kingma:ICLR:2015}
\bibinfo{author}{Kingma, D.P.}, \bibinfo{author}{Ba, J.}, \bibinfo{year}{2015}.
\newblock \bibinfo{title}{{Adam}: {A} method for stochastic optimization}, in:
  \bibinfo{booktitle}{3rd International Conference on Learning Representations,
  {ICLR} 2015}, pp. \bibinfo{pages}{1--15}.
\newblock \href{http://arxiv.org/abs/1412.6980}{\tt arXiv:1412.6980}.
%Type = Article
\bibitem[{Kosub(2018)}]{Kosub:PRL:2018}
\bibinfo{author}{Kosub, S.}, \bibinfo{year}{2018}.
\newblock \bibinfo{title}{A note on the triangle inequality for the {Jaccard}
  distance}.
\newblock \bibinfo{journal}{Pattern Recognition Letters} .
%Type = Manual
\bibitem[{Kuijf and Bennink(2018)}]{MRBrainS18}
\bibinfo{author}{Kuijf, H.J.}, \bibinfo{author}{Bennink, E.},
  \bibinfo{year}{2018}.
\newblock \bibinfo{title}{MRBrainS18}.
\newblock \bibinfo{note}{\url{http://mrbrains18.isi.uu.nl/}}.
%Type = Article
\bibitem[{{Kuijf} et~al.(2019){Kuijf}, {Biesbroek}, {de Bresser}, {Heinen},
  {Andermatt}, {Bento}, {Berseth}, {Belyaev}, {Cardoso}, {Casamitjana},
  {Collins}, {Dadar}, {Georgiou}, {Ghafoorian}, {Jin}, {Khademi}, {Knight},
  {Li}, {Lladó}, {Luna}, {Mahmood}, {McKinley}, {Mehrtash}, {Ourselin},
  {Park}, {Park}, {Park}, {Pezold}, {Puybareau}, {Rittner}, {Sudre},
  {Valverde}, {Vilaplana}, {Wiest}, {Xu}, {Xu}, {Zeng}, {Zhang}, {Zheng},
  {Chen}, {van der Flier}, {Barkhof}, {Viergever} and {Biessels}}]{8669968}
\bibinfo{author}{{Kuijf}, H.J.}, \bibinfo{author}{{Biesbroek}, J.M.},
  \bibinfo{author}{{de Bresser}, J.}, \bibinfo{author}{{Heinen}, R.},
  \bibinfo{author}{{Andermatt}, S.}, \bibinfo{author}{{Bento}, M.},
  \bibinfo{author}{{Berseth}, M.}, \bibinfo{author}{{Belyaev}, M.},
  \bibinfo{author}{{Cardoso}, M.J.}, \bibinfo{author}{{Casamitjana}, A.},
  \bibinfo{author}{{Collins}, D.L.}, \bibinfo{author}{{Dadar}, M.},
  \bibinfo{author}{{Georgiou}, A.}, \bibinfo{author}{{Ghafoorian}, M.},
  \bibinfo{author}{{Jin}, D.}, \bibinfo{author}{{Khademi}, A.},
  \bibinfo{author}{{Knight}, J.}, \bibinfo{author}{{Li}, H.},
  \bibinfo{author}{{Lladó}, X.}, \bibinfo{author}{{Luna}, M.},
  \bibinfo{author}{{Mahmood}, Q.}, \bibinfo{author}{{McKinley}, R.},
  \bibinfo{author}{{Mehrtash}, A.}, \bibinfo{author}{{Ourselin}, S.},
  \bibinfo{author}{{Park}, B.}, \bibinfo{author}{{Park}, H.},
  \bibinfo{author}{{Park}, S.H.}, \bibinfo{author}{{Pezold}, S.},
  \bibinfo{author}{{Puybareau}, E.}, \bibinfo{author}{{Rittner}, L.},
  \bibinfo{author}{{Sudre}, C.H.}, \bibinfo{author}{{Valverde}, S.},
  \bibinfo{author}{{Vilaplana}, V.}, \bibinfo{author}{{Wiest}, R.},
  \bibinfo{author}{{Xu}, Y.}, \bibinfo{author}{{Xu}, Z.},
  \bibinfo{author}{{Zeng}, G.}, \bibinfo{author}{{Zhang}, J.},
  \bibinfo{author}{{Zheng}, G.}, \bibinfo{author}{{Chen}, C.},
  \bibinfo{author}{{van der Flier}, W.}, \bibinfo{author}{{Barkhof}, F.},
  \bibinfo{author}{{Viergever}, M.A.}, \bibinfo{author}{{Biessels}, G.J.},
  \bibinfo{year}{2019}.
\newblock \bibinfo{title}{Standardized assessment of automatic segmentation of
  white matter hyperintensities; {R}esults of the {WMH} segmentation
  challenge}.
\newblock \bibinfo{journal}{IEEE Transactions on Medical Imaging} .
%Type = Inproceedings
\bibitem[{Le et~al.(2019)Le, Thome, Bernard, Bismuth and
  Patoureaux}]{MTL:MIDL19:Le}
\bibinfo{author}{Le, T.L.T.}, \bibinfo{author}{Thome, N.},
  \bibinfo{author}{Bernard, S.}, \bibinfo{author}{Bismuth, V.},
  \bibinfo{author}{Patoureaux, F.}, \bibinfo{year}{2019}.
\newblock \bibinfo{title}{Multitask classification and segmentation for cancer
  diagnosis in mammography}, in: \bibinfo{editor}{Cardoso, M.J.},
  \bibinfo{editor}{Feragen, A.}, \bibinfo{editor}{Glocker, B.},
  \bibinfo{editor}{Konukoglu, E.}, \bibinfo{editor}{Oguz, I.},
  \bibinfo{editor}{Unal, G.}, \bibinfo{editor}{Vercauteren, T.} (Eds.),
  \bibinfo{booktitle}{Proceedings of The 2nd International Conference on
  Medical Imaging with Deep Learning -- Extended abstract track}, pp.
  \bibinfo{pages}{1--4}.
%Type = Inproceedings
\bibitem[{Li et~al.(2017)Li, Wang, Fidon, Ourselin, Cardoso and
  Vercauteren}]{Li:IPMI:2017}
\bibinfo{author}{Li, W.}, \bibinfo{author}{Wang, G.}, \bibinfo{author}{Fidon,
  L.}, \bibinfo{author}{Ourselin, S.}, \bibinfo{author}{Cardoso, M.J.},
  \bibinfo{author}{Vercauteren, T.}, \bibinfo{year}{2017}.
\newblock \bibinfo{title}{On the compactness, efficiency, and representation of
  {3D} convolutional networks: Brain parcellation as a pretext task}, in:
  \bibinfo{booktitle}{Proceedings of Information Processing in Medical Imaging
  ({IPMI}'17)}, pp. \bibinfo{pages}{348--360}.
%Type = Article
\bibitem[{Li and Hoiem(2017)}]{Li:PAMI:2017}
\bibinfo{author}{Li, Z.}, \bibinfo{author}{Hoiem, D.}, \bibinfo{year}{2017}.
\newblock \bibinfo{title}{Learning without forgetting}.
\newblock \bibinfo{journal}{{IEEE} Transactions on Pattern Analysis and Machine
  Intelligence} \bibinfo{volume}{40}, \bibinfo{pages}{2935--2947}.
%Type = Inproceedings
\bibitem[{Long et~al.(2015)Long, Cao, Wang and Jordan}]{MMD2}
\bibinfo{author}{Long, M.}, \bibinfo{author}{Cao, Y.}, \bibinfo{author}{Wang,
  J.}, \bibinfo{author}{Jordan, M.}, \bibinfo{year}{2015}.
\newblock \bibinfo{title}{Learning transferable features with deep adaptation
  networks}, in: \bibinfo{editor}{Bach, F.}, \bibinfo{editor}{Blei, D.} (Eds.),
  \bibinfo{booktitle}{Proceedings of the 32nd International Conference on
  Machine Learning}, \bibinfo{publisher}{PMLR}, \bibinfo{address}{Lille,
  France}. pp. \bibinfo{pages}{97--105}.
%Type = Incollection
\bibitem[{Long et~al.(2018)Long, CAO, Wang and Jordan}]{NIPS2018_7436}
\bibinfo{author}{Long, M.}, \bibinfo{author}{CAO, Z.}, \bibinfo{author}{Wang,
  J.}, \bibinfo{author}{Jordan, M.I.}, \bibinfo{year}{2018}.
\newblock \bibinfo{title}{Conditional adversarial domain adaptation}, in:
  \bibinfo{editor}{Bengio, S.}, \bibinfo{editor}{Wallach, H.},
  \bibinfo{editor}{Larochelle, H.}, \bibinfo{editor}{Grauman, K.},
  \bibinfo{editor}{Cesa-Bianchi, N.}, \bibinfo{editor}{Garnett, R.} (Eds.),
  \bibinfo{booktitle}{Advances in Neural Information Processing Systems 31}.
  \bibinfo{publisher}{Curran Associates, Inc.}, pp.
  \bibinfo{pages}{1640--1650}.
%Type = Article
\bibitem[{Maillard et~al.(2013)Maillard, Carmichael, Harvey, Fletcher, Reed,
  Mungas and DeCarli}]{Maillard54}
\bibinfo{author}{Maillard, P.}, \bibinfo{author}{Carmichael, O.},
  \bibinfo{author}{Harvey, D.}, \bibinfo{author}{Fletcher, E.},
  \bibinfo{author}{Reed, B.}, \bibinfo{author}{Mungas, D.},
  \bibinfo{author}{DeCarli, C.}, \bibinfo{year}{2013}.
\newblock \bibinfo{title}{{FLAIR} and diffusion {MRI} signals are independent
  predictors of white matter hyperintensities}.
\newblock \bibinfo{journal}{American Journal of Neuroradiology}
  \bibinfo{volume}{34}, \bibinfo{pages}{54--61}.
%Type = Article
\bibitem[{Marcus et~al.(2007)Marcus, Wang, Parker, Csernansky, Morris and
  Buckner}]{doi:10.1162/jocn.2007.19.9.1498}
\bibinfo{author}{Marcus, D.S.}, \bibinfo{author}{Wang, T.H.},
  \bibinfo{author}{Parker, J.}, \bibinfo{author}{Csernansky, J.G.},
  \bibinfo{author}{Morris, J.C.}, \bibinfo{author}{Buckner, R.L.},
  \bibinfo{year}{2007}.
\newblock \bibinfo{title}{Open access series of imaging studies ({OASIS}):
  Cross-sectional {MRI} data in young, middle aged, nondemented, and demented
  older adults}.
\newblock \bibinfo{journal}{Journal of Cognitive Neuroscience}
  \bibinfo{volume}{19}, \bibinfo{pages}{1498--1507}.
%Type = Article
\bibitem[{{Menze} et~al.(2015){Menze}, {Jakab}, {Bauer}, {Kalpathy-Cramer},
  {Farahani}, {Kirby}, {Burren}, {Porz}, {Slotboom}, {Wiest}, {Lanczi},
  {Gerstner}, {Weber}, {Arbel}, {Avants}, {Ayache}, {Buendia}, {Collins},
  {Cordier}, {Corso}, {Criminisi}, {Das}, {Delingette}, {Demiralp}, {Durst},
  {Dojat}, {Doyle}, {Festa}, {Forbes}, {Geremia}, {Glocker}, {Golland}, {Guo},
  {Hamamci}, {Iftekharuddin}, {Jena}, {John}, {Konukoglu}, {Lashkari}, {Mariz},
  {Meier}, {Pereira}, {Precup}, {Price}, {Raviv}, {Reza}, {Ryan}, {Sarikaya},
  {Schwartz}, {Shin}, {Shotton}, {Silva}, {Sousa}, {Subbanna}, {Szekely},
  {Taylor}, {Thomas}, {Tustison}, {Unal}, {Vasseur}, {Wintermark}, {Ye},
  {Zhao}, {Zhao}, {Zikic}, {Prastawa}, {Reyes} and {Van Leemput}}]{BraTS}
\bibinfo{author}{{Menze}, B.H.}, \bibinfo{author}{{Jakab}, A.},
  \bibinfo{author}{{Bauer}, S.}, \bibinfo{author}{{Kalpathy-Cramer}, J.},
  \bibinfo{author}{{Farahani}, K.}, \bibinfo{author}{{Kirby}, J.},
  \bibinfo{author}{{Burren}, Y.}, \bibinfo{author}{{Porz}, N.},
  \bibinfo{author}{{Slotboom}, J.}, \bibinfo{author}{{Wiest}, R.},
  \bibinfo{author}{{Lanczi}, L.}, \bibinfo{author}{{Gerstner}, E.},
  \bibinfo{author}{{Weber}, M.}, \bibinfo{author}{{Arbel}, T.},
  \bibinfo{author}{{Avants}, B.B.}, \bibinfo{author}{{Ayache}, N.},
  \bibinfo{author}{{Buendia}, P.}, \bibinfo{author}{{Collins}, D.L.},
  \bibinfo{author}{{Cordier}, N.}, \bibinfo{author}{{Corso}, J.J.},
  \bibinfo{author}{{Criminisi}, A.}, \bibinfo{author}{{Das}, T.},
  \bibinfo{author}{{Delingette}, H.}, \bibinfo{author}{{Demiralp}, {\c C}.},
  \bibinfo{author}{{Durst}, C.R.}, \bibinfo{author}{{Dojat}, M.},
  \bibinfo{author}{{Doyle}, S.}, \bibinfo{author}{{Festa}, J.},
  \bibinfo{author}{{Forbes}, F.}, \bibinfo{author}{{Geremia}, E.},
  \bibinfo{author}{{Glocker}, B.}, \bibinfo{author}{{Golland}, P.},
  \bibinfo{author}{{Guo}, X.}, \bibinfo{author}{{Hamamci}, A.},
  \bibinfo{author}{{Iftekharuddin}, K.M.}, \bibinfo{author}{{Jena}, R.},
  \bibinfo{author}{{John}, N.M.}, \bibinfo{author}{{Konukoglu}, E.},
  \bibinfo{author}{{Lashkari}, D.}, \bibinfo{author}{{Mariz}, J.A.},
  \bibinfo{author}{{Meier}, R.}, \bibinfo{author}{{Pereira}, S.},
  \bibinfo{author}{{Precup}, D.}, \bibinfo{author}{{Price}, S.J.},
  \bibinfo{author}{{Raviv}, T.R.}, \bibinfo{author}{{Reza}, S.M.S.},
  \bibinfo{author}{{Ryan}, M.}, \bibinfo{author}{{Sarikaya}, D.},
  \bibinfo{author}{{Schwartz}, L.}, \bibinfo{author}{{Shin}, H.},
  \bibinfo{author}{{Shotton}, J.}, \bibinfo{author}{{Silva}, C.A.},
  \bibinfo{author}{{Sousa}, N.}, \bibinfo{author}{{Subbanna}, N.K.},
  \bibinfo{author}{{Szekely}, G.}, \bibinfo{author}{{Taylor}, T.J.},
  \bibinfo{author}{{Thomas}, O.M.}, \bibinfo{author}{{Tustison}, N.J.},
  \bibinfo{author}{{Unal}, G.}, \bibinfo{author}{{Vasseur}, F.},
  \bibinfo{author}{{Wintermark}, M.}, \bibinfo{author}{{Ye}, D.H.},
  \bibinfo{author}{{Zhao}, L.}, \bibinfo{author}{{Zhao}, B.},
  \bibinfo{author}{{Zikic}, D.}, \bibinfo{author}{{Prastawa}, M.},
  \bibinfo{author}{{Reyes}, M.}, \bibinfo{author}{{Van Leemput}, K.},
  \bibinfo{year}{2015}.
\newblock \bibinfo{title}{The multimodal brain tumor image segmentation
  benchmark ({BRATS})}.
\newblock \bibinfo{journal}{IEEE Transactions on Medical Imaging}
  \bibinfo{volume}{34}, \bibinfo{pages}{1993--2024}.
%Type = Article
\bibitem[{Moeskops et~al.(2018)Moeskops, de~Bresser, Kuijf, Mendrik, Biessels,
  Pluim and Isgum}]{MOESKOPS2018251}
\bibinfo{author}{Moeskops, P.}, \bibinfo{author}{de~Bresser, J.},
  \bibinfo{author}{Kuijf, H.J.}, \bibinfo{author}{Mendrik, A.M.},
  \bibinfo{author}{Biessels, G.J.}, \bibinfo{author}{Pluim, J.P.},
  \bibinfo{author}{Isgum, I.}, \bibinfo{year}{2018}.
\newblock \bibinfo{title}{Evaluation of a deep learning approach for the
  segmentation of brain tissues and white matter hyperintensities of presumed
  vascular origin in {MRI}}.
\newblock \bibinfo{journal}{NeuroImage: Clinical} \bibinfo{volume}{17},
  \bibinfo{pages}{251 -- 262}.
%Type = Inproceedings
\bibitem[{Moeskops et~al.(2016)Moeskops, Wolterink, van~der Velden, Gilhuijs,
  Leiner, Viergever and I{\v{s}}gum}]{MTL:MICCAI:2016}
\bibinfo{author}{Moeskops, P.}, \bibinfo{author}{Wolterink, J.M.},
  \bibinfo{author}{van~der Velden, B.H.M.}, \bibinfo{author}{Gilhuijs, K.G.A.},
  \bibinfo{author}{Leiner, T.}, \bibinfo{author}{Viergever, M.A.},
  \bibinfo{author}{I{\v{s}}gum, I.}, \bibinfo{year}{2016}.
\newblock \bibinfo{title}{Deep learning for multi-task medical image
  segmentation in multiple modalities}, in: \bibinfo{editor}{Ourselin, S.},
  \bibinfo{editor}{Joskowicz, L.}, \bibinfo{editor}{Sabuncu, M.R.},
  \bibinfo{editor}{Unal, G.}, \bibinfo{editor}{Wells, W.} (Eds.),
  \bibinfo{booktitle}{Medical Image Computing and Computer-Assisted
  Intervention -- MICCAI 2016}, \bibinfo{publisher}{Springer International
  Publishing}, \bibinfo{address}{Cham}. pp. \bibinfo{pages}{478--486}.
%Type = Inproceedings
\bibitem[{Oquab et~al.(2015)Oquab, Bottou, Laptev and Sivic}]{Oquab_2015_CVPR}
\bibinfo{author}{Oquab, M.}, \bibinfo{author}{Bottou, L.},
  \bibinfo{author}{Laptev, I.}, \bibinfo{author}{Sivic, J.},
  \bibinfo{year}{2015}.
\newblock \bibinfo{title}{Is object localization for free? -
  {W}eakly-supervised learning with convolutional neural networks}, in:
  \bibinfo{booktitle}{The IEEE Conference on Computer Vision and Pattern
  Recognition (CVPR)}, pp. \bibinfo{pages}{685--694}.
%Type = Inproceedings
\bibitem[{Orbes-Arteaga et~al.(2019)Orbes-Arteaga, Varsavsky, Sudre,
  Eaton-Rosen, Haddow, S{\o}rensen, Nielsen, Pai, Ourselin, Modat, Nachev and
  Cardoso}]{OrbesArteaga2019MultiDomainAI}
\bibinfo{author}{Orbes-Arteaga, M.}, \bibinfo{author}{Varsavsky, T.},
  \bibinfo{author}{Sudre, C.H.}, \bibinfo{author}{Eaton-Rosen, Z.},
  \bibinfo{author}{Haddow, L.J.}, \bibinfo{author}{S{\o}rensen, L.},
  \bibinfo{author}{Nielsen, M.}, \bibinfo{author}{Pai, A.},
  \bibinfo{author}{Ourselin, S.}, \bibinfo{author}{Modat, M.},
  \bibinfo{author}{Nachev, P.}, \bibinfo{author}{Cardoso, M.J.},
  \bibinfo{year}{2019}.
\newblock \bibinfo{title}{Multi-domain adaptation in brain mri through paired
  consistency and adversarial learning}, in:
  \bibinfo{booktitle}{DART/MIL3ID@MICCAI}, pp. \bibinfo{pages}{54--62}.
%Type = Article
\bibitem[{{Pan} et~al.(2011){Pan}, {Tsang}, {Kwok} and {Yang}}]{MMD1}
\bibinfo{author}{{Pan}, S.J.}, \bibinfo{author}{{Tsang}, I.W.},
  \bibinfo{author}{{Kwok}, J.T.}, \bibinfo{author}{{Yang}, Q.},
  \bibinfo{year}{2011}.
\newblock \bibinfo{title}{Domain adaptation via transfer component analysis}.
\newblock \bibinfo{journal}{IEEE Transactions on Neural Networks}
  \bibinfo{volume}{22}, \bibinfo{pages}{199--210}.
%Type = Article
\bibitem[{P{\'e}rez-Garc{\'i}a et~al.(2020)P{\'e}rez-Garc{\'i}a, Sparks and
  Ourselin}]{fern2020torchio}
\bibinfo{author}{P{\'e}rez-Garc{\'i}a, F.}, \bibinfo{author}{Sparks, R.},
  \bibinfo{author}{Ourselin, S.}, \bibinfo{year}{2020}.
\newblock \bibinfo{title}{{TorchIO}: a {Python} library for efficient loading,
  preprocessing, augmentation and patch-based sampling of medical images in
  deep learning}.
\newblock \bibinfo{journal}{arXiv:2003.04696} .
%Type = Article
\bibitem[{Popescu et~al.(2013)Popescu, Agosta, Hulst, Sluimer, Knol, Sormani,
  Enzinger, Ropele, Alonso, Sastre-Garriga, Rovira, Montalban, Bodini,
  Ciccarelli, Khaleeli, Chard, Matthews, Palace, Giorgio, De~Stefano, Eisele,
  Gass, Polman, Uitdehaag, Messina, Comi, Filippi, Barkhof and
  Vrenken}]{Popescu1082}
\bibinfo{author}{Popescu, V.}, \bibinfo{author}{Agosta, F.},
  \bibinfo{author}{Hulst, H.E.}, \bibinfo{author}{Sluimer, I.C.},
  \bibinfo{author}{Knol, D.L.}, \bibinfo{author}{Sormani, M.P.},
  \bibinfo{author}{Enzinger, C.}, \bibinfo{author}{Ropele, S.},
  \bibinfo{author}{Alonso, J.}, \bibinfo{author}{Sastre-Garriga, J.},
  \bibinfo{author}{Rovira, A.}, \bibinfo{author}{Montalban, X.},
  \bibinfo{author}{Bodini, B.}, \bibinfo{author}{Ciccarelli, O.},
  \bibinfo{author}{Khaleeli, Z.}, \bibinfo{author}{Chard, D.T.},
  \bibinfo{author}{Matthews, L.}, \bibinfo{author}{Palace, J.},
  \bibinfo{author}{Giorgio, A.}, \bibinfo{author}{De~Stefano, N.},
  \bibinfo{author}{Eisele, P.}, \bibinfo{author}{Gass, A.},
  \bibinfo{author}{Polman, C.H.}, \bibinfo{author}{Uitdehaag, B.M.J.},
  \bibinfo{author}{Messina, M.J.}, \bibinfo{author}{Comi, G.},
  \bibinfo{author}{Filippi, M.}, \bibinfo{author}{Barkhof, F.},
  \bibinfo{author}{Vrenken, H.}, \bibinfo{year}{2013}.
\newblock \bibinfo{title}{Brain atrophy and lesion load predict long term
  disability in multiple sclerosis}.
\newblock \bibinfo{journal}{Journal of Neurology, Neurosurgery \& Psychiatry}
  \bibinfo{volume}{84}, \bibinfo{pages}{1082--1091}.
\newblock \DOIprefix\doi{10.1136/jnnp-2012-304094}.
%Type = Article
\bibitem[{Prados et~al.(2016)Prados, Cardoso, Kanber, Ciccarelli, Kapoor,
  Wheeler-Kingshott and Ourselin}]{PRADOS2016376}
\bibinfo{author}{Prados, F.}, \bibinfo{author}{Cardoso, M.J.},
  \bibinfo{author}{Kanber, B.}, \bibinfo{author}{Ciccarelli, O.},
  \bibinfo{author}{Kapoor, R.}, \bibinfo{author}{Wheeler-Kingshott, C.A.G.},
  \bibinfo{author}{Ourselin, S.}, \bibinfo{year}{2016}.
\newblock \bibinfo{title}{A multi-time-point modality-agnostic patch-based
  method for lesion filling in multiple sclerosis}.
\newblock \bibinfo{journal}{NeuroImage} \bibinfo{volume}{139},
  \bibinfo{pages}{376 -- 384}.
%Type = Article
\bibitem[{Prima et~al.(2002)Prima, Ourselin and Ayache}]{993131}
\bibinfo{author}{Prima, S.}, \bibinfo{author}{Ourselin, S.},
  \bibinfo{author}{Ayache, N.}, \bibinfo{year}{2002}.
\newblock \bibinfo{title}{Computation of the mid-sagittal plane in {3-D} brain
  images}.
\newblock \bibinfo{journal}{IEEE Transactions on Medical Imaging}
  \bibinfo{volume}{21}, \bibinfo{pages}{122--138}.
%Type = Inproceedings
\bibitem[{Roulet et~al.(2019)Roulet, Slezak and Ferrante}]{pmlr-v102-roulet19a}
\bibinfo{author}{Roulet, N.}, \bibinfo{author}{Slezak, D.F.},
  \bibinfo{author}{Ferrante, E.}, \bibinfo{year}{2019}.
\newblock \bibinfo{title}{Joint learning of brain lesion and anatomy
  segmentation from heterogeneous datasets}, in:
  \bibinfo{booktitle}{Proceedings of The 2nd International Conference on
  Medical Imaging with Deep Learning}, \bibinfo{publisher}{PMLR}. pp.
  \bibinfo{pages}{401--413}.
%Type = Article
\bibitem[{Ruder(2017)}]{DBLP:journals/corr/Ruder17a}
\bibinfo{author}{Ruder, S.}, \bibinfo{year}{2017}.
\newblock \bibinfo{title}{An overview of multi-task learning in deep neural
  networks}.
\newblock \bibinfo{journal}{CoRR} \bibinfo{volume}{abs/1706.05098}.
\newblock \href{http://arxiv.org/abs/1706.05098}{\tt arXiv:1706.05098}.
%Type = Inproceedings
\bibitem[{Saha et~al.(2019)Saha, Sathish and Sheet}]{MTL:MIDL19:Saha}
\bibinfo{author}{Saha, O.}, \bibinfo{author}{Sathish, R.},
  \bibinfo{author}{Sheet, D.}, \bibinfo{year}{2019}.
\newblock \bibinfo{title}{Learning with multitask adversaries using weakly
  labelled data for semantic segmentation in retinal images}, in:
  \bibinfo{editor}{Cardoso, M.J.}, \bibinfo{editor}{Feragen, A.},
  \bibinfo{editor}{Glocker, B.}, \bibinfo{editor}{Konukoglu, E.},
  \bibinfo{editor}{Oguz, I.}, \bibinfo{editor}{Unal, G.},
  \bibinfo{editor}{Vercauteren, T.} (Eds.), \bibinfo{booktitle}{Proceedings of
  The 2nd International Conference on Medical Imaging with Deep Learning},
  \bibinfo{publisher}{PMLR}, \bibinfo{address}{London, United Kingdom}. pp.
  \bibinfo{pages}{414--426}.
%Type = Article
\bibitem[{Sdika and Pelletier(2009)}]{Sdika:HBM:2009}
\bibinfo{author}{Sdika, M.}, \bibinfo{author}{Pelletier, D.},
  \bibinfo{year}{2009}.
\newblock \bibinfo{title}{Nonrigid registration of multiple sclerosis brain
  images using lesion inpainting for morphometry or lesion mapping}.
\newblock \bibinfo{journal}{Human Brain Mapping} \bibinfo{volume}{30},
  \bibinfo{pages}{1060--1067}.
%Type = Inproceedings
\bibitem[{Shaw et~al.(2019)Shaw, Sudre, Ourselin and Cardoso}]{RICHARD}
\bibinfo{author}{Shaw, R.}, \bibinfo{author}{Sudre, C.},
  \bibinfo{author}{Ourselin, S.}, \bibinfo{author}{Cardoso, M.J.},
  \bibinfo{year}{2019}.
\newblock \bibinfo{title}{Mri k-space motion artefact augmentation: Model
  robustness and task-specific uncertainty}, in: \bibinfo{editor}{Cardoso,
  M.J.}, \bibinfo{editor}{Feragen, A.}, \bibinfo{editor}{Glocker, B.},
  \bibinfo{editor}{Konukoglu, E.}, \bibinfo{editor}{Oguz, I.},
  \bibinfo{editor}{Unal, G.}, \bibinfo{editor}{Vercauteren, T.} (Eds.),
  \bibinfo{booktitle}{Proceedings of The 2nd International Conference on
  Medical Imaging with Deep Learning}, \bibinfo{publisher}{PMLR},
  \bibinfo{address}{London, United Kingdom}. pp. \bibinfo{pages}{427--436}.
%Type = Misc
\bibitem[{Simpson et~al.(2019)Simpson, Antonelli, Bakas, Bilello, Farahani, van
  Ginneken, Kopp{-}Schneider, Landman, Litjens, Menze, Ronneberger, Summers,
  Bilic, Christ, Do, Gollub, Golia{-}Pernicka, Heckers, Jarnagin, McHugo,
  Napel, Vorontsov, Maier{-}Hein and
  Cardoso}]{DBLP:journals/corr/abs-1902-09063}
\bibinfo{author}{Simpson, A.L.}, \bibinfo{author}{Antonelli, M.},
  \bibinfo{author}{Bakas, S.}, \bibinfo{author}{Bilello, M.},
  \bibinfo{author}{Farahani, K.}, \bibinfo{author}{van Ginneken, B.},
  \bibinfo{author}{Kopp{-}Schneider, A.}, \bibinfo{author}{Landman, B.A.},
  \bibinfo{author}{Litjens, G.J.S.}, \bibinfo{author}{Menze, B.H.},
  \bibinfo{author}{Ronneberger, O.}, \bibinfo{author}{Summers, R.M.},
  \bibinfo{author}{Bilic, P.}, \bibinfo{author}{Christ, P.F.},
  \bibinfo{author}{Do, R.K.G.}, \bibinfo{author}{Gollub, M.},
  \bibinfo{author}{Golia{-}Pernicka, J.}, \bibinfo{author}{Heckers, S.},
  \bibinfo{author}{Jarnagin, W.R.}, \bibinfo{author}{McHugo, M.},
  \bibinfo{author}{Napel, S.}, \bibinfo{author}{Vorontsov, E.},
  \bibinfo{author}{Maier{-}Hein, L.}, \bibinfo{author}{Cardoso, M.J.},
  \bibinfo{year}{2019}.
\newblock \bibinfo{title}{A large annotated medical image dataset for the
  development and evaluation of segmentation algorithms}.
\newblock \href{http://arxiv.org/abs/1902.09063}{\tt arXiv:1902.09063}.
%Type = Article
\bibitem[{Sp{\"a}th(1981)}]{Spaeth:ORS:1981}
\bibinfo{author}{Sp{\"a}th, H.}, \bibinfo{year}{1981}.
\newblock \bibinfo{title}{The minisum location problem for the {Jaccard}
  metric}.
\newblock \bibinfo{journal}{Operations-Research-Spektrum} \bibinfo{volume}{3},
  \bibinfo{pages}{91--94}.
%Type = Article
\bibitem[{Sudre et~al.(2017)Sudre, Cardoso and Ourselin}]{SUDRE201750}
\bibinfo{author}{Sudre, C.H.}, \bibinfo{author}{Cardoso, M.J.},
  \bibinfo{author}{Ourselin, S.}, \bibinfo{year}{2017}.
\newblock \bibinfo{title}{Longitudinal segmentation of age-related white matter
  hyperintensities}.
\newblock \bibinfo{journal}{Medical Image Analysis} \bibinfo{volume}{38},
  \bibinfo{pages}{50 -- 64}.
%Type = Inproceedings
\bibitem[{Sun et~al.(2016)Sun, Feng and Saenko}]{Correlation}
\bibinfo{author}{Sun, B.}, \bibinfo{author}{Feng, J.}, \bibinfo{author}{Saenko,
  K.}, \bibinfo{year}{2016}.
\newblock \bibinfo{title}{Return of frustratingly easy domain adaptation}, in:
  \bibinfo{booktitle}{Proceedings of the Thirtieth AAAI Conference on
  Artificial Intelligence}, \bibinfo{publisher}{AAAI Press}. p.
  \bibinfo{pages}{2058–2065}.
%Type = Article
\bibitem[{Sun et~al.(2018)Sun, Wang, Ding, Huang and
  Paisley}]{GAN:LesionRemoval2}
\bibinfo{author}{Sun, L.}, \bibinfo{author}{Wang, J.}, \bibinfo{author}{Ding,
  X.}, \bibinfo{author}{Huang, Y.}, \bibinfo{author}{Paisley, J.W.},
  \bibinfo{year}{2018}.
\newblock \bibinfo{title}{An adversarial learning approach to medical image
  synthesis for lesion removal}.
\newblock \bibinfo{journal}{CoRR} \bibinfo{volume}{abs/1810.10850}.
\newblock \href{http://arxiv.org/abs/1810.10850}{\tt arXiv:1810.10850}.
%Type = Misc
\bibitem[{Ulyanov et~al.(2016)Ulyanov, Vedaldi and
  Lempitsky}]{Ulyanov:arXiv:2016}
\bibinfo{author}{Ulyanov, D.}, \bibinfo{author}{Vedaldi, A.},
  \bibinfo{author}{Lempitsky, V.S.}, \bibinfo{year}{2016}.
\newblock \bibinfo{title}{Instance normalization: The missing ingredient for
  fast stylization}.
\newblock \href{http://arxiv.org/abs/1607.08022}{\tt arXiv:1607.08022}.
%Type = Article
\bibitem[{Valverde et~al.(2014)Valverde, Oliver and Lladó}]{VALVERDE201486}
\bibinfo{author}{Valverde, S.}, \bibinfo{author}{Oliver, A.},
  \bibinfo{author}{Lladó, X.}, \bibinfo{year}{2014}.
\newblock \bibinfo{title}{A white matter lesion-filling approach to improve
  brain tissue volume measurements}.
\newblock \bibinfo{journal}{NeuroImage: Clinical} \bibinfo{volume}{6},
  \bibinfo{pages}{86 -- 92}.
%Type = Article
\bibitem[{Wen et~al.(2010)Wen, Macdonald, Reardon, Cloughesy, Sorensen,
  Galanis, DeGroot, Wick, Gilbert, Lassman, Tsien, Mikkelsen, Wong,
  Chamberlain, Stupp, Lamborn, Vogelbaum, van~den Bent and
  Chang}]{doi:10.1200/JCO.2009.26.3541}
\bibinfo{author}{Wen, P.Y.}, \bibinfo{author}{Macdonald, D.R.},
  \bibinfo{author}{Reardon, D.A.}, \bibinfo{author}{Cloughesy, T.F.},
  \bibinfo{author}{Sorensen, A.G.}, \bibinfo{author}{Galanis, E.},
  \bibinfo{author}{DeGroot, J.}, \bibinfo{author}{Wick, W.},
  \bibinfo{author}{Gilbert, M.R.}, \bibinfo{author}{Lassman, A.B.},
  \bibinfo{author}{Tsien, C.}, \bibinfo{author}{Mikkelsen, T.},
  \bibinfo{author}{Wong, E.T.}, \bibinfo{author}{Chamberlain, M.C.},
  \bibinfo{author}{Stupp, R.}, \bibinfo{author}{Lamborn, K.R.},
  \bibinfo{author}{Vogelbaum, M.A.}, \bibinfo{author}{van~den Bent, M.J.},
  \bibinfo{author}{Chang, S.M.}, \bibinfo{year}{2010}.
\newblock \bibinfo{title}{Updated response assessment criteria for high-grade
  gliomas: Response assessment in neuro-oncology working group}.
\newblock \bibinfo{journal}{Journal of Clinical Oncology} \bibinfo{volume}{28},
  \bibinfo{pages}{1963--1972}.
%Type = Inproceedings
\bibitem[{Xia et~al.(2019)Xia, Chartsias and Tsaftaris}]{GAN:LesionRemoval1}
\bibinfo{author}{Xia, T.}, \bibinfo{author}{Chartsias, A.},
  \bibinfo{author}{Tsaftaris, S.A.}, \bibinfo{year}{2019}.
\newblock \bibinfo{title}{Adversarial pseudo healthy synthesis needs pathology
  factorization}, in: \bibinfo{editor}{Cardoso, M.J.},
  \bibinfo{editor}{Feragen, A.}, \bibinfo{editor}{Glocker, B.},
  \bibinfo{editor}{Konukoglu, E.}, \bibinfo{editor}{Oguz, I.},
  \bibinfo{editor}{Unal, G.}, \bibinfo{editor}{Vercauteren, T.} (Eds.),
  \bibinfo{booktitle}{Proceedings of The 2nd International Conference on
  Medical Imaging with Deep Learning}, \bibinfo{publisher}{PMLR},
  \bibinfo{address}{London, United Kingdom}. pp. \bibinfo{pages}{512--526}.
%Type = Article
\bibitem[{Xu et~al.(2014)Xu, Zhu, Chang, Lai and Tu}]{XU2014591}
\bibinfo{author}{Xu, Y.}, \bibinfo{author}{Zhu, J.Y.}, \bibinfo{author}{Chang,
  E.I.C.}, \bibinfo{author}{Lai, M.}, \bibinfo{author}{Tu, Z.},
  \bibinfo{year}{2014}.
\newblock \bibinfo{title}{Weakly supervised histopathology cancer image
  segmentation and classification}.
\newblock \bibinfo{journal}{Medical Image Analysis} \bibinfo{volume}{18},
  \bibinfo{pages}{591 -- 604}.
%Type = Misc
\bibitem[{Zhang and Yang(2017)}]{DBLP:journals/corr/ZhangY17aa}
\bibinfo{author}{Zhang, Y.}, \bibinfo{author}{Yang, Q.}, \bibinfo{year}{2017}.
\newblock \bibinfo{title}{A survey on multi-task learning}.
\newblock \href{http://arxiv.org/abs/1707.08114}{\tt arXiv:1707.08114}.
%Type = Inproceedings
\bibitem[{Zhao et~al.(2019)Zhao, des Combes, Zhang and
  Gordon}]{DBLP:conf/icml/0002CZG19}
\bibinfo{author}{Zhao, H.}, \bibinfo{author}{des Combes, R.T.},
  \bibinfo{author}{Zhang, K.}, \bibinfo{author}{Gordon, G.J.},
  \bibinfo{year}{2019}.
\newblock \bibinfo{title}{On learning invariant representations for domain
  adaptation}, in: \bibinfo{booktitle}{Proceedings of the 36th International
  Conference on Machine Learning, {ICML} 2019, 9-15 June 2019, Long Beach,
  California, {USA}}, pp. \bibinfo{pages}{7523--7532}.

\end{thebibliography}

%%%%%
\clearpage
\appendix
\phantomsection
\addcontentsline{toc}{section}{Appendices}%
%%%
\section{Probabilistic multi-class Jaccard loss function}
First we recall the definitions of the binary Jaccard distance and the proposed extension to probabilistic inputs.
The binary Jaccard distance $J_{bin}$ is defined such that:
\begin{equation}\label{eq:binary_jaccard_2}
    %\begin{split}
        \forall a,b \in \{0,1\}^{N}, \ J_{bin}(a,b) = 1- \frac{\sum_{i=1}^{N} a_i b_i}{\sum_{i=1}^{N} a_i + b_i-a_i b_i} 
    %\end{split}
\end{equation}%
\begin{definition} 
(Probabilistic multi-class Jaccard distance)\\
Let $C$ be the number of classes in $\mathcal{C}$, $N$ be the number of voxels and $\mathcal{P} \subset [0,1]^{C\times N} $ denote the set of probability vector map such that for any $p=(p_{c,i})_{c\in \mathcal{C}, i\in[0;N]} \in \mathcal{P}$:
    $$\forall i \in [0;N], \ \sum_{c\in C} p_{c,i}=1 $$
    The probabilistic multi-class Jaccard distance is defined for any $(u,v)\in\mathcal{P}^2$ as:
\begin{equation}\label{eq:jaccard_prob_2}
\begin{split}
    %\forall (u,v)\in \mathcal{P}^2, \
    \mathcal{J}(u,v) =\sum_{c \in C} \omega_{c} \underbrace{ \frac{2\sum_{i=1}^{N}|u_{c,i}-v_{c,i}|}{\sum_{i=1}^{N}|u_{c,i}|+|v_{c,i}|+|u_{c,i}-v_{c,i}|}}_{\mathcal{J}_{c}} 
\end{split}
\end{equation}
where $\omega_{c}$ are class-specific weights summing up to one.
\end{definition}

%%%
\subsection{Relation between the probabilistic Jaccard loss and the binary Jaccard distance}\label{appendix:relationship}
The binary case corresponds to a two-class problem, i.e. $\mathcal{C}=\{0,1\}$. Let $a,b\in \{0,1\}^{N}$ be two binary vectors of size $n$ and let $u$ and $v$ denote respectively the categorical encodings of $a$ and $b$, i.e. $a_i=1 \iff \left(u_{1,i}=1 \text{ and } u_{0,i}=0\right)$ and $b_i=1 \iff \left(v_{1,i}=1 \text{ and } v_{0,i}=0\right)$. 
The binary Jaccard distance can be rewritten as:
\begin{equation*}
    \mathcal{J}_{bin}(a,b) = 1- \frac{\sum_{i=1}^{N} a_i b_i}{\sum_{i=1}^{N} a_i + b_i-a_i b_i} = \frac{\sum_{i=1}^{N} a_i + b_i-2a_i b_i}{\sum_{i=1}^{N} a_i + b_i-a_i b_i}
\end{equation*}
Given that for all $i\in \{1,\dots,N\}$, $a_i=0$ or $a_i=1$, we observe that $a_i^2=a_i$. Using the same property for $b$, we get:
\begin{equation*}
    J_{bin}(a,b) = \frac{\sum_{i=1}^{N} (a_i - b_i)^2}{\sum_{i=1}^{N} a_i^2 + b_i^2-a_i b_i} = \frac{2\sum_{i=1}^{N} (a_i - b_i)^2}{\sum_{i=1}^{N} a_i^2 + b_i^2+(a_i - b_i)^2}
\end{equation*}
Finally, given that for all  $i\in \{1,\dots,N\}$,  $(a_i - b_i)^2=0$ or $(a_i - b_i)^2=1$, we have that $(a_i - b_i)^2=|a_i - b_i|$ and conclude that:
\begin{equation*}
    \mathcal{J}_{bin}(a,b) = \frac{2\sum_{i=1}^{N} |a_i - b_i|}{\sum_{i=1}^{N} a_i^2 + b_i^2+|a_i - b_i|} =  \mathcal{J}_{1}(u,v)
\end{equation*}

%%%
\subsection{Proof of Lemma \ref{lemma:probabilisticjaccard}}
\label{appendix:proofJaccarddistance}
The proof that the probabilistic Jaccard is a distance is based on the Steinhaus transform \citep{Spaeth:ORS:1981}. Given a metric space $(E,d)$ with a distance $d$ and given a fixed point $\alpha \in E$, we can define a new distance $d_{new}$ as:
$$ d_{new}(x,y) = \frac{d(x,y)}{d(x,\alpha)+d(y,\alpha)+d(x,y)}$$
Consequently, the probabilistic Jaccard loss distance $\mathcal{J}_{c}$ defined in \eqref{eq:jaccard_prob_2}:
\begin{equation}
  \forall u,v \in [0,1]^{C\times N}, \ \mathcal{J}_{c}(u,v) = \frac{2\norm{u_c-v_c}_{1}}{\norm{u_c}_{1}+\norm{v_c}_{1}+\norm{u_c-v_c}_{1}}
\end{equation}
can be seen as a Steinhaus transform of the metric space $([0,1]^{N}, ||.||_{1})$ with $\alpha=0$ and thus is a distance.
Given that the weighted sum of distances is a distance, we finally conclude that the probabilistic multi-class Jaccard defined as:
\begin{equation}
   \mathcal{J} = \sum_{c\in C} \mathcal{J}_{c}
\end{equation}
is a distance.

%%%

\section{Proof of Proposition 1}\label{appendix:prop}
First, Equations \eqref{eq:1}, \eqref{eq:scenario1_final} and \eqref{eq:upbnd_init} are combined:

\begin{equation}\label{eq:appendix_initial_DA}
\begin{split}
    \mathbb{E}_{\mathcal{D}_{lesion}}\left[\mathcal{L}\left(h_{\theta}(x), y\right)\right]  &\leq \mathcal{R}_{seg}+ \epsilon_{lesion}(\theta)-\epsilon_{control}(\theta) \\
    & \leq \mathcal{R}_{seg}+ |\epsilon_{lesion}(\theta)-\epsilon_{control}(\theta)|
\end{split}
\end{equation}
where $\epsilon_{lesion}(\theta)$ and $\epsilon_{control}(\theta)$ denote the expected tissue loss on the \emph{lesion} and \emph{control} domains, defined as:
\begin{equation}
\begin{split}
\epsilon_{lesion}(\theta) = \mathbb{E}_{\mathcal{D}_{lesion}}[\mathcal{L}^{T}(h_{\theta}(x^{T_1}),y^T))] \\
\epsilon_{control}(\theta) = \mathbb{E}_{\mathcal{D}_{control}}[\mathcal{L}^{T}(h_{\theta}(x^{T_1}),y^T))] 
\end{split}
\end{equation}

Let $\theta^{*}= \argmin_{\theta \in \Theta} \epsilon_{lesion}(\theta) + \epsilon_{control}(\theta)$ be the parameters of the ideal (and unknown) segmenter that minimises the two expected tissue losses. Then let denote $\epsilon_{lesion}(\theta,\theta^{*})$ and $\epsilon_{control}(\theta,\theta^{*})$ the performance gap between the segmenter parametrised by $\theta$ and this ideal segmenter:
\begin{equation}
\begin{split}
\epsilon_{lesion}(\theta,\theta^{*}) = \mathbb{E}_{\mathcal{D}_{lesion}}[\mathcal{L}^{T}(h_{\theta}(x^{T_1}),h_{\theta^{*}}(x^{T_1}))] \\
\epsilon_{control}(\theta, \theta^{*}) = \mathbb{E}_{\mathcal{D}_{control}}[\mathcal{L}^{T}(h_{\theta}(x^{T_1}),h_{\theta^{*}}(x^{T_1}))]
\end{split}
\end{equation}

Using the fact that the loss function satisfies the triangle inequality:
\begin{equation}
\begin{split}
\epsilon_{lesion}(\theta) \leq \epsilon_{lesion}(\theta,\theta^{*}) + \epsilon_{lesion}(\theta^{*}) \\
\epsilon_{control}(\theta) \leq \epsilon_{control}(\theta,\theta^{*}) + \epsilon_{control}(\theta^{*})
\end{split}
\end{equation}

Then, the performance gap between the two domains $|\epsilon_{lesion}(\theta)-\epsilon_{control}(\theta)|$ can be bounded as follows:
\begin{equation}\label{eq:appendix_bendavid}
    |\epsilon_{lesion}(\theta)-\epsilon_{control}(\theta)| \leq |\epsilon_{lesion}(\theta,\theta^{*}) - \epsilon_{control}(\theta,\theta^{*})| + \epsilon(\Theta)
\end{equation}
where $\epsilon(\Theta)$  is the tissue expected loss of the ideal segmenter:

$$\epsilon(\Theta)=\epsilon_{lesion}(\theta^*) + \epsilon_{control}(\theta^*)$$

\eqref{eq:appendix_bendavid} can be found in \cite{Ben-David2010} in which the loss function is assumed to be the $L_1$ distance.

Given that $\epsilon(\Theta)$ is a constant w.r.t the network parameters $\theta$, the goal of domain adaptation is to reduce the distribution discrepancy $d_{DA}(\theta)$ defined as:
\begin{equation}\label{eq:d_DA_appendix}
    d_{DA}(\theta) = |\epsilon_{lesion}(\theta,\theta^{*}) - \epsilon_{control}(\theta,\theta^{*})|
\end{equation}

Similarly to \cite{NIPS2018_7436}, we demonstrate that this discrepancy $d_{DA}(\theta)$ can be estimated using the discriminator accuracy. 

Let denote $\mathcal{D}_{l}^{\theta}=(x^{T_1}, f_{\theta}(x^{T_1}))_{x^{T_1} \sim \mathcal{D}_{lesion}}$ and  $\mathcal{D}_{c}^{\theta}=(x^{T_1}, f_{\theta}(x^{T_1}))_{x^{T_1} \sim \mathcal{D}_{control}}$  the proxies of the distributions $\mathcal{D}_{lesion}$ and $\mathcal{D}_{control}$. Then, the two performance gaps with the ideal segmenter can be re-written as:
\begin{equation}\label{eq:proxies}
    \begin{split}
        \epsilon_{lesion}(\theta, \theta^{*}) = \mathbb{E}_{(x,f)\sim\mathcal{D}^{\theta}_{l}}[\mathcal{L}^{T}(h_{\theta^{*}}(x^{T_1}),f)] \\
        \epsilon_{lesion}(\theta, \theta^{*}) = \mathbb{E}_{(x,f)\sim\mathcal{D}^{\theta}_{l}}[\mathcal{L}^{T}(h_{\theta^{*}}(x^{T_1}),f)]
    \end{split}
\end{equation}
Let also define a difference hypothesis space $\Delta$:
$$\Delta \triangleq \{ \delta_{\theta^{'}} \colon (x^{T_1},f) \mapsto \mathcal{L}^{T}(h_{\theta^{'}}(x^{T_1}),f)), \ \theta^{'}\in\Theta\}$$
Moreover, we define the $\Delta$-distance between the two distributions $\mathcal{D}_{l}^{\theta}$ and $\mathcal{D}_{c}^{\theta}$ as:
\begin{equation}\label{eq:delta_distance}
    d_\Delta(\mathcal{D}_{l}^{\theta},\mathcal{D}_{c}^{\theta}) \triangleq \sup_{\delta_{\theta^{'}} \in \Delta} \left|\mathbb{E}_{\mathcal{D}_{l}^{\theta}}[\delta_{\theta^{'}} (x^{T_1},f)]-\mathbb{E}_{\mathcal{D}_{c}^{\theta}}[\delta_{\theta^{'}} (x^{T_1},f)]\right|
\end{equation}

Finally, by combining \eqref{eq:d_DA_appendix}, \eqref{eq:proxies} and \eqref{eq:delta_distance}, we obtain the following upper bound for the distribution discrepancy $d_{DA}(\theta)$:
\begin{equation}\label{eq:appendix_delta}
    \begin{split}
        d_\Delta(\mathcal{D}_{l}^{\theta},\mathcal{D}_{c}^{\theta}) &= \sup_{\delta_{\theta^{'}} \in \Delta} \left|\mathbb{E}_{\mathcal{D}_{l}^{\theta}}[\delta_{\theta^{'}} (x^{T_1},f)]-\mathbb{E}_{\mathcal{D}_{c}^{\theta}}[\delta_{\theta^{'}} (x^{T_1},f)]\right| \\
        &= \sup_{\theta^{'}\in\Theta} \left|\mathbb{E}_{\mathcal{D}^{\theta}_{l}}[\mathcal{L}^{T}(h_{\theta^{'}}(x^{T_1}),f)] - \mathbb{E}_{\mathcal{D}^{\theta}_{c}}[\mathcal{L}^{T}(h_{\theta^{'}}(x^{T_1}),f)]\right| \\
        &\geq  \underbrace{|\mathbb{E}_{\mathcal{D}^{\theta}_{l}}[\mathcal{L}^{T}(h_{\theta^{*}}(x^{T_1}),f)] - \mathbb{E}_{\mathcal{D}^{\theta}_{c}}[\mathcal{L}^{T}(h_{\theta^{*}}(x^{T_1}),f)]|}_{ =d_{DA}(\theta)}
    \end{split}
\end{equation}

Finally, let $K>0$ be the upper bound of the loss function $\mathcal{L}^{T}$ and $\mathcal{H}^{K}_{\Phi}$ denote the family of the discriminators multiplied by $K$:
$$\mathcal{H}^{K}_{\Phi} \triangleq \{K D_{\phi}, \ \phi\in\Phi\}$$
$\mathcal{H}^{K}_{\Phi}$ and the difference hypothesis space $\Delta$ are two continuous and $K$-bounded function classes.
Similarly to \cite{NIPS2018_7436}, let's assume that the family of the discriminator $\mathcal{H}^{K}_{\Phi}$ is rich enough to contain the difference hypothesis space $\Delta$. Given that a multilayer perceptrons that can fit any functions, this assumption is not unrealistic. Then, we show that the discriminator accuracy of the best discriminator is an upper bound of the $\Delta$-distance:
\begin{equation}\label{eq:appendix_dicriminator}
    \begin{split}
        d_\Delta(\mathcal{D}_{l}^{\theta},\mathcal{D}_{c}^{\theta}) &\leq K\sup_{\phi} \left|\mathbb{E}_{\mathcal{D}_{l}^{\theta}}[ D(x^{T_1},f)]-\mathbb{E}_{\mathcal{D}_{c}^{\theta}}[ D(x^{T_1},f)]\right| \\
        &\underbrace{\leq K\sup_{{\phi}} \left|\mathbb{E}_{\mathcal{D}_{l}^{\theta}}[ D(x^{T_1},f)]+\mathbb{E}_{\mathcal{D}_{c}^{\theta}}[ 1-D(x^{T_1},f)]\right|}_{=K\sup_{{\phi}} \mathcal{R}_{DA}(\phi,\theta)} 
    \end{split}
\end{equation}

Finally, by combining \eqref{eq:appendix_initial_DA}, \eqref{eq:appendix_bendavid}, \eqref{eq:appendix_delta} \eqref{eq:appendix_dicriminator}, we obtain the following tractable upper bound for the expect segmentation loss:
\begin{equation}\label{eq:appendix_final_DA}
 \mathbb{E}_{\mathcal{D}_{lesion}}\left[\mathcal{L}\left(h_{\theta}(x), y\right)\right]  \leq \mathcal{R}_{seg}+ K\sup_{\phi}\mathcal{R}_{DA}(\phi,\theta)  + \epsilon(\Theta) 
\end{equation}

\section{Visualisation symmetrised brain scans}
Figure \ref{fig:symetrized_brains} shows some examples of pseudo-healthy scans, with their tissue annotations, synthesised as described in \ref{sec:pseudo_healthy}.
\begin{figure*}[tbhp!]
 % Caption and label go in the first argument and the figure contents
 % go in the second argument
  \centering
  \includegraphics[width=\linewidth]{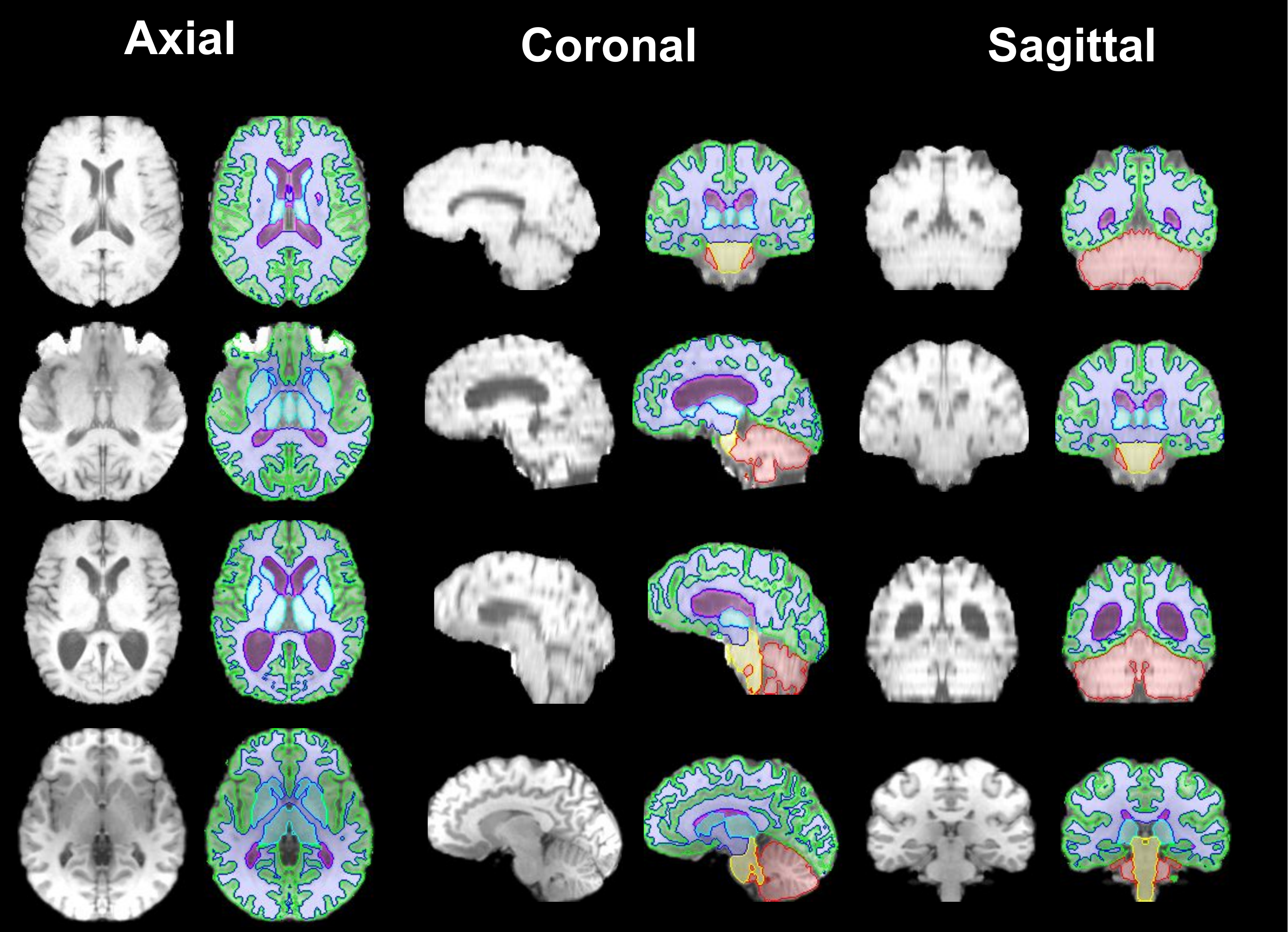}
  \caption{In order to synthesise a pseudo-healthy set of scans, we symmetrized the "healthy" hemisphere of brains from BraTS. GIF framework is then used to generate tissue ground truth. \Tone scans are shown with the tissue segmentation}
  \label{fig:symetrized_brains}
\end{figure*}

\end{document}